\title{Reachability in Dynamical Systems with Rounding}
\author{Christel Baier}{
Technische Universität Dresden, Germany	
}{%
}{https://orcid.org/0000-0002-5321-9343%
}{%
}
\author{Florian Funke}{
Technische Universität Dresden, Germany	
}{%
}{https://orcid.org/0000-0001-7301-1550%
}{%
}
\author{Simon Jantsch}{
Technische Universität Dresden, Germany	
}{%
}{https://orcid.org/0000-0003-1692-2408%
}{%
}
\author{Toghrul Karimov}{
Max Planck Institute for Software Systems, Saarland Informatics Campus, Saarbr\"ucken, Germany	
}{%
}{%
}{%
}
\author{Engel Lefaucheux}{
Max Planck Institute for Software Systems, Saarland Informatics Campus, Saarbr\"ucken, Germany	
}{%
}{%
}{%
}
\author{Jo\"el Ouaknine}{
Max Planck Institute for Software Systems, Saarland Informatics Campus, Saarbr\"ucken, Germany	\and Department of Computer Science, Oxford University, UK
}{%
}{https://orcid.org/0000-0003-0031-9356%
}{%
Supported by ERC grant AVS-ISS (648701).
}
\author{Amaury Pouly}{
Université de Paris, CNRS, IRIF, F-75006, Paris, France
}{%
}{https://orcid.org/0000-0002-2549-951X%
}{Supported by CODYS project ANR-18-CE40-0007.
}
\author{David Purser}{
Max Planck Institute for Software Systems, Saarland Informatics Campus, Saarbr\"ucken, Germany	
}{%
}{https://orcid.org/0000-0003-0394-1634%
}{%
}
\author{Markus A.\,Whiteland}{
Max Planck Institute for Software Systems, Saarland Informatics Campus, Saarbr\"ucken, Germany	
}{%
}{https://orcid.org/0000-0002-6006-9902%
}{%
}
\authorrunning{C.\,Baier et al.} %
\keywords{dynamical systems, rounding, reachability} %
\newtheorem*{problem}{Problem}
\newtheorem{openproblem}[theorem]{Open Problem}
\newcommand{\naturals}{\mathbb{N}}
\newcommand{\ints}{\mathbb{Z}}
\newcommand{\reals}{\mathbb{R}}
\newcommand{\rationals}{\mathbb{Q}}
\newcommand{\algs}{\mathbb{A}}
\newcommand{\round}[1]{[#1]}
\newcommand{\floor}[1]{\ensuremath{\left\lfloor #1 \right\rfloor}}
\newcommand{\ceil}[1]{\ensuremath{\left\lceil #1 \right\rceil}}
 \newcommand{\itx}[3]{\texttt{if } (#1) \texttt{ then } #2 \texttt{ else } #3}
 \newcommand{\itxss}[3]{\begin{aligned}[t]
    \texttt{if } (#1) &\texttt{ then } #2 \\ & \texttt{ else } #3
  \end{aligned}}
\newcommand{\abs}[1]{\left|#1\right|}
\newcommand{\sqbr}[1]{\ensuremath{[#1]}}
\newcommand{\abr}[1]{\left\langle#1\right\rangle}
\newcommand{\vecit}[2]{#1^{(#2)}}
 \newcommand{\vecitcomp}[3]{(\vecit{#1}{#2})_{#3}}
 \newcommand{\avecitcomp}[3]{\abs{\vecitcomp{#1}{#2}{#3}}}
 \newcommand{\mx}[1]{\max\left\{#1\right\}}
\newcommand{\dimension}{d}
\newcommand{\ck}{k}
\newcommand{\target}{y}
\newcommand{\numangles}{R}
\newcommand\colortext[2]{{\color{#1}#2}}
\newcounter{SideNoteCounter} \stepcounter{SideNoteCounter}
        \definecolor{ApyOrange}{RGB}{230,159,  0}
        \definecolor{ApySkyBlue}{RGB}{ 86,180,233}
        \definecolor{ApyBluishGreen}{RGB}{  0,158,115}
        \definecolor{ApyYellow}{RGB}{240,228, 66}
        \definecolor{ApyBlue}{RGB}{  0,114,178}
        \definecolor{ApyVermillion}{RGB}{213, 94,  0}
        \definecolor{ApyReddishPurple}{RGB}{204,50,110}
        \definecolor{MRed}{RGB}{139,30,63}
\theoremstyle{remark}
\newtheorem{case}{Case}
\begin{document}

\maketitle

\begin{abstract}
We consider reachability in dynamical systems with discrete linear updates, but with fixed digital precision, i.e., such that values of the system are rounded at each step.
Given a matrix $M \in \mathbb{Q}^{\dimension \times \dimension}$, an initial vector $x\in\mathbb{Q}^{d}$, a granularity $g\in \rationals_+$ and a rounding operation $[\cdot]$ projecting a vector of $\mathbb{Q}^{d}$ onto another vector whose every entry is a multiple of $g$,
we are interested in the behaviour of the orbit $\mathcal{O}=\abr{[x], [M[x]],[M[M[x]]],\dots}$, i.e., the trajectory of a linear dynamical system 
in which the state is rounded after each step.
For arbitrary rounding functions with bounded effect, we show that the complexity of deciding point-to-point reachability---whether a given target $y \in\mathbb{Q}^{d}$ belongs to $\mathcal{O}$---is $\PSPACE$-complete for hyperbolic systems (when no eigenvalue of $M$ has modulus one). We also establish decidability without any restrictions on eigenvalues for several natural classes of rounding functions.
\end{abstract}

\newpage
\section{Introduction}

A \emph{discrete-time linear dynamical system} in ambient space $\mathbb{Q}^d$ is specified via a linear transformation together with a starting point. The state of the system is then updated at each step by applying the linear transformation, giving rise to an \emph{orbit} (or infinite trajectory) in $\mathbb{Q}^d$. 

One of the most well-known questions for such systems is the \emph{Skolem Problem}, which asks whether the orbit ever hits a given $(d-1)$-dimensional hyperplane.\footnote{The Skolem Problem is usually formulated in terms of linear recurrence sequences, but is equivalent to the description given here.} This problem has long eluded decidability, although instances of dimension $d \leq 4$ are known to be solvable (see, e.g., the survey~\cite{OW15}). Another natural problem is \emph{point-to-point reachability}\footnote{Historically this problem has been known as the \textit{orbit problem}, however there are now multiple `orbit problems' (polytope  reachability, hyperplane reachability, (semi-)algebraic set reachability,... etc.) and so we specify point-to-point reachability.}, known to be decidable in polynomial time~\cite{KannanL86}. In both cases, however, one assumes \emph{arbitrary} precision, which arguably is unrealistic for simulations carried out on digital computers. In this paper, we therefore turn our attention to instances of these problems in which the numerical state of the system is rounded to finite precision at each time step. This leads us to the following definition:

\begin{problem}[Rounded Point-to-Point Reachability (Rounded P2P)]\label{prob:kl}
Given a matrix $M \in \mathbb{Q}^{\dimension \times \dimension}$, an initial vector $x\in\mathbb{Q}^{d}$, a target vector $\target\in\mathbb{Q}^{d}$, a granularity $g\in \rationals_+$, and a rounding operation $[\cdot]$ projecting a vector of $\mathbb{Q}^{d}$ onto another vector whose every entry is a multiple of $g$, let the orbit $\mathcal{O}$ of this system
be the infinite sequence $\abr{[x], [M[x]],[M[M[x]]],\dots}$, i.e., $\vecit{x}{0} = [x]$ and $\vecit{x}{i+1} = [M\vecit{x}{i}]$.  The \textbf{Rounded Point-to-Point Reachability (Rounded P2P) Problem} asks whether $[\target] \in \mathcal{O}$. \end{problem}

\subparagraph*{Main contributions.}
We make the following contributions, summarised in \cref{fig:results}:

\begin{enumerate}
	\item We introduce a family of natural problems, Rounded P2P (parameterised by the rounding function),
which to the best of our knowledge has not previously been studied.

	\item We show that for hyperbolic systems (i.e., those whose associated linear transformation has no eigenvalue of modulus 1) the Rounded P2P Problem is solvable---and is in fact $\PSPACE$-complete---for any `reasonable' (i.e., bounded-effect) rounding function. It is interesting to note, in contrast, that exact P2P reachability is known to be solvable in polynomial time. Our approach to solving the Rounded P2P Problem relies on the observation that, outside a ball of exponential size, the change in magnitude of the system state at each step dwarfs any effect due to rounding. It thus suffices to exhaustively examine the effect of the dynamics inside an exponentially bounded state space.

	\item In the general case (without any restriction on the magnitude of eigenvalues), the effect of rounding may forever remain non-negligible, requiring a careful analysis. We have not been able to solve the problem in full generality, but we do provide a complete solution for certain natural classes of rounding functions. More precisely, assume that the linear transformation has been converted to Jordan normal form (now requiring us to work with complex algebraic numbers). We can then solve the Rounded P2P Problem under two natural classes of rounding functions:\begin{enumerate}[(a)]
	\item \emph{Polar rounding functions:} given a complex number of the form $Ae^{i\theta}$, such functions round $A$ and $\theta$ independently. In such instances we can handle in $\EXPSPACE$ all reasonable rounding functions on $A$, and what we view as the only natural rounding function on $\theta$. 

	\item \emph{Argand rounding:} given a complex number of the form $a+bi$, the \emph{Argand truncation} will round $a$ and $b$ independently downwards (in magnitude), ensuring that the modulus 
never increases. Similarly, the \emph{Argand expansion} (which rounds $a$ and $b$ independently upwards) guarantees that the modulus can only increase. Under such rounding functions, we show decidability in $\EXPSPACE$.
	\end{enumerate}

	\item We highlight some limitations of our methods, identifying a simple but technically challenging open problem, which points to some of the key difficulties in solving the Rounded P2P Problem in full generality. More precisely, we consider minimal error rounding for a simple rotation in two-dimensional space, for which Rounded P2P is presently open.

\end{enumerate}

\begin{remark*}
  It is worth noting that the rounded versions of the Skolem Problem (does the rounded orbit ever hit a $(d-1)$-dimensional hyperplane?)\ and the Positivity Problem (does the rounded orbit ever hit a $d$-dimensional half-space?)\ remain at least as hard as their exact integer counterparts, since over the integers rounding has no effect; the decidability of these problems therefore remains open. However, the rounded versions of reaching a bounded polytope or a bounded semialgebraic set (problems not known to be decidable in the exact setting~\cite{COW15,AO019}) reduce to a finite number of Rounded P2P reachability queries (since a bounded set can contain only finitely many rounded points). These observations together motivate our focus, in the present paper, on the Rounded P2P Problem.
\end{remark*}

\begin{figure}
\centering
\begin{tabular}{|p{4.45cm}|p{2.1cm}|p{3.7cm}|c|}
\hline
\multirow{2}{*}{Rounding type}  &\multirow{2}{2.1cm}{Hyperbolic Systems}  & \multicolumn{2}{c|}{No restrictions on eigenvalues } \\ \cline{3-4}
&&  \parbox[t]{3cm}{Jordan normal form\\ (Note: no hardness)} & General \\\hline
Polar $Ae^{i\theta}$ & \multirow{4}{2.1cm}{\\[0.3cm]\centering{$\PSPACE$-complete}, \cref{sec:non-mod-1}} & $\EXPSPACE$, \cref{sec:polar} & \multirow{3}{1.8cm}{\\[0cm]\centering Open but $\PSPACE$-hard} \\\cline{1-1}\cline{3-3}
\raggedright Argand truncation or expansion &  & $\EXPSPACE$, \cref{sec:truncated} &  \\\cline{1-1}\cline{3-3}
Argand minimal error &  & \raggedright Open (difficulties highlighted in \cref{sec:problemswith2x2}) &  \\\cline{1-1}\cline{3-3}
Arbitrary bounded-effect &  & \multicolumn{1}{l}{Open (\cref{openproblem:klabr})} & \diagbox[width=2.2cm]{$\mathstrut$}{$\mathstrut$} \\\hline
\end{tabular}
\caption{Decidability and complexity table for the Rounded P2P Problem.}
\label{fig:results}
\end{figure} 

It is interesting to consider rounded reachability problems in the stochastic setting, i.e., Markov chains. One observes that the state space $\mathcal{[S]} = \{[x] \in [0,1]^d \ | \ x \text{ sub-stochastic}\}$ is finite, which entails decidability of virtually any reachability problem, including Skolem and Positivity. This is somewhat arresting, since without rounding reachability problems are known to be exactly as hard for stochastic systems as for general systems~\cite{AkshayAOW2015}. In any event, one should note that ensuring that for all $x \in \mathcal{[S]}$, $[Mx] \in \mathcal{[S]}$ requires some care, as arbitrary rounding does not necessarily preserve (sub-)stochasticity.

\subsection*{Related work}

With the emerging use of numerical computations during the 80s, doubts were raised concerning the transferability of results about dynamical systems obtained by simulation in finite-state machines. In this direction, the sensitivity that a rounding function may have on the long-term behaviour of a dynamical system is studied in \cite{BeckR1987}. How rounded orbits can be simulated by actual orbits of the dynamical system is investigated in \cite{HammelYG1988, NusseN1988}. 

The series of papers \cite{Blank1984, Blank1989b, Blank1989, Blank1994} examines which statistical properties of a discrete dynamical system are preserved under the introduction of a rounding function, a good summary of which can be found in Blank's book \cite[Chapter 5]{Blank1997}. As the rounding is refined, some properties of the discretized orbits follow probabilistic laws asymptotically, as shown in \cite{DiamondV1998, DiamandV2002}.
The paper~\cite{DiasLC2011} studies how volatile statistical notions are in the presence of finite precision (such as the mean distance of two orbits of discrete dynamical systems). 

Another line of research focuses on discretized rotations in $\ints^2$ and higher-dimensional lattices \cite{LowensteinHV1997, AkiyamaBBPT2005}. A connection from roundoff problems in the $2$-dimensional case to expanding maps on the $p$-adic integers is described in \cite{BosioV1999, VivaldiV2003}. Building on this, \cite{Vivaldi2006} conjectures periodicity of all orbits of these discretized rotations in $\ints^2$. It is shown in \cite{AkiyamaP2013} that there are infinitely many periodic orbits, and \cite{PethoTW2019} attempts to concisely describe points leading to periodic orbits.

In the context of model checking, continuous dynamical systems have been translated into discrete models, mainly timed automata that approximate the behaviour of the original system \cite{MB2008, CarterN2012, SchivoL2017}. On a more general level, one can observe a growing interest in the systematic study of roundoff errors inherent in finite precision computations \cite{GoubaultP2011, SolovyevJRG2015, IzychevaD2017, MagronCD2017, MoscatoTDM2017, DarulovaINRBB2018}.

\section{Rounding functions}

Let $\naturals{},\ints,\rationals{},\reals{},\algs{}$ be the naturals, integers, rationals, reals, and algebraic numbers respectively.

\paragraph*{Rounding real numbers}
Let $g \in \reals_+$ be a granularity. We define our rounding functions taking values to integers, i.e., $g = 1$. For $g\ne 1$ we consider $\round{x} = g\cdot \round{x/g}$.  Given a set $S$, we let $[S] = \{[x] \mid x\in S\}.$

The floor  function $\floor{x}$ and ceiling functions $\ceil{x}$ are well-known rounding functions in mathematics and computer science. We recall two further rounding functions:
\begin{itemize}
\item \emph{Minimal error rounding} rounds to the \textit{nearest} value: $[x] = \arg\min_{y\in\mathbb{Z}} \abs{x-y}$. If $\abs{x-y} =0.5$ an arbitrary but deterministic choice must be made (e.g. to round up).
\item \emph{Truncation} (`towards zero rounding', to cut off the remaining bits): if $x > 0$ then $\floor{x}$ else $\ceil{x}$, or \emph{expansion}: if $x > 0$ then $\ceil{x}$ else $\floor{x}$. 
\end{itemize}
Whenever possible, we prefer to analyse the problems without choosing a specific rounding function, relying only upon the property of \emph{bounded effect}:
\begin{definition}\label{defn:bounded:real}
A real rounding function $[\cdot]\colon \reals\to\reals$ has bounded effect if there exists $\Delta$ such that $\abs{x - [x]} \le \Delta$ for all $x$.
\end{definition}

\paragraph*{Rounding complex numbers}
Complex numbers have both a real and imaginary part. Thus one can consider rounding each of the components separately, which we call \textit{Argand rounding}. Consider $x = a+bi$ with $a,b\in\reals{}$, then let $[x] = [a] + [b]i$, where $[\cdot]$ can be any real rounding function (leading to \emph{Argand truncation}, \emph{Argand expansion} and \emph{Argand minimal error rounding} functions).

However, complex numbers can also be readily represented using polar coordinates as follows:
a number is represented as $x = Ae^{i\theta}$, where $A$ is the modulus and $\theta$ is the angle between the 2-d coordinates $(1,0)$ and $(a,b)$ (when represented as $a+bi$).
Then, a \emph{polar rounding} function rounds $A$ and $\theta $ independently, i.e. $[x] = [A] e^{i[\theta]}$. The rounding of $[A]$ can be any real rounding function. For the rounding of the angle we always assume minimal error rounding. That is,  given granularity $\theta_g = \frac{\pi}{\numangles}$ for some $\numangles \in \naturals{}$, then $[\theta]$ is a multiple of $\theta_g$ with minimal error and arbitrary but deterministic tie breaking.

We generalise non-specific bounded-effect rounding to the complex numbers.
\begin{definition}\label{defn:bounded:complex}
A complex rounding function $[\cdot]\colon \mathbb{C}\to\mathbb{C}$ has bounded effect \textbf{on the modulus} if there exists $\Delta$ such that $\abs{\abs{x} - \abs{[x]}} \le \Delta$ for all $x$.
\end{definition}

Argand and polar roundings are both defined by applying  bounded-effect real rounding functions to each component, and have bounded effect under \cref{defn:bounded:complex}.  However, note the distinction with \cref{defn:bounded:real}; polar rounding can exhibit arbitrary large effects (in the following sense: given any $\Delta > 0$, one can always find $x \in \mathbb{C}$ such that 
$\abs{x - [x] } > \Delta$), but nevertheless has only bounded effect \textit{on the modulus}.

\begin{definition}[\sqbr{K}-Ball]
  \label{def:kball}
Given a complex rounding function $[\cdot]$ and an integer $K$ let a $[K]$-ball be the set of admissible points of modulus at most $K$, i.e., $\{[x] \mid x\in \mathbb{C}, \ \abs{[x]} \le K\}$.\end{definition} 

\paragraph*{Rounding vectors}In general, a rounding function on $\mathbb{K}$ induces a rounding function on vectors $\mathbb{K}^d$, where $[(x_1,\dots,x_d)]= ([x_1],\dots,[x_d])$, although not all rounding functions on vectors need take this form. We generalise non-specific bounded-effect rounding to vectors.

\begin{definition}\label{defn:bounded:vector}
A rounding function $[\cdot]\colon \mathbb{K}^d\to\mathbb{K}^d$ has bounded effect on the modulus if there exists $\Delta$ such that $\abs{\abs{x}_k - \abs{[x]}_k} \le \Delta$ for all $x$ and every $k\in \{1,2,\dots, d\}$.
\end{definition}

Finally, we assume that all of our rounding functions can be computed in polynomial time and are fixed (rather than inputs) in our problems, and thus $\Delta$ is also a fixed parameter.

\section{Hyperbolic systems}\label{sec:non-mod-1}

In this section we establish our first main result for hyperbolic systems, which we first define:

\begin{definition}[Hyperbolic System~{\cite[Section~1.2]{katok_hasselblatt_1995}}] A linear map represented by the matrix $M \in \mathbb{R}^{\dimension \times \dimension}$ is hyperbolic if all of its eigenvalues have modulus different from one.
\end{definition}

\begin{theorem}\label{thm:all-not-1}
The Rounded P2P Problem is $\PSPACE$-complete for hyperbolic linear maps represented by rational matrices and real rounding functions with bounded effect.
\end{theorem}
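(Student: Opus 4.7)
The plan is to prove the upper bound (membership in $\PSPACE$) by exploiting the geometric structure of hyperbolic dynamics, and to prove the lower bound ($\PSPACE$-hardness) by reducing from the acceptance problem of a polynomial-space Turing machine.

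For the upper bound, I would decompose $\reals^d$ as $E^s \oplus E^u$, where $E^s$ is the stable subspace (spanned by generalised eigenvectors for eigenvalues of modulus $<1$) and $E^u$ is the unstable subspace (eigenvalues of modulus $>1$); hyperbolicity ensures that these subspaces together cover $\reals^d$. In a Lyapunov norm adapted to $M$, one obtains effective constants $C>1>c>0$ with $\|Mv\|_u \ge C\|v\|_u$ on $E^u$ and $\|Mv\|_s \le c\|v\|_s$ on $E^s$. Although the rounding is defined in the standard basis, it still has bounded effect after passing to the decomposed basis, up to a factor determined by the change-of-basis matrix. Writing $\Delta$ for the resulting bound, there is a threshold $R_0$, of exponential bit-size in the input, such that: (i) whenever $\|v\|_u > R_0$, we have $\|Mv\|_u - \Delta > \|v\|_u$, so the unstable component strictly grows after rounding; (ii) the stable component is eventually trapped in a ball of radius $O(\Delta/(1-c))$; and (iii) once $\|v\|_u$ exceeds $R_0$, the whole state $v$ lies too far from $[\target]$ to equal it.

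Consequently, either the orbit stays forever inside an exponentially bounded box $B$ of admissible rounded points, or it eventually escapes $B$ through the unstable direction and never again comes close to $[\target]$. Both possibilities can be decided by a direct, deterministic simulation of the rounded orbit: each rounded point of $B$ has polynomial bit-length (its coordinates are multiples of $g$ bounded by $R_0$), each application of $M$ followed by rounding is computable in polynomial time, and since $|B|$ is at most $2^{\operatorname{poly}}$ a polynomially-sized binary counter suffices to simulate for $|B|+1$ steps; by pigeonhole, if no target hit occurs in that range then none will ever occur. The main technical obstacle I expect is the effective construction of the Lyapunov norm together with explicit quantitative bounds on $R_0$ and on the number of steps until the stable component settles; these require careful use of the bit-sizes of the (possibly complex algebraic) eigenvalues of $M$ and of the change-of-basis matrix, and must be shown to remain at most singly-exponential in the input.

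For the lower bound, I would reduce from acceptance of a polynomial-space Turing machine. The idea is to encode a configuration as a rational vector whose entries store the tape contents digit by digit, and realise a single transition step by a hyperbolic linear map combined with rounding, using the rounding as a bit-extraction primitive: a carefully chosen $M$ shifts digits into a prescribed position and rounding then discards all remaining fractional information. Because exact point-to-point reachability is in $\P$~\cite{KannanL86}, it is precisely this bit-extraction capability of the rounding that opens the complexity gap up to $\PSPACE$.
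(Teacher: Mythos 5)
Your upper bound is essentially the paper's argument in different clothing. The paper also reduces to Jordan normal form via a change of basis $P$, shows the transported rounding $\llparenthesis z \rrparenthesis := z + P^{-1}e(Pz)$ still has bounded effect on the modulus, and then for each Jordan block and each coordinate $k$ computes an explicit radius $C_k$ (of at most exponential magnitude) so that an orbit component that exits the $[C_k]$-ball never returns (when $|\lambda|>1$) or never exits at all (when $|\lambda|<1$); it then simulates with a polynomially-sized step counter, invoking pigeonhole. Your $E^s \oplus E^u$ decomposition with an adapted Lyapunov norm is a slightly different bookkeeping device for the same dichotomy between contraction and expansion, and both versions face the same quantitative chore of bounding the change-of-basis distortion and the escape radius; I would count them as the same proof.

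The lower bound is where there is a genuine gap. You propose to reduce from acceptance of a polynomial-space Turing machine by packing the tape into a rational vector and using rounding as a bit-extraction primitive after a digit shift. The obstacle is that the matrix $M$ is applied \emph{uniformly} at every step: it cannot inspect the current control state or the head position in order to decide which cell to modify or how far to shift. Turing-machine transitions are intrinsically data-dependent in exactly this way, so ``a carefully chosen $M$ shifts digits into a prescribed position'' does not, as stated, simulate a step. You would need to make the computation oblivious --- which is in effect what the paper does. Its reduction is from QBF: each boolean variable occupies its own coordinate, the gates AND/OR/NOT are realised as floors of fixed affine combinations (e.g.\ $x_j \wedge x_k = \floor{(1+x_j+x_k)/3}$), and QBF evaluation is carried out by a \emph{fixed} cyclic schedule of $3+n$ such linear maps that enumerates all $2^n$ assignments and aggregates; the $m$ maps are then folded into a single block-cyclic matrix. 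This sidesteps head-position dependence entirely. You also do not address hyperbolicity of the resulting matrix: the natural boolean encoding (copying a variable, carrying the constant $1$) has eigenvalues of modulus one, and the paper must add a separate perturbation argument (multiply the maps by $1.1$ and check that the floor semantics on $\{0,1\}$ are unchanged) to push the spectrum off the unit circle. Without such a step your reduction, even if it could be completed, would not establish hardness for the hyperbolic fragment.
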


We first demonstrate that the problem is in $\PSPACE$ for matrices in Jordan normal form, to which we will reduce the general case in a second step. As the passage to Jordan normal form inevitably introduces complex numbers, $\PSPACE$ membership will be shown for Jordan normal form matrices over the algebraic numbers and, accordingly, complex rounding functions with bounded effect on the modulus. To complete the picture we show hardness for hyperbolic systems (in fact, the hardness result applies even for non-hyperbolic systems, that is for matrices whose eigenvalues may include 1). 

\subsection{Membership in \texorpdfstring{$\PSPACE$}{PSPACE}}
We now prove the membership part of \Cref{thm:all-not-1} under the additional assumption that the matrices are in Jordan normal form. %

\begin{lemma}\label{thm:jnf-bounded-rounding}
The Rounded P2P Problem decidable in $\PSPACE$ for any complex rounding function with bounded effect on the modulus $\Delta$ and hyperbolic matrices $M \in \algs^{\dimension \times \dimension}$ in Jordan normal form.
\end{lemma}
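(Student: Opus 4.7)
The plan is to exhibit a singly-exponential radius $K$ such that either the orbit escapes the $[K]$-ball and never returns, or it remains trapped inside, and then decide reachability inside the ball by simulating the dynamics in polynomial space.

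Since $M$ is in Jordan normal form and hyperbolic, the coordinates of $\algs^{\dimension}$ partition into Jordan blocks, each with eigenvalue $\lambda$ satisfying either $|\lambda|<1$ (\emph{stable}) or $|\lambda|>1$ (\emph{unstable}). Let $E_s$ and $E_u$ denote the respective sums of coordinate subspaces. Because $M$ acts block-diagonally with respect to this splitting and the rounding has bounded effect on the modulus coordinate by coordinate, we may analyse the two restrictions independently.

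For an unstable Jordan block $J=\lambda I+N$ of size $k$, I would work in the rescaled adapted norm $\norm{v}'=\norm{D^{-1}v}_\infty$ with $D=\operatorname{diag}(1,\epsilon,\dots,\epsilon^{k-1})$, for which $\norm{Jv}'\geq (|\lambda|-\epsilon)\norm{v}'$. Choosing $\epsilon<|\lambda|-1$ (possible because $|\lambda|-1$ is at least inversely exponential by a Mahler-type root-separation bound on the characteristic polynomial of $M$) and comparing $\norm{\cdot}'$ to the coordinate-wise $\infty$-norm (whose ratio is bounded by $\epsilon^{-(k-1)}$, still singly exponential), the bounded-effect hypothesis yields a threshold $K_u$, singly exponential in the bit-length of $M$ and in $\Delta$, such that $\norm{v}_\infty>K_u$ implies $\norm{[Mv]}_\infty>\norm{v}_\infty+1$. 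Consequently, once the unstable component of the orbit exceeds $K_u$, its modulus strictly grows forever, no point is ever revisited, and no target inside the $[K_u]$-ball can be matched afterwards. Dually, on stable blocks the same rescaling produces a strict contraction and supplies a singly exponential $K_s$ with $\norm{[Mv]}_\infty\leq K_s$ whenever $\norm{v}_\infty\leq K_s$; any initial stable component is absorbed into this ball within at most singly-exponentially many steps.

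Setting $K=\max(K_s,K_u)$, reachability of $[\target]$ reduces to two finite checks: whether the orbit visits $[\target]$ while still inside the $[K]$-ball, and whether, after escape, the monotonically growing unstable trajectory ever coincides with $[\target]$ (possible only if $\norm{[\target]}_\infty>K_u$, in which case a bounded number of further steps suffices). The $[K]$-ball contains at most $(2K/g)^\dimension$ admissible points, which is singly exponential, and each has polynomial bit-length. Since one step of the dynamics, an algebraic-number matrix product followed by the rounding, is polynomial-time computable, reachability in this implicit exponential-size graph lies in $\PSPACE$ by the standard counter-based simulation argument.

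The main technical hurdle is keeping $K_u$ singly exponential: the $1/(|\lambda|-1)$ dependence for eigenvalues of modulus only slightly above one, together with the off-diagonal Jordan contribution $N$, must be tightly controlled using the root-separation bound and the adapted-norm rescaling described above. Everything else, namely tracing rounded iterates in polynomial space and performing polynomial-time arithmetic on algebraic numbers, is routine.
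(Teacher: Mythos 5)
Your overall strategy---exhibit an exponentially bounded ball so that once the orbit escapes it never returns, then simulate inside the ball in polynomial space via a step counter---is exactly the paper's strategy. Your technique for the escape criterion, however, is genuinely different: you pass to an adapted norm $\norm{v}'=\norm{D^{-1}v}_\infty$ with $D=\operatorname{diag}(1,\epsilon,\dots,\epsilon^{k-1})$ that turns each Jordan block into a uniform expansion or contraction, whereas the paper constructs coordinate-wise radii $C_k$ by induction from the last coordinate of the block upward, so that a coordinate that leaves its ball keeps growing. Both routes are viable; your explicit appeal to a root-separation lower bound on $\abs{\lambda}-1$ is a point the paper leaves implicit, and you are right that it is needed to keep the radius singly exponential.

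There is, though, a concrete flaw in the step that asserts $\norm{v}_\infty>K_u$ implies $\norm{[Mv]}_\infty>\norm{v}_\infty+1$. For an unstable Jordan block the $\infty$-norm is \emph{not} monotone above any threshold. Take $J=\left[\begin{smallmatrix}\lambda & 1\\ 0 & \lambda\end{smallmatrix}\right]$ with $\lambda=3/2$ and $v=(10N,-6N)$: then $Jv=(9N,-9N)$, so $\norm{Jv}_\infty=9N<10N=\norm{v}_\infty$, and for $N$ large this decrease survives any bounded-effect rounding, so no choice of $K_u$ rescues the claim. What your rescaling actually gives you is strict growth of $\norm{\cdot}'$ once $\norm{v}'$ exceeds a suitable threshold $K'$, not growth of $\norm{\cdot}_\infty$. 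The repair is to run the escape argument entirely in the adapted norm: choose $K'$ large enough that $\norm{v}'>K'$ forces $\norm{[Mv]}'>\norm{v}'$ and also $K'\geq\norm{\target}'$; then, since $\norm{\cdot}'\geq\norm{\cdot}_\infty$ (for $\epsilon<1$), observing $\norm{\vecit{x}{i}}_\infty>K'$ certifies $\norm{\vecit{x}{i}}'>K'$, hence $\norm{\cdot}'$ grows strictly forever and stays above $\norm{\target}'$, so the target can never be hit again. This preserves the exponential bound, since the conversion factor $\epsilon^{-(k-1)}$ between the two norms adds only another singly exponential factor. With that correction, the rest of your argument (the stable forward-invariant ball, the $(2K/g)^d$ count of admissible points, the counter-based $\PSPACE$ simulation) matches the paper and goes through.
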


\begin{proof}
We consider a single Jordan block of dimension $\dimension$ with eigenvalue $\lambda$. If the matrix $M$ has multiple Jordan blocks, the algorithm can be run in lock step\footnote{By running processes in lock step, here and elsewhere, we mean running all of the processes simultaneously (interleaving instructions for each process) until either $\vecit{x}{i} = \target{}$ or one of the processes concludes non-reachability.} for each block. Hence, without loss of generality we let \[ M = 
\left[\begin{smallmatrix}
  \lambda & 1  
\\ & \lambda & 1  
\\ & & \ddots & 1
\\& && \lambda
\end{smallmatrix}\right].
\]

The idea will be to show that for $\abs{\lambda} > 1$, for values large enough growth will outstrip the rounding, and the orbit will grow beyond the target, never to return. If $\abs{\lambda} < 1$ and the orbit gets large enough, it will begin to contract again, so we choose a ball large enough to contain the whole orbit. We do not consider the case $\abs{\lambda} = 1$ here.

Formally, in each dimension $\ck \in \{1,\dots,\dimension\}$ we compute a radius $C_{\ck}$, defining a $[C_{\ck}]$-ball of radius $C_{\ck}$ about $0$, containing $x_\ck$ and $y_\ck$ such that for all $z$ in the orbit $\mathcal{O}$ if $z_\ck \not\in [C_{\ck}]$-ball then $[Mz]_\ck \not\in [C_{\ck}]$-ball. That is, if the orbit has left the ball, it will never come back. The algorithm proceeds by simulating the orbit from $x$ until one of the following occurs.
\begin{itemize}
	\item $\target$ is found, in which case $\textsc{return yes}$, or
	\item a point repeats, in which case $\textsc{return no}$, or
	\item a point $\vecit{x}{i}$ is found such that $\abs{\vecitcomp{x}{i}{k}} \ge C_{\ck}$ for some $k$, in which case $\textsc{return no}$.
\end{itemize}
Since $B = [\{x \in \mathbb{R}^\dimension{} \mid \text{for all } k \ \abs{x_k} \le C_k\}]$ is finite, one of the three must occur. Remembering all previous points would require too much space.
Therefore we record a counter of the number of steps taken and once this exceeds the maximum number of points then we know some point must have been repeated (possibly many times by this point).
Let $C = \max_i C_k$, then the bounding hyper-cube of $B$ has $(2C/g)^d$ points, hence $B$ has fewer points. We show this number has at most exponential size in the description length of the input, and hence can be represented in $\PSPACE$.

\begin{case}[suppose $\abs{\lambda} > 1$]
For the $\dimension$th component we have $\vecitcomp{x}{i+1}{\dimension} = [\lambda \vecitcomp{x}{i}{\dimension}]$. There is a bounded effect of the rounding $\Delta$, ensuring
$\avecitcomp{x}{i+1}{\dimension} \ge  \abs{\lambda} \avecitcomp{x}{i}{\dimension} - \Delta$.
So when $\abs{\lambda} \abs{\vecitcomp{x}{i}{\dimension}} - \Delta > \avecitcomp{x}{i}{\dimension}$, this component must grow. Let $\ell = \mx{1, \Delta, \abs{y_1},\ldots,\abs{y_d}}$. We define the radius
$C_\dimension := \frac{\Delta}{\abs{\lambda} - 1} + \ell$,
which satisfies the desired property described above.

Now suppose that the radius $C_k$ is defined so that
$C_k \leq \ell \sum_{j=0}^{d-k+1}(\frac{2}{|\lambda|-1})^{j}$ (holds for $k=d$) and assume that
$\avecitcomp{x}{i}{j} \le C_j$ for each $j \in \{k,\ldots,d\}$.
For the $\ck{-}1$th dimension the update is of the form $\vecitcomp{x}{i+1}{\ck{-}1} = [\lambda \vecitcomp{x}{i}{\ck{-}1} + 1 \vecitcomp{x}{i}{\ck}]$.
Since $\avecitcomp{x}{i}{\ck} \le C_\ck$, we have $\avecitcomp{x}{i+1}{\ck{-}1} \ge  \abs{\lambda} \avecitcomp{x}{i}{\ck{-}1} - \Delta - C_\ck$, and there is growth when
$\abs{\lambda} \avecitcomp{x}{i}{\ck{-}1} - \Delta - C_\ck > \avecitcomp{x}{i}{\ck{-}1}$,
i.e., when $\avecitcomp{x}{i}{\ck{-}1} > \frac{\Delta+C_\ck}{\abs{\lambda} -1}$. So,
we may define $C_{\ck{-}1} := \frac{\Delta+C_\ck}{\abs{\lambda} -1} + \ell$, which satisfies
the property described above, and moreover,
$C_{\ck{-}1} \leq \frac{2 C_{\ck}}{|\lambda|-1} + \ell \leq \ell \sum_{j = 0}^{d - (k-1)+1} (\frac{2}{|\lambda| - 1})^j $ due to our choice of $\ell$. Repeat for all remaining components $k-2,\dots, 1$.

Now $C_\ck \leq \ell \sum_{j=0}^{\dimension}(\frac{2}{|\lambda|-1})^j \leq  \ell (d+1)(1 + (\frac{2}{|\lambda|-1})^{d})$ for each $\ck$, and the claim follows.
\end{case}

\begin{case}[suppose $\abs{\lambda} < 1$] 
We require the ball to have the property that if the orbit leaves, it will never come back. However for $\abs{\lambda} < 1$, while initially there may be some growth (due to other components), once large enough $\abs{\lambda}$ will dominate and the modulus will decrease. Therefore, we want to ensure we choose the ball large enough that the orbit will never leave the ball in the first place. The following definitions of the radii $C_j$ can easily be altered to furnish this requirement.

Consider the last component $d$: we have $\abs{\vecitcomp{x}{i+1}{\dimension}} \leq \abs{\lambda }\abs{\vecitcomp{x}{i}{\dimension}} + \Delta$. Set again
$\ell = \max\{1,\Delta, \abs{y_1},\ldots, \abs{y_d}\}$ and define
$C_d := \frac{\Delta}{1-\abs{\lambda}} + \ell$; if $\avecitcomp{x}{i}{\dimension} \leq C_d$,
then $\avecitcomp{x}{i+1}{\dimension} \leq C_d$.

Having fixed $C_{\ck'}$ for $\ck'\in \{\ck,\dots, \dimension\}$, consider component $\ck -1$: We have $\vecitcomp{x}{i+1}{\ck{-}1} = [\lambda \vecitcomp{x}{i}{\ck{-}1} + \vecitcomp{x}{i}{\ck}]$, and so $\abs{\vecitcomp{x}{i+1}{\ck{-}1}} \le \abs{\lambda} \abs{\vecitcomp{x}{i}{\ck{-}1}} + \abs{\vecitcomp{x}{i}{\ck}} + \Delta$.
Let us define $C_{k-1}:= \frac{C_\ck + \Delta}{1-\abs{\lambda}} + \ell$.
Now if $\avecitcomp{x}{i}{\ck{-}1} \le C_{\ck{-}1}$
then $\avecitcomp{x}{i+1}{\ck{-}1} \le C_{\ck{-}1}$. Repeat for each remaining component. It
can be shown, similar to the previous case, that $C_\ck \leq \ell (d+1)(1 + (\frac{2}{1-|\lambda|})^{d})$ for each $\ck$, and this concludes the proof.
\qedhere\end{case}
\end{proof}

\paragraph*{Reducing the general form to Jordan normal form}
In the previous section we assumed that the matrix is always in Jordan normal form, which is a significant restriction. In this section we will not assume Jordan normal form, which means we cannot make any assumption about the rounding, other than being of bounded effect, to prove  \cref{thm:all-not-1}. After a change of basis properties such as `rounding towards zero' may not be preserved.
\begin{proof}[Proof (upper bound of \cref{thm:all-not-1})]
Let $\Delta$ be the fixed, bounded effect on the modulus of $[\cdot]$. Let us consider hyperbolic $M= P J P^{-1}\in \mathbb{Q}^{\dimension \times \dimension}$.
We ask whether $\vecit{x}{i+1} = \target$ for some $i$.
Observe that $\vecit{x}{i+1}=[M\vecit{x}{i}]=M\vecit{x}{i}+e(M\vecit{x}{i})$
where $e(x):=[x]-x\in\left[-\Delta,\Delta\right]^d$ for any $x$ since $[\cdot]$
has bounded effect. Now if we define $\vecit{z}{i}:=P^{-1}\vecit{x}{i}$ we have that
\[
    \vecit{z}{i+1}=P^{-1}\vecit{x}{i+1}
        =P^{-1}(M\vecit{x}{i}+e(M\vecit{x}{i}))
        =J\vecit{z}{i}+P^{-1}e(PJ\vecit{z}{i})
        =\llparenthesis J\vecit{z}{i}\rrparenthesis
\]
where $\llparenthesis z\rrparenthesis:=z+P^{-1}e(Pz)$ for any $z$. The question $\vecit{x}{i}\stackrel{?}{=}y$ for some $i$
now becomes equivalent to $\vecit{z}{i}\stackrel{?}{=}P^{-1}y$. But note that the system for $\vecit{z}{i}$
is in Jordan normal form and the rounding function $\llparenthesis\cdot\rrparenthesis$ has bounded effect on the modulus,
with bound $\Delta' \le \max_{1 \le k \le d} \max_{e \in \left[-\Delta,\Delta\right]^d } ( P^{-1}e)_\ck$. Since $\Delta$ is fixed and $P^{-1}$ is computable in polynomial time~\cite{DBLP:journals/ijfcs/Cai94}, then $\Delta'$ is of polynomial size.
Hence, we have produced in polynomial time an instance of the Rounded P2P problem with a matrix in Jordan normal form.
As the proof of \cref{thm:jnf-bounded-rounding} shows that this problem is solvable in $\PSPACE$ even if $\Delta$ is given as input, we can conclude that the $\PSPACE$ upper bound holds also for the general case. %
\end{proof}

\subsection{PSPACE-hardness}\label{sec:pspacehard}

We will prove $\PSPACE$-hardness (i.e., the lower bound of \cref{thm:all-not-1}) by reduction from quantified boolean formula (QBF), which is $\PSPACE$-complete~\cite{StockmeyerM73}. 
We do this by first encoding a simple programming language into the rounded P2P Problem. Then, we show that reachability in this language can solve QBF. Whilst a direct reduction is possible, we provide exposition via the language for two reasons; first, we will show that the language is robust to choice of rounding function (\cref{remark:choiceoffn}), and secondly the reduction results in an instance where all eigenvalues have modulus $1$, but by a small perturbation, we observe that the problem remains hard when all of the eigenvalues do not have modulus $1$ (\cref{remark:purtubation}).

The language will consist of $m$ instructions, operating over $d$ variables. Each instruction is a boolean map $f_i : [0,1]^d\to [0,1]^d$, where each dimension $i$ is updated using a logical formula of the $d$ inputs. Each of the $m$ instructions is conducted in turn and updating the $d$ variables is \textit{simultaneous} in each step. Thus, references to variable in a function are the evaluation in the previous step.  Once the $m$ instructions are complete, the system returns to the first instruction and repeats ($\vecit{x}{i} = (f_m \circ f_{m-1}\circ\dots\circ f_2\circ f_1)(\vecit{x}{i -1})$, see also~\cref{lst:1}).

An instruction is encoded into the rounded dynamical system using  a map $f_i : \mathbb{N}^d \to \mathbb{N}^d$ for $0 \le i \le m-1$, where instructions are of the form $(f_i(x))_j = \floor{(p_j \cdot x)}$ where $p_j$ in $\mathbb{Q}^d$. We demonstrate how to encode the required logical operations in a rounded dynamical system: and ($x_i \leftarrow x_j\wedge x_\ck = \floor{\frac{1 + x_j + x_\ck}{3} }$), or ($x_i \leftarrow x_j\vee x_\ck = \floor{\frac{1 + x_j + x_\ck}{2} }$), negation
 ($x_i \leftarrow \neg x_j = \floor{1-x_j}$), resetting a variable to \texttt{false} ($x_i \leftarrow \floor{0} $), copying a variable without change ($x_i \leftarrow \floor{x_i}$) or moving/duplicating a variable ($x_i \leftarrow \floor{x_j} $). To enable this, we will assume there is always access to the constant $1$ (or \texttt{true}) by an implicit dimension, fixed to $1$.

In multiple steps any logical formula can be evaluated. This can be done with auxiliary variables to store partial computations, where the instructions will in fact be multi-step instructions making use of a finite collection of auxiliary variables which will not be referenced explicitly. Meanwhile any unused variables can be copied without change.  In particular the syntax  $x_1 \leftarrow \itx{x_2}{x_3}{x_4}$ can be encoded, by equivalence with the logical formula $x_1 \leftarrow(( x_2\implies x_3) \wedge (\neg x_2 \implies x_4))$.

We ask, given some initial configuration $\vecit{x}{0}$, and a target $\target$: does there exist $i$ such that $\vecit{x}{i} =\target $. \label{sec:explode-dim} If there was just one step function, the system dynamics would be a direct instance of the rounded orbit semantics. When there are $m$ functions, we remark the sequence of functions can be encoded by taking $m$ copies of each variable, and each function $f_i$, can transfer the function from one copy to the next, zeroing the previous set of variables. That is, let \[M=
\left[\begin{smallmatrix}
  0 && &&f_m  
\\  f_1& 0
\\ & f_2 & 0
\\ && & \ddots & 
\\ && &f_{m-1}& 0
\end{smallmatrix}\right].
\]
Then the initial configuration becomes $(\vecit{x}{0},0,\dots,0)$, and the target becomes $(\target,0,\dots,0)$.

An abstraction of the language is depicted in \cref{lst:1}. It remains to show that QBF can be encoded in the language. 

\begin{algorithm}[t]
\KwIn{$x \in [0,1]^d $ initial vector, $y \in [0,1]^d $ target vector }

\While{$x \ne y$} {
	$x \leftarrow f_1(x)$ 

	$x \leftarrow f_2(x) \ \colortext{lipicsGray}{\stackrel{e.g.}{=}\  \begin{cases} x_1 \leftarrow x_2 \vee (x_5 \wedge x_3) \\ x_2 \leftarrow \itx{x_1\vee x_3}{x_6}{x_2} \\ x_3\leftarrow \mathtt{true}\\ \vdots \\ x_d \leftarrow x_4 \end{cases}}$

	$\vdots$

	$x \leftarrow f_m(x)$ 
}
\caption{System behaviour of the language}
\label{lst:1}
\end{algorithm}

\begin{lemma} \label{lemma:lang:qbf}
Reachability in this language can solve QBF.
\end{lemma}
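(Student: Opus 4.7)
The plan is to reduce QBF to reachability in the language. Given $\Phi = Q_1 x_1 Q_2 x_2 \cdots Q_n x_n\,\phi(x_1,\ldots,x_n)$ with $Q_i \in \{\exists,\forall\}$, I would construct a program whose orbit enumerates every assignment to $(x_1,\ldots,x_n)$ in the natural binary order, while maintaining, for each quantifier level $i$, a running partial evaluation of the sub-QBF rooted at $Q_i$. When the enumeration completes, the truth value of $\Phi$ sits in a designated bit, and the target vector is chosen so that it is reached iff that bit is $1$.

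Concretely, I would use $n$ counter bits $x_1,\ldots,x_n$ (also serving as the current assignment), $n$ accumulator bits $a_1,\ldots,a_n$, a ``done'' flag $d$, and polynomially many auxiliary bits for carries and intermediate subformula values. The initial state sets the counter and $d$ to $0$ and sets each $a_i$ to the identity of the operator of $Q_i$: namely $0$ if $Q_i=\exists$ and $1$ if $Q_i=\forall$. Each pass through $f_1,\ldots,f_m$ then (i)~evaluates $r:=\phi(x_1,\ldots,x_n)$ using the $\wedge,\vee,\neg$ gadgets described in the text; (ii)~combines $r$ into $a_n$ via $a_n\leftarrow a_n\vee r$ or $a_n\leftarrow a_n\wedge r$ according to $Q_n$; (iii)~ripple-increments the counter, producing carries $c_n,\ldots,c_1$; (iv)~for each $i$ from $n$ down to $2$ with $c_i=1$, combines $a_i$ into $a_{i-1}$ under $Q_{i-1}$'s operator via an if-then-else gadget and resets $a_i$ to its identity; and (v)~if the top carry $c_1=1$, sets $d:=1$ and thereafter freezes every variable by making each subsequent instruction behave as the identity whenever $d=1$. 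The target $y$ is the unique frozen state with $d=1$ and $a_1=1$.

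Correctness rests on the invariant that at every moment $a_i$ equals the truth value of $Q_i x_i \cdots Q_n x_n\,\phi$ restricted to the $(x_i,\ldots,x_n)$-assignments already scanned under the current prefix $x_1,\ldots,x_{i-1}$; when $d$ is set this invariant yields $a_1=\llbracket\Phi\rrbracket$, so $y$ is reachable iff $\Phi$ is true. Since the construction uses polynomially many variables and instructions $f_i$, this gives the $\PSPACE$ lower bound via the $\PSPACE$-completeness of QBF. The main technical obstacle I anticipate is reconciling the simultaneous semantics inside a single $f_i$ with the inherently sequential ripple-carry increment and the phased combine/reset of accumulators: these updates have to be spread over several consecutive $f_i$'s, with auxiliary buffer bits ensuring that each micro-step observes a consistent snapshot of the state it needs to read, and that an accumulator is only propagated \emph{after} the corresponding carry has been computed but \emph{before} the new value of its bit is used in the next $\phi$-evaluation. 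Once this pipelining is laid out, the rest is routine boolean bookkeeping.
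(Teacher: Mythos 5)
Your reduction is correct and follows essentially the same strategy as the paper: enumerate all $2^n$ assignments via a ripple-carry binary counter over $x_1,\ldots,x_n$, maintain per-quantifier-level accumulators that are folded up the chain precisely when the corresponding carry bit fires, and spread each logical step over consecutive $f_i$'s to reconcile the simultaneous-update semantics with the sequential carry propagation. The only cosmetic differences are that you keep a single identity-initialized accumulator $a_i$ per level and handle an arbitrary quantifier prefix directly, whereas the paper stores a pair $s^0_i,s^1_i$, pads the prefix to strict alternation, and flushes the result by setting the whole state to $(1,\ldots,1)$ so the target is fixed; these choices are interchangeable and do not change the argument.
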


\begin{proof}
Formally we write a program in our language to decide the truth of a formula of the form $\forall x_1 \exists x_2 \forall x_3 \dots \exists x_n \psi(x_1,\dots,x_n)$, where $\psi$ is a quantifier free boolean formula. For convenience we assume it starts with $\forall$, ends with $\exists$ and alternates. Formulae not in this form can be padded if necessary with variables which do not occur in the formula $\psi$.

The program will have the following variables: 
$x_1,\dots,x_n,\hat{\psi}, s^0_1,\dots,s^0_n, s^1_1,\dots,s^1_n$ and $c_1,\dots,c_{n}$. The bits $x_1,\dots,x_n$ represent the current allocation to the corresponding bit variables of $\psi$, and $\hat{\psi}$ will store the current evaluation of $\psi(x_1,\dots,x_n)$. To cycle through all allocations to $x_1,\dots,x_n$, the variables will be treated as a binary number and incremented by one many times, for this purpose the bits $c_1,\dots,c_{n}$ represent the carry bits when incrementing $x_1,\dots,x_n$.

The intuition of $s^z_i$ is the following: for fixed $x_1,\dots,x_{i-1}$ it stores the evaluation of  $ Q x_{i+1}\ Q' x_{i+2} \dots \exists x_n \psi(x_1,\dots,x_{i-1}, z, x_{i+1},\dots,x_n)$ where $Q,Q'\in \{\exists,\forall\}$ as required by the formula. Therefore the overall formula is true if and only if $s^0_1\wedge s^1_1$ is eventually true.

We define $3+n$ instructions, and each run through $f_1 \to f_{3+n}$ will cover exactly one allocation to $x_1,\dots,x_n$, with the next run through covering the next allocation that one gets by incrementing the rightmost bit. Once $x_{i+1}$ has been in both the $1$ state and the $0$ state for all values below, we have enough information to set  $s^{x_{i}}_{i+1}$. This is set when the carry-bit $c_{i+1}$ is one, which indicates that $x_{i+1}$ has visited both $0$ and $1$ and is being returned back to $0$ (thus setting $x_{i+1} = \dots = x_{n}$ back to $0$).

We let the initial configuration be $(0,0\dots,0)$. Note that this is hiding the implicit dimension that is always $1$. Each of the following step functions should be interpreted as copying any variable that is not explicitly set.

\noindent\begin{tabular}{@{}lll@{}}
\textbf{Step $1$.}
&\textbf{Step $2$.}
&\textbf{Step $3$.}
\\
Evaluate $\psi$ 
& Update either $s_n^0$ or $s_n^1$
& Start incrementing $x_{n}$ 
\\
$f_1(\cdot) = \begin{cases}\hat{\psi} \leftarrow  \psi(x_1,\dots,x_n)\end{cases}$
& 

$
f_2(\cdot) = \begin{cases}s_n^0 \leftarrow \itxss{x_n = 0}{\hat{\psi}}{s_n^0}\\
s_n^1 \leftarrow \itxss{x_n = 1}{\hat{\psi}}{s_n^1}\end{cases}$ 
&$f_3(\cdot) = \begin{cases}
x_{n} \leftarrow \neg x_{n}\\
c_{n} \leftarrow x_{n}\end{cases}$
\end{tabular}

\noindent\textbf{Step $3+n-i$, for  $i = n-1$ to $1$.}\\ 
If there is a carry, update $s_{i}^z$ and continue incrementing  \\
\noindent\begin{tabular}{ll}
\textbf{$i$ even ($x_i$ universally quantified):} & \textbf{$i$ odd  ($x_i$ existentially quantified):} \\
$f_{3+n-i}(\cdot)=$ & $f_{3+n-i}(\cdot)=$\\
 $\begin{cases}
x_{i} \leftarrow \itx{c_{i+1}}{\neg x_{i}}{x_{i}}\\
c_{i} \leftarrow c_{i+1} \wedge x_{i}\\
c_{i+1} \leftarrow 0\\
s_{i}^0 \leftarrow \itxss{c_{i+1} \wedge \neg x_i }{
	s_{i+1}^0 \wedge s_{i+1}^1
}{s_{i}^0}\\
s_{i}^1 \leftarrow \itxss{c_{i+1} \wedge x_i }{
	s_{i+1}^0 \wedge s_{i+1}^1
}{s_{i}^1}
\end{cases}$&
$\begin{cases}
x_{i} \leftarrow \itx{c_{i+1}}{\neg x_{i}}{x_{i}}\\
c_{i} \leftarrow c_{i+1} \wedge x_{i}\\
c_{i+1} \leftarrow 0\\
s_{i}^0 \leftarrow \itxss{c_{i+1} \wedge \neg x_i }{
	s_{i+1}^0 \vee s_{i+1}^1
}{s_{i}^0}\\
s_{i}^1 \leftarrow \itxss{c_{i+1} \wedge x_i }{
	s_{i+1}^0 \vee s_{i+1}^1
}{s_{i}^1}
\end{cases}$
\end{tabular}

\noindent\textbf{Step $3+n$.}\\ Set every variable to $1$ if QBF satisfied. After this step, the program returns to $f_1$.\\
$f_{3+n}(\cdot) =\begin{cases}v \leftarrow \itx{s_{1}^0 \wedge s_{1}^1}{1}{v} &(\text{for all variables } v)
\end{cases}$ \\
The (3+n)th step  ensures that configuration $(1,\ldots,1)$ will be reached if and only if the given QBF formula is satisfied.
\end{proof}

\begin{remark}[{Choice of rounding function}]\label{remark:choiceoffn}
 The presentation here relies on specific choices of rounding function, but we observe that the language can easily exchange several different natural rounding functions, so the reduction is robust.  The rounding is only useful in the \texttt{and} and \texttt{or} instructions. The floor function can be replaced by essentially any other rounding. For example $x_j\vee x_\ck = \ceil{\frac{x_j + x_\ck}{2} }$ and $x_j\wedge x_\ck = \ceil{\frac{-1 + x_j + x_\ck}{2} }$. Similarly, when $[\cdot]$ is minimal error rounding then  $x_j\vee x_\ck = \round{\frac{1 + x_j + x_\ck}{3} }$ and $x_j\wedge x_\ck = \round{\frac{x_j + x_\ck}{3} }$ (the break point is not used). Thus, the problem will also be hard for any of these roundings.
 \end{remark}
 \begin{remark}[{Perturbation: ensuring the eigenvalues are not modulus 1}]\label{remark:purtubation}
Observe that under the perturbation that multiplies each operation by 1.1 (before taking floor) we obtain the same resulting operation. For example $x_i \leftarrow x_j\vee x_\ck = \floor{\frac{1 + x_j + x_\ck}{2} }$ is equivalent to $x_i \leftarrow x_j\vee x_\ck = \floor{(\frac{1 + x_j + x_\ck}{2} )*1.1}$. Hence, if the resulting matrix $M$ has eigenvalues $1$, taking $1.1 M$ (or similar value to 1.1) will result in a matrix that does not with the same orbit; which shows that hardness is retained for matrices in which no eigenvalue has modulus 1.\end{remark}
\begin{remark}[Dimension]The hardness result needs reachability instances of unbounded dimension. For a QBF formula with $n$ variables and $\ell$ logical operations, the resulting instance of rounded P2P has dimension $(3n+1+\ell)(4n+15+\ell)$.\end{remark}

\section{Special cases on non-hyperbolic systems}

In this section we consider certain cases when the eigenvalues can be of modulus one. In particular we work in the Jordan normal form and show that the problem can be solved for certain types of rounding. We fall short of arbitrary deterministic rounding, which would be required to show the problem in full generality through the Jordan normal form approach.

First, we show  decidability for polar-rounding, along with an example with numbers requiring exponential space by the time the system becomes periodic---seeming to imply any `wait and see' approach would require $\EXPSPACE$. We also show decidability for certain types of Argand rounding, in particular truncation and expansion, but minimal-error rounding remains open (which we discuss further in \cref{sec:problemswith2x2}).

\subsection{Polar rounding with updates in Jordan normal form}\label{sec:polar}

We restrict ourselves to a Jordan block $M$ of dimension $d$, with eigenvalue $\lambda$ of modulus $1$. Since the polar rounding function has bounded effect \textit{on the modulus}, the remaining blocks, which need not be of modulus $1$ can be solved (\cref{thm:jnf-bounded-rounding}) by running this algorithm in lock step with the algorithm for those blocks.
All together, this gives us:

\begin{theorem}\label{thm:polar-expspace}
The Rounded P2P Problem is decidable in $\EXPSPACE$ for the polar rounding function with $\theta_g = \frac{\pi}{\numangles}$, $\numangles{} \ge 2$ and matrices $M \in \algs^{\dimension \times \dimension}$ in Jordan normal form. 
\end{theorem}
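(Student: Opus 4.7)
Following the strategy of \cref{thm:jnf-bounded-rounding}, I would first reduce to a single Jordan block of dimension $\dimension$ whose eigenvalue $\lambda$ satisfies $\abs{\lambda}=1$, by running the algorithm in lockstep with \cref{thm:jnf-bounded-rounding} on the remaining blocks (whose eigenvalues are bounded away from $1$); this is legitimate because polar rounding has bounded effect on the modulus. Polar rounding then ensures that at every step $i$ and for every coordinate $\ck$, the value $\vecitcomp{x}{i}{\ck}$ has modulus equal to a non-negative multiple of the granularity $g$ and argument equal to an integer multiple of $\theta_g = \pi/\numangles$, so the state is fully discrete: each coordinate is specified by a pair consisting of a non-negative integer (the multiple of $g$) and an element of $\{0,\dots,2\numangles-1\}$ (the multiple of $\theta_g$).

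First I would derive rough growth bounds. From the update $\vecitcomp{x}{i+1}{\ck} = [\lambda\vecitcomp{x}{i}{\ck} + \vecitcomp{x}{i}{\ck+1}]$ (with $\vecitcomp{x}{i}{\dimension+1} := 0$), together with $\abs{\lambda}=1$ and the bounded modulus effect $\Delta$ of polar rounding, one obtains $\avecitcomp{x}{i+1}{\ck} \leq \avecitcomp{x}{i}{\ck} + \avecitcomp{x}{i}{\ck+1} + \Delta$. Induction downward in $\ck$ then yields $\avecitcomp{x}{i}{\ck} \leq p_{\ck}(i)$ for an explicit polynomial $p_{\ck}$ of degree $\dimension - \ck$. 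Furthermore $\avecitcomp{x}{i}{\dimension}$ is constant (the modulus rounding fixes any multiple of $g$ and $\abs{\lambda}=1$), and the argument of $\vecitcomp{x}{i}{\dimension}$ follows a deterministic orbit on the $2\numangles$-element set of discretised angles, so is ultimately periodic with period at most $2\numangles$.

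The main step is to prove that within a time horizon $T$ at most doubly exponential in the input size, one of the following three conditions must hold: (i) $\vecit{x}{i} = \target$; (ii) a full state has repeated, putting the orbit into a cycle disjoint from $\target$; or (iii) some coordinate's modulus has surpassed a doubly exponential \emph{no-return} threshold beyond which $\vecit{x}{j} = \target$ is impossible for every $j \geq i$. Given such a $T$, the decision procedure just simulates the orbit, maintaining the current state and a $\log T$-bit step counter, and terminating on any of the above conditions. Storage is dominated by the $\dimension$ coordinates of the current state (each taking $O(\log(B/g) + \log\numangles)$ bits for a modulus bound $B$) together with the counter; since the authors' forthcoming example will demonstrate that both $B$ and $T$ may genuinely be doubly exponential, this yields membership in $\EXPSPACE$.

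The hardest ingredient, and the main obstacle I anticipate, is establishing (iii): a drift analysis showing that if the orbit does not cycle within the doubly exponential horizon, then some coordinate's modulus diverges monotonically past its no-return threshold. I would proceed inductively from $\ck = \dimension$ downwards: once the angles of coordinates $\ck+1,\dots,\dimension$ have stabilised into periodic regimes, one studies the net accumulated effect on $\avecitcomp{x}{i}{\ck}$ of successively adding $\vecitcomp{x}{i}{\ck+1}$ into $\lambda\vecitcomp{x}{i}{\ck}$ over a full period of angles. Either this accumulated change per period is zero---in which case $\avecitcomp{x}{i}{\ck}$ stays uniformly bounded, the orbit is confined to a bounded ball, and pigeonhole on the discrete state space forces a cycle---or it is bounded away from zero, in which case $\avecitcomp{x}{i}{\ck}$ eventually grows monotonically without bound. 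A further complication is that the argument of $\lambda\vecitcomp{x}{i}{\ck} + \vecitcomp{x}{i}{\ck+1}$ depends on the relative moduli of the two summands, so the angle dynamics at level $\ck$ need not stabilise until either coordinate $\ck$ dominates coordinate $\ck+1$ or vice versa. Extracting explicit doubly exponential bounds from this layered analysis, together with a precise no-return condition that cannot be undone by conspiring lower coordinates, is where the technical core of the proof will sit.
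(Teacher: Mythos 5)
Your high-level structure is right (reduce to a single modulus-$1$ Jordan block, run the modulus-$\neq 1$ blocks in lockstep via \cref{thm:jnf-bounded-rounding}, then descend coordinate by coordinate looking for either a cycle or a no-return divergence within a bounded horizon), and you correctly identify where the difficulty lies. But the "drift analysis" you propose for step (iii) has a genuine gap: it is not a priori a dichotomy. You want to argue that once $x_{k+1},\dots,x_d$ are periodic, the net modulus change to $x_k$ per period is either exactly zero (hence bounded orbit, hence cycle) or bounded away from zero (hence monotone divergence). However, the angle of $x_k$ is itself part of the state, and the update $[\lambda x_k + x_{k+1}]$ moves that angle by an amount that depends on $|x_k|$, which is unbounded. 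So there is no well-defined "period of angles" at level $k$, and the per-step modulus increment is not eventually periodic: it could in principle oscillate forever without the orbit cycling and without $|x_k|$ crossing any fixed threshold. You flag this as "the angle dynamics at level $k$ need not stabilise until either coordinate dominates," but dominance cannot be assumed---it is precisely what needs to be established.

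The paper closes this gap with a structural observation you are missing: the relative angle $\phi(i) = \abr{\lambda x^{(i)}_{k-1}, x^{(i)}_{k}}$ between the rotated lower coordinate and the (just-rotating) upper coordinate is \emph{non-increasing} (\cref{lemma:angledecreasing}, built on \cref{lemma:rotationorderpreserving}). This is the key monotone potential function: since $\phi$ takes only finitely many values ($\leq 2\numangles$), it eventually stabilises irrespective of the moduli. Once $\phi$ is fixed, the modulus evolution of $x_{k-1}$ becomes tractable and breaks into a finite state diagram (\cref{fig:state-machine}), with separate stopping criteria for $\phi \leq \pi/2$ (modulus is non-decreasing, eventually exceeds $|\target_{k-1}|$ or stabilises into "just rotating") and $\phi > \pi/2$ (a decrease cannot be followed by an increase at fixed $\phi$, via \cref{corr:nodecreaseincrease}, so the modulus is eventually monotone). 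The precise no-return criterion is \cref{cor:notdecreasingagain}. Proving $\phi$-monotonicity hinges specifically on minimal-error rounding of the angle, so this is not a generic consequence of the rounding being discrete. Without a monotone invariant of this kind your inductive descent does not terminate: your argument would need to rule out orbits where the relative angle of $x_k$ to $x_{k+1}$ fluctuates forever while the modulus bounces around without repeating a state, and that is exactly what the paper's lemma forbids.
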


To prove~\Cref{thm:polar-expspace} we show that each dimension $d, d{-}1, \ldots, 1$ will eventually be periodic on a fixed modulus, or permanently diverge beyond $\target_\ck$ (the target value in dimension $\ck$).

Let $\abr{a,b}$ be the smallest angle between vectors $a$ and $b$ -- this is a value in $[0,\pi]$ and, in particular, it is always positive. It is used as a measure of alignment: the more $a$ and $b$ are aligned the smaller $\abr{a,b}$ is. We will assume that the system will round up if $[x] - x = 0.5$.
The remaining case can be adapted by suitably adjusting the relevant inequalities. 
 We say that a dimension $\ck \in \{1,\ldots,d\}$ is \emph{just rotating} after position $N$, if for all $i \geq N$: $\vecitcomp{x}{i+1}{\ck} = [\lambda \, \vecitcomp{x}{i}{\ck}]$.
Note that dimension $d$ is just rotating after $0$, by definition.
Our goal is to show that every dimension $\ck$ will eventually be just rotating (for which we would require it to have modulus $\abs{\target_\ck}$) or reach a point that lets us conclude it has permanently diverged past $\target_\ck$.
So we assume, henceforth, that dimension $\ck$ is just rotating.

We let $\phi(i) = \abr{\lambda \vecitcomp{x}{i}{\ck{-}1}, \vecitcomp{x}{i}{\ck}}$.
As $\vecitcomp{x}{i+1}{\ck{-}1} = [\lambda \vecitcomp{x}{i}{\ck{-}1} + \vecitcomp{x}{i}{\ck}]$, small values of $\phi(i)$ (between $0$ and $\pi/2$) lead to an increase in modulus of $\vecitcomp{x}{i+1}{\ck{-}1}$, whereas large values (between $\pi/2$ and $\pi$) lead to a decrease when $\abs{\vecitcomp{x}{i}{\ck{-}1}}$ is sufficiently large relative to $\abs{\vecitcomp{x}{i}{\ck}}$.
Our analysis relies on the fact that $\phi(i)$ can never increase:
\begin{restatable}{lemma}{angledecreasing}
  \label{lemma:angledecreasing}
  \label{corollary:phidecreasing}
  Suppose that dimension $k$ is just rotating after step $N$. Then, for all $i \geq N +1$: $\phi(i) \geq \phi(i+1)$.
\end{restatable}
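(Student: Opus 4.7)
The plan is to set up a rotating coordinate frame in which $x_k^{(i)}$ is stationary and analyse how the argument of $\lambda x_{k-1}^{(i)}$ evolves relative to $x_k^{(i)}$ in this frame. Since dimension $k$ is just rotating and $\arg(x_k^{(i)})$ is a multiple of $\pi/R$ (having been rounded), polar rounding of $\lambda x_k^{(i)}$ amounts to adding a fixed increment $\alpha_R := [\arg(\lambda)]$ to $\arg(x_k^{(i)})$; thus $\arg(x_k^{(i)}) = \arg(x_k^{(N)}) + (i-N)\alpha_R$ for every $i \geq N$. Writing $\arg(\lambda) = \alpha_R + \epsilon$ with $\epsilon \in [-\pi/(2R), \pi/(2R))$ (using the tie-breaking convention adopted above), and setting $\zeta_i := \arg(\lambda x_{k-1}^{(i)}) - \arg(x_k^{(i)})$, one checks that $\zeta_i$ takes the form $\epsilon + n_i \pi/R$ for some integer $n_i$, and that $\phi(i)$ is the absolute value of the representative of $\zeta_i$ in $(-\pi, \pi]$.

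The geometric heart of the argument is elementary: the sum $v + w$ of two non-zero complex numbers lies, in direction, in the closed convex angular cone spanned by $v$ and $w$. Applied with $v = \lambda x_{k-1}^{(i)}$ and $w = x_k^{(i)}$, this yields $\arg(v+w) = \arg(x_k^{(i)}) + \xi_i$ for some $\xi_i$ with $|\xi_i| \leq |\zeta_i|$ and the same sign as $\zeta_i$. Because $\arg(x_k^{(i)})$ already lies on the $\pi/R$-grid, polar rounding of $v+w$ gives $\arg(x_{k-1}^{(i+1)}) = \arg(x_k^{(i)}) + [\xi_i]$. A direct computation using $x_k^{(i+1)} = [\lambda x_k^{(i)}]$ then yields $\zeta_{i+1} = \epsilon + [\xi_i]$, and therefore $\phi(i+1) = \bigl|\,\epsilon + [\xi_i]\,\bigr|$.

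The final step compares $[\xi_i]$ with $n_i\pi/R$. From $|\xi_i| \leq |\zeta_i| \leq |n_i|\pi/R + |\epsilon|$ and the tie-breaking convention (which constrains $\epsilon \in [-\pi/(2R), \pi/(2R))$ and rules out a pathological boundary), together with the matching signs of $\xi_i$ and $\zeta_i$, we deduce that $[\xi_i] = m\pi/R$ for some integer $m$ satisfying $0 \leq m\cdot\operatorname{sign}(n_i) \leq |n_i|$. A short case analysis (separately treating the sign of $n_i$ and paying attention to the small cases $|n_i| \in \{0,1\}$) then shows $\bigl|\,\epsilon + [\xi_i]\,\bigr| \leq \bigl|\,\epsilon + n_i\pi/R\,\bigr|$, which is precisely $\phi(i+1) \leq \phi(i)$.

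The principal obstacle I anticipate is the careful rounding bookkeeping at and near the tie-breaking boundary: one must verify that the asymmetric convention propagates consistently through the bounds on $\xi_i$, particularly when $|\zeta_i|$ sits almost exactly halfway between adjacent grid points, and that degenerate configurations such as $\zeta_i = \pi$ (with $v$ and $w$ anti-parallel and $v+w$ potentially vanishing) do not invalidate the geometric step. The restriction to $i \geq N+1$ (rather than $i \geq N$) leaves a safe buffer guaranteeing the invariant grid structure on both $\arg(x_k^{(i)})$ and $\arg(x_{k-1}^{(i)})$ before the comparison begins.
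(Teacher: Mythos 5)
Your proof rests on the same geometric kernel as the paper's: your observation that $\arg(v+w)$ lies in the cone spanned by $v$ and $w$, so the signed angle $\xi_i$ from $w$ to $v+w$ satisfies $|\xi_i| \le |\zeta_i|$ with matching sign, is exactly the content of the paper's Lemma~\ref{lemma:rotationorderpreserving}, and your decomposition $\zeta_i = \epsilon + n_i\pi/R$ (isolating the fixed rounding offset $\epsilon$ from the grid-multiple part) plays the same role as the paper's fixed angular shift $\theta = \abr{[\lambda[a]],\lambda[a]}$ together with its sign-preservation argument. The presentation is more algebraic (rotating frame, explicit grid arithmetic) versus the paper's chain of angle (in)equalities followed by sign-based case analysis, but the structure is essentially identical. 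The boundary issues you flag but do not resolve --- anti-parallel configurations where the cone argument degenerates, i.e.\ $\zeta_i$ at or wrapping past $\pi$, and the tie-breaking bookkeeping there --- are precisely what the paper's Cases 2 and 3 of the proof of Lemma~\ref{corollary:phidecreasing} are devoted to; to make your argument complete you would need to carry out the corresponding normalisation of $\zeta_i$ into $(-\pi,\pi]$ and check that $|m|\le |n_i|$ survives the wrap-around, which is routine but not trivial.
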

If dimension $\ck{-}1$ repeats its relative angle to $\ck$ and its modulus in some step, we can conclude that $\ck{-}1$ is just rotating:
\begin{restatable}{lemma}{willperiod}\label{lemma:willperiod}
  Suppose that dimension $\ck$ is just rotating after step $N$, that
$\phi(N) = \phi(N{+}1)$ and $\abs{\vecitcomp{x}{N}{\ck{-}1}} = \abs{ \vecitcomp{x}{N+1}{\ck{-}1}}$. Then, dimension $k{-}1$ is just rotating after $N$.
\end{restatable}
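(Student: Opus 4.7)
The plan is to prove by induction on $i \ge N$ the joint invariant: (a)~$\vecitcomp{x}{i+1}{\ck-1} = [\lambda\,\vecitcomp{x}{i}{\ck-1}]$; (b)~$\abs{\vecitcomp{x}{i+1}{\ck-1}} = \abs{\vecitcomp{x}{N}{\ck-1}}$; and (c)~$\phi(i+1) = \phi(N)$. Clause (a) is exactly the ``just rotating'' conclusion; (b) and (c) provide the bookkeeping that keeps the induction going.

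For the base case $i = N$, clauses (b) and (c) are precisely the hypotheses of the lemma, so only (a) requires real work. Since polar rounding factorises as $[Ae^{i\theta}] = [A]\,e^{i[\theta]}$, proving $[\lambda\vecitcomp{x}{N}{\ck-1} + \vecitcomp{x}{N}{\ck}] = [\lambda\vecitcomp{x}{N}{\ck-1}]$ reduces to matching the rounded modulus and the rounded argument separately. The rounded moduli coincide directly from (b), using $\abs{\lambda}=1$ and the fact that $\abs{\vecitcomp{x}{N}{\ck-1}}$ is already a multiple of $g$. For the arguments, I would track the signed in-plane angle $\beta_i := \arg\vecitcomp{x}{i}{\ck} - \arg\vecitcomp{x}{i}{\ck-1} - \arg\lambda$, which carries the same information as $\phi(i)$ up to sign modulo $2\pi$. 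Since $\arg(\lambda\vecitcomp{x}{N}{\ck-1} + \vecitcomp{x}{N}{\ck})$ lies on the shorter arc between $\arg(\lambda\vecitcomp{x}{N}{\ck-1})$ and $\arg\vecitcomp{x}{N}{\ck}$, rounding this argument to the nearest multiple of $\theta_g$ yields $\beta_{N+1} = \beta_N - j\theta_g$ for some non-negative integer $j$. The hypothesis $\phi(N+1) = \phi(N)$ then forces $\beta_{N+1} = \pm\beta_N \pmod{2\pi}$; the positive sign gives $j=0$, which is exactly clause (a), and the ``flipped'' case $\beta_{N+1} = -\beta_N$ must be ruled out separately.

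For the inductive step I would invoke the rotational equivariance of polar rounding: $[e^{in\theta_g}z] = e^{in\theta_g}[z]$ for any integer $n$ and any $z\in\mathbb{C}$. Setting $\alpha' := [\arg\lambda]$, a multiple of $\theta_g$, the invariant at step $i$ gives $\vecitcomp{x}{i+1}{\ck-1} = e^{i\alpha'}\vecitcomp{x}{i}{\ck-1}$ (by~(a)) and $\vecitcomp{x}{i+1}{\ck} = e^{i\alpha'}\vecitcomp{x}{i}{\ck}$ (since dimension $\ck$ is just rotating). Hence the pair at step $i+1$ is the rigid $\alpha'$-rotation of the pair at step $i$; equivariance then shows that the update producing $\vecitcomp{x}{i+2}{\ck-1}$ is the $\alpha'$-rotation of the update producing $\vecitcomp{x}{i+1}{\ck-1}$, and all three clauses propagate.

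The main obstacle is the base case, specifically ruling out the flipped scenario. Writing $\chi_N$ for the actual argument shift $\arg(\lambda\vecitcomp{x}{N}{\ck-1} + \vecitcomp{x}{N}{\ck}) - \arg(\lambda\vecitcomp{x}{N}{\ck-1})$, the flipped case forces $j = 2\beta_N/\theta_g$ to be a positive integer while the sum-of-vectors inequality $\chi_N < \beta_N$ and the angular rounding window pin $\beta_N$ to a narrow sliver of width $O(\theta_g)$ around $\theta_g/2$. In that sliver the modulus identity (b) becomes the decisive constraint: for such a small $\beta_N$, $\abs{\lambda\vecitcomp{x}{N}{\ck-1} + \vecitcomp{x}{N}{\ck}}$ exceeds $\abs{\lambda\vecitcomp{x}{N}{\ck-1}}$ by roughly $\abs{\vecitcomp{x}{N}{\ck}}$, so (b) prevents $\abs{\vecitcomp{x}{N}{\ck}}$ from being non-negligible, collapsing the update to a pure rotation and yielding $j = 0$. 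This interplay between the modulus and angle sub-roundings is the delicate geometric heart of the argument; the rest is a matter of careful bookkeeping.
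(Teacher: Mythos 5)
Your high-level structure matches the paper's: reduce the lemma to showing $\vecitcomp{x}{i+1}{\ck-1} = [\lambda\vecitcomp{x}{i}{\ck-1}]$ for all $i \geq N$, establish this first at $i=N$, then propagate. Your inductive step via the rotational equivariance identity $[e^{in\theta_g}z] = e^{in\theta_g}[z]$ is in fact a cleaner and more explicit formalization of what the paper dispatches with the single sentence ``the next step from $N{+}1$ to $N{+}2$ is just a rotation of this case.'' That part is a genuine improvement in clarity.

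The gap is in the base case, specifically in your exclusion of the ``flipped'' configuration $\beta_{N+1} = -\beta_N$. Your pinning of $\beta_N$ to roughly $\theta_g/2$ is fine, but the closing step --- that the modulus identity (b) ``prevents $\abs{\vecitcomp{x}{N}{\ck}}$ from being non-negligible, collapsing the update to a pure rotation'' --- does not hold up. First, $\beta_N = \theta_g/2$ is not small in general: with $\theta_g = \pi/R$ and $R=2$ it equals $\pi/4$. Second, the modulus rounding is an arbitrary bounded-effect rounding, so $[\abs{w}] = \abs{\vecitcomp{x}{N}{\ck-1}}$ only bounds $\abs{\vecitcomp{x}{N}{\ck}}$ by a quantity of order $\Delta/\cos(\theta_g/2)$, which is a fixed constant, not negligible; with such a $\abs{\vecitcomp{x}{N}{\ck}}$ the argument of the sum $\lambda\vecitcomp{x}{N}{\ck-1} + \vecitcomp{x}{N}{\ck}$ can still shift by enough to move across a rounding boundary, and $j>0$ is not excluded by the modulus constraint alone. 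The actual mechanism that rules out the flip lives in \cref{lemma:rotationorderpreserving} (and its use in Case 1 of the proof of \cref{corollary:phidecreasing}): minimal-error rounding of the argument preserves the \emph{signed} position of $[a+b]$ relative to $b$, not just the unsigned angular distance, and this, together with the fixed tie-breaking convention, is what keeps the orientation from reflecting. You do not invoke this orientation-preservation fact, and your substitute argument via the modulus does not replace it. To repair the base case you should either prove the sign-preservation directly from the minimal-error-plus-fixed-tie-break definition, or cite the orientation-preservation established in \cref{lemma:rotationorderpreserving}.
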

If the precondition of~\Cref{lemma:willperiod} holds, we move to the next dimension $\ck{-}2$.
Otherwise, we want to give a bound such that whenever $\abs{\vecitcomp{x}{i}{\ck{-}1}}$ exceeds it, we can conclude that it never decreases back to $\abs{\target_{\ck{-}1}}$.
We first introduce the angle $\gamma(i) = \abr{\lambda \vecitcomp{x}{i}{\ck{-}1} + \vecitcomp{x}{i}{\ck},\lambda \vecitcomp{x}{i}{\ck{-}1}}$.
The angle $\gamma(i)$ decreases with increasing $\abs{\vecitcomp{x}{i}{\ck{-}1}}$, as dimension $\ck$ is just rotating and hence does not change in modulus.
We observe that $\gamma(i) \leq \phi(i)$ for all $i$.
The following shows that an increase in modulus caused by crossing an `axis' (i.e. if $\gamma(i) > \pi/2$) can only happen once, as in the next step, the angle will have decreased.
\begin{restatable}{lemma}{ulq}\label{lemma:ulq}
Let $a = \lambda\vecitcomp{x}{i}{\ck{-}1}$ and $b = \vecitcomp{x}{i}{\ck}$.
Suppose that $\theta_g \le \pi/2$, $\gamma(i) > \pi/2$ and $\abs{a+b} > \abs{a}$.
Then $\abr{\lambda[ a+ b ], [\lambda b]} \le \pi/2$, entailing  $\gamma(i+1) \leq \phi(i+1) \le \pi/2$.
\end{restatable}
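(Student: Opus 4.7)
The plan is to combine a geometric estimate on $\phi(i) - \gamma(i)$ with the discrete structure of the polar grid. First, consider the triangle with vertices $0$, $a$, and $a+b$. Its angles are $\gamma(i)$ at $0$, $\pi - \phi(i)$ at $a$, and $\phi(i) - \gamma(i)$ at $a+b$, and the law of sines gives $|a+b|/|a| = \sin\phi(i)/\sin(\phi(i) - \gamma(i))$. The hypothesis $|a+b| > |a|$ becomes $\sin\phi(i) > \sin(\phi(i) - \gamma(i))$; since $\phi(i) \ge \gamma(i) > \pi/2$, both $\pi - \phi(i)$ and $\phi(i) - \gamma(i)$ lie in $[0, \pi/2)$, so rewriting $\sin\phi(i) = \sin(\pi - \phi(i))$ and using monotonicity of $\sin$ on $[0, \pi/2]$ yields $\pi - \phi(i) > \phi(i) - \gamma(i)$, whence $\phi(i) - \gamma(i) < (\pi - \gamma(i))/2 < \pi/4$.

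Next, I would exploit that $\arg(b)$ is a multiple of $\theta_g$, since the polar rounding always produces arguments on the $\theta_g$-grid and $b = \vecitcomp{x}{i}{k}$ is an iterate. The nearest-multiple-of-$\theta_g$ rounding is equivariant under shifts by multiples of $\theta_g$, so $\arg([\lambda b]) = [\arg(\lambda) + \arg(b)] = [\arg(\lambda)] + \arg(b)$, while $\arg(\lambda[a+b]) = \arg(\lambda) + [\arg(a+b)]$. Setting $s := \arg(\lambda) - [\arg(\lambda)]$ (so that $|s| \le \theta_g/2$) and $\Delta := [\arg(a+b)] - \arg(b)$ (a multiple of $\theta_g$), one obtains $\phi(i+1) = |s + \Delta|$.

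The main step, and chief obstacle, is to refine the estimate $|\Delta| \le \pi/2 - \theta_g/2$. By the triangle inequality and the first part, $|\Delta| \le |[\arg(a+b)] - \arg(a+b)| + |\arg(a+b) - \arg(b)| < \theta_g/2 + \pi/4$; since $\Delta$ is constrained to be a multiple of $\theta_g = \pi/R$, the refined bound reduces to verifying that no multiple of $\theta_g$ lies in the open interval $(\pi/2 - \theta_g/2,\, \theta_g/2 + \pi/4)$, whose length is $\theta_g - \pi/4$. This is empty for $R \ge 4$, and for $R \in \{2,3\}$ the interval becomes $(\pi/4, \pi/2)$ or $(\pi/3, 5\pi/12)$ respectively, neither of which contains a multiple of $\pi/2$ or $\pi/3$. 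Combining, $\phi(i+1) \le |s| + |\Delta| \le \pi/2$; then $\gamma(i+1) \le \phi(i+1)$ follows from the earlier observation $\gamma \le \phi$. The naive continuous bound $\phi(i+1) \le (\phi(i) - \gamma(i)) + \theta_g$ would already exceed $\pi/2$ when $\theta_g$ is close to $\pi/2$, so the restriction of $\Delta$ to the discrete grid is what ultimately saves the inequality.
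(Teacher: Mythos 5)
Your proof is correct and follows a somewhat cleaner path than the paper's, though the two share the same high-level skeleton. Both arguments first establish $\abr{a+b,b} < \pi/4$: the paper does so through a picture-based angle chase split into two subcases, whereas you read it off directly from the law of sines in the triangle with vertices $0$, $a$, $a+b$, using $\gamma(i) > \pi/2$ together with $\phi(i) \ge \gamma(i)$. The second step, controlling the rounding effect, is where the routes genuinely diverge. The paper handles $\theta_g = \pi/2$ and $\theta_g = \pi/R$ with $R\ge 3$ separately, invoking the auxiliary numerical Lemma~\ref{lemma:below90} (itself verified by enumeration of cases) in the latter case. You instead decompose the post-update angle as $|s + \Delta|$ with $s = \arg(\lambda) - [\arg(\lambda)]$, $|s| \le \theta_g/2$, and $\Delta = [\arg(a+b)] - \arg(b)$ a multiple of $\theta_g$; the triangle inequality gives $|\Delta| < \theta_g/2 + \pi/4$, and since no multiple of $\theta_g$ lies in the open interval $(\pi/2 - \theta_g/2,\;\theta_g/2 + \pi/4)$ --- empty when $R\ge 4$ and trivially checked for $R\in\{2,3\}$ --- you get $|\Delta| \le \pi/2 - \theta_g/2$ uniformly, hence $\phi(i+1) \le |s| + |\Delta| \le \pi/2$. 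This eliminates the case split on $R$ and the auxiliary lemma in one stroke, and makes the role of the discrete angle grid explicit (your closing remark that the naive continuous bound fails near $\theta_g = \pi/2$ correctly identifies why the discreteness is essential). The equivariance of nearest-multiple rounding under $\theta_g$-shifts that you invoke is the same device the paper uses implicitly elsewhere (e.g.\ in the proof of Lemma~\ref{lemma:angledecreasing}), so it introduces no new hypothesis; for full rigour one should note that the paper's ``arbitrary but deterministic'' tie-break must itself be translation-equivariant, an assumption the paper also makes tacitly.
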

Furthermore, a decrease cannot be followed by an increase, unless the angle changes:
\begin{restatable}{lemma}{nodecreaseincrease}\label{corr:nodecreaseincrease}
Suppose dimension $\ck$ is just rotating after $N$. It is not possible for $i-1\ge N$, to have $\phi(i-1)=\phi(i) = \phi(i+1) > \pi/2 $ and $\abs{\vecitcomp{x}{i+1}{\ck{-}1}} > \abs{\vecitcomp{x}{i}{\ck{-}1}} < \abs{\vecitcomp{x}{i-1}{\ck{-}1}}$.
\end{restatable}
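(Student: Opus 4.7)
We argue by contradiction. Suppose $i\ge N+1$ witnesses $\phi(i-1)=\phi(i)=\phi(i+1)>\pi/2$ and $A_{i-1}>A_i<A_{i+1}$, writing $A_j:=\abs{\vecitcomp{x}{j}{\ck-1}}$ and $B:=\abs{\vecitcomp{x}{j}{\ck}}$ (constant, since dimension $\ck$ is just rotating). Set $a:=\lambda\vecitcomp{x}{i}{\ck-1}$ and $b:=\vecitcomp{x}{i}{\ck}$. The assumed modulus increase $A_{i+1}>A_i$ yields $\abs{a+b}>\abs{a}$ after absorbing the bounded-effect rounding of moduli, so $\gamma(i)>\pi/2$ would trigger Lemma~\ref{lemma:ulq} and force $\phi(i+1)\le\pi/2$, contradicting $\phi(i+1)>\pi/2$. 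Hence $\gamma(i)\le\pi/2$, and elementary plane geometry yields $\phi(i+1)=\phi(i)-\gamma(i)$ up to the angular quantisation.

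The crux is to extract a quantitative bound on $\gamma(j)$ from the preservation $\phi(j)=\phi(j+1)$. Letting $\tilde\gamma(j)$ denote the signed version of $\gamma(j)$, one verifies that $\arg(\vecitcomp{x}{j+1}{\ck})-\arg(\vecitcomp{x}{j}{\ck})$ equals the fixed amount $\rho:=[\arg\lambda]$ at every step, whereas $\arg(\vecitcomp{x}{j+1}{\ck-1})-\arg(\vecitcomp{x}{j}{\ck-1})$ equals $[\arg\lambda+\tilde\gamma(j)]$. The equality $\phi(j+1)=\phi(j)$ thus forces $[\arg\lambda+\tilde\gamma(j)]=[\arg\lambda]$, pinning $\tilde\gamma(j)$ inside a rounding cell of width $\theta_g$. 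Combined with the law of sines $\sin\gamma(j) = B\sin\phi(j)/\abs{a_j+b_j}$, applied at $j=i-1$ where $\abs{a_{i-1}+b_{i-1}}=A_i$ up to the modulus rounding, this gives a lower bound of the form $A_i\ge \frac{B\sin\phi(i)}{\theta_g}-\Delta$.

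The contradiction now follows from the law of cosines $\abs{a+b}^2 = A_i^2+B^2+2A_i B\cos\phi(i)$: its fixed point $A^*:=B/(2\abs{\cos\phi(i)})$ separates the monotone regions of $A\mapsto\abs{a+b}$, strictly decreasing above $A^*$ and strictly increasing below. For $\phi(i)$ sufficiently far from $\pi/2$ and from $\pi$, concretely when $\abs{\sin(2\phi(i))}>\theta_g$, the lower bound on $A_i$ from the previous paragraph exceeds $A^*$, so $A_{i+1}<A_i$, contradicting $A_{i+1}>A_i$. The boundary case in which $\phi(i)$ is within $\theta_g$ of $\pi/2$ is excluded by $\phi(i)>\pi/2$ together with the shifted-lattice structure of admissible relative angles; and when $\phi(i)$ is within $\theta_g$ of $\pi$, the sum $a+b$ essentially reverses the direction of $a$, so $[\arg(a+b)]$ differs from $\arg(a)$ by approximately $\pi$ rather than by a rounding error, again contradicting $\phi(i+1)=\phi(i)$.

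The main technical obstacle is the simultaneous bookkeeping of the two rounding contributions — the bounded-effect modulus rounding (additive error $\Delta$) and the angular quantisation (lattice spacing $\theta_g$) — so that the margin between the lower bound on $A_i$ and $A^*$ is wide enough to absorb both error sources. The auxiliary hypothesis $\phi(i-1)=\phi(i)$, on top of $\phi(i)=\phi(i+1)$, is precisely what supplies the bound on $\gamma(i-1)$ needed to derive the lower bound on $A_i$ in the first place.
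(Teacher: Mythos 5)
Your proof attempt takes a genuinely different route from the paper's, so let me first describe the difference and then point out where the gaps are.

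The paper's argument is a three-way case split on $\gamma(i-1)$ and $\gamma(i)$ relative to $\pi/2$: the case $\gamma(i)>\pi/2$ is killed by \cref{lemma:ulq}, the mixed case ($\gamma(i-1)>\pi/2$, $\gamma(i)\le\pi/2$) is killed because it would force $\phi$ to change, and the remaining case $\gamma(i-1),\gamma(i)\le\pi/2$ is handled by \cref{lemma:urq}. Lemma~\ref{lemma:urq} is the crux: after a decrease lands at an admissible radius $t$, adding $b$ to a point of modulus $t$ yields modulus at most $r$ (the pre-rounded value that rounded to $t$), and monotonicity of minimal-error rounding then gives $A_{i+1}\le t=A_i$ \emph{without any margin analysis}. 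That is, the paper exploits the rounding's structure qualitatively.

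Your plan replaces Lemma~\ref{lemma:urq} with a quantitative estimate: a lattice argument on angles giving $\gamma(j)<\theta_g$ whenever $\phi(j)=\phi(j+1)$, followed by a law-of-sines lower bound $A_i\gtrsim B\sin\phi/\theta_g$ and a comparison with the law-of-cosines fixed point $A^*=B/(2\abs{\cos\phi})$. This is a reasonable idea (and indeed the lattice bound $\gamma<\theta_g$ is a nice observation that automatically excludes the paper's $\gamma>\pi/2$ cases). But the argument as written does not close:

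\begin{itemize}
\item \emph{The regime $\phi$ close to $\pi/2$.} Your threshold $\abs{\sin(2\phi)}>\theta_g$ fails there, and the claim that this region is ``excluded by $\phi(i)>\pi/2$ together with the shifted-lattice structure'' is not substantiated. The admissible values of $\phi$ lie on the lattice $\{m\theta_g+(\arg\lambda\bmod\theta_g)\}$, and depending on $\arg\lambda$ this lattice can have a point arbitrarily close to $\pi/2$ from above. Nothing structural rules it out. This is precisely the regime where your lower bound $A_i\gtrsim B\sin\phi/\theta_g$ fails to dominate $A^*\approx B/(2(\phi-\pi/2))$, so the contradiction you rely on is simply absent.
\item \emph{The rounding margin.} You name ``simultaneous bookkeeping of the two rounding contributions'' as the main technical obstacle, but you don't actually do it. Even in the range where $\abs{\sin(2\phi)}>\theta_g$, showing $A_i>A^*$ is not yet a contradiction: you need $\abs{a_i+b_i}$ to fall short of $A_i$ by more than the modulus-rounding error so that $A_{i+1}=\abs{[a_i+b_i]}\le A_i$. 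That is a second margin on top of the one absorbing the $-\Delta$ in your lower bound on $A_i$, and for $\phi$ near $\pi/2$ (where the map $A\mapsto\abs{a+b}$ contracts very weakly) there is no such slack. Lemma~\ref{lemma:urq} sidesteps exactly this by using that the previous step already rounded $r$ down to the admissible value $t$, so the next step's pre-rounded modulus is $\le r$ and hence rounds to $\le t$ --- a rounding-monotonicity argument that needs no quantitative margin at all.
\item \emph{The regime $\phi$ close to $\pi$} is handled only informally (``reverses the direction of $a$''), and it again leans on the sign-flip analysis of $\mu$, which you do not carry out.
\end{itemize}

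In short, your lattice observation $\gamma<\theta_g$ is correct and elegant, and your setup of the fixed point $A^*$ is the right picture, but the quantitative comparison you substitute for the paper's Lemma~\ref{lemma:urq} has a genuine hole near $\phi=\pi/2$ and does not account for rounding slack. To repair it you would essentially have to reprove a version of Lemma~\ref{lemma:urq}: after a decreasing step you are sitting at an admissible modulus $t<r$ with $[r]=t$, and the same $b$ added to a radius-$t$ point gives modulus at most $r$, so by monotonicity of the modulus rounding the next step cannot exceed $t$. That qualitative argument, not a margin estimate, is what makes the lemma go through in full generality.
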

Finally, we place a limit on the number of consecutive increases until we can decide that dimension $\ck{-}1$ will not decrease below the current modulus in the future:
\begin{restatable}{lemma}{notdecreasingagain}
  \label{cor:notdecreasingagain}
    Let $a = \lambda \vecitcomp{x}{N}{\ck{-}1}$ and $b = \vecitcomp{x}{N}{\ck}$ for some $N > 0$.
    Suppose that $\ck$ is just rotating after $N$, $\abs{a+b} > \abs{a} + 0.5$ and $\abs{a} \geq \frac{1}{\sqrt{2}}\abs{b}$. Then, for all $i > N$: $\abs{\vecitcomp{x}{i}{\ck{-}1}} > \abs{a}$.
\end{restatable}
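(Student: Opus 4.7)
The plan is to prove $r_i := \abs{\vecitcomp{x}{i}{\ck{-}1}} > \abs{a}$ for all $i > N$ by induction on $i$. The base case $i = N{+}1$ uses that polar rounding applies minimal error rounding to the modulus (granularity $1$), so $\bigl|\,\abs{[z]} - \abs{z}\,\bigr| \leq 1/2$ for any $z$; combined with the hypothesis $\abs{a+b} > \abs{a} + 1/2$, this gives $r_{N+1} = \abs{[a+b]} \geq \abs{a+b} - 1/2 > \abs{a}$.

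For the inductive step, write $a_i := \lambda \vecitcomp{x}{i}{\ck{-}1}$ and $b_i := \vecitcomp{x}{i}{\ck}$. Because dimension $\ck$ is just rotating and $\abs{\lambda}=1$, one has $\abs{b_i} = \abs{b}$ for all $i \geq N$; and by \Cref{corollary:phidecreasing}, $\phi(i) = \abr{a_i, b_i}$ is non-increasing, so $\cos\phi(i) \geq \cos\phi(N)$. Writing $r_i = \abs{a} + \delta_i$ with $\delta_i > 0$ (inductive hypothesis), I aim for the stronger bound $\abs{a_i + b_i} > \abs{a} + 1/2$; then $r_{i+1} \geq \abs{a_i + b_i} - 1/2 > \abs{a}$ follows from the rounding bound as in the base case. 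Expanding $\abs{a_i + b_i}^2 = r_i^2 + \abs{b}^2 + 2r_i\abs{b}\cos\phi(i)$ with $r_i = \abs{a} + \delta_i$ gives the decomposition
\[
\abs{a_i + b_i}^2 = \bigl(\abs{a}^2 + \abs{b}^2 + 2\abs{a}\abs{b}\cos\phi(i)\bigr) + \delta_i\bigl(2\abs{a} + \delta_i + 2\abs{b}\cos\phi(i)\bigr).
\]
Using $\cos\phi(i) \geq \cos\phi(N)$ and the hypothesis, the first bracket is at least $\abs{a+b}^2 > (\abs{a}+1/2)^2$, so it only remains to show that the second bracket is non-negative.

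Unpacking $\abs{a+b}^2 > (\abs{a}+1/2)^2$ yields $2\abs{b}\cos\phi(N) > 1 + 1/(4\abs{a}) - \abs{b}^2/\abs{a}$, and combining this with $\cos\phi(i) \geq \cos\phi(N)$ gives
\[
2\abs{a} + \delta_i + 2\abs{b}\cos\phi(i) \;\geq\; 2\abs{a} + \delta_i + 2\abs{b}\cos\phi(N) \;>\; \bigl(2\abs{a} - \abs{b}^2/\abs{a}\bigr) + \delta_i + 1 + 1/(4\abs{a}).
\]
The role of the hypothesis $\abs{a} \geq \abs{b}/\sqrt{2}$ (equivalently $\abs{b}^2 \leq 2\abs{a}^2$) is precisely to guarantee $2\abs{a} - \abs{b}^2/\abs{a} \geq 0$, making the right-hand side strictly positive. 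This closes the induction. The crux of the argument is exactly this verification that the correction term $\delta_i(2\abs{a} + \delta_i + 2\abs{b}\cos\phi(i))$ is non-negative: the condition $\abs{a} \geq \abs{b}/\sqrt{2}$ in the statement is tailored to make this term behave, which is where almost all the geometric work of the proof concentrates.
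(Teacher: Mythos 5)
Your proof is correct, and at its core it is the same induction as the paper's: maintain $\abs{\vecitcomp{x}{i}{\ck{-}1}} > \abs{a}$, use the law of cosines with the two monotonicity facts ($\abs{b_i}$ constant because $\ck$ is just rotating, $\cos\phi(i)$ nondecreasing by \cref{corollary:phidecreasing}), and close by showing the modulus of the unrounded update exceeds $\abs{a}+1/2$. The one place the two arguments diverge is how they certify the sign of the cross term. The paper isolates the needed inequality $\abs{a} \geq -\abs{b}\cos\phi$ as a geometric statement about the angle $\gamma = \abr{a+b,a}$ and proves $\gamma \leq \pi/2$ as a separate lemma (\cref{cor:notdecreasingagain:p2}), which is then fed into the inductive lemma (\cref{cor:notdecreasingagain:p1}). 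You instead unpack the hypothesis $\abs{a+b}^2 > (\abs{a}+1/2)^2$ directly into a lower bound on $2\abs{b}\cos\phi(N)$, and then the hypothesis $\abs{b}^2 \leq 2\abs{a}^2$ makes $2\abs{a} + \delta_i + 2\abs{b}\cos\phi(i)$ manifestly positive — the same inequality, reached algebraically rather than by introducing $\gamma$. This inlining is arguably cleaner since it removes the auxiliary lemma; what it costs you is the geometric picture (why the $1/\sqrt{2}$ threshold appears), which the paper's factoring makes explicit via the angle $\gamma$. One small point worth being explicit about: your rounding bound $\bigl|\abs{[z]}-\abs{z}\bigr| \leq 1/2$ assumes minimal error rounding on the modulus with granularity $1$; the paper states the modulus rounding can be arbitrary, but its own proof of \cref{cor:notdecreasingagain:p1} silently uses the same $1/2$ bound, so you are on equal footing — but it would be tidy to flag the assumption (or replace $1/2$ by the rounding's $\Delta$) so the dependence is visible.
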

With Lemmata \ref{lemma:willperiod}-\ref{cor:notdecreasingagain} we are in a position to prove~\Cref{thm:polar-expspace} (the proofs of the preceding lemmata, and the $\EXPSPACE$ analysis can be found in \cref{appen:sec:polar}).

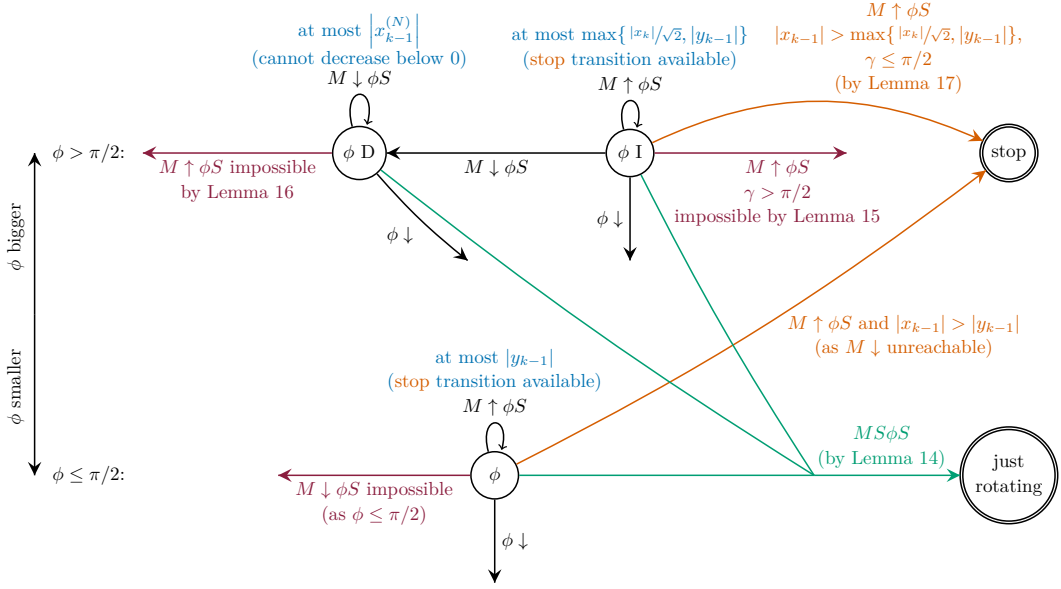
\begin{figure}[t]
\centering

\resizebox{\textwidth}{!}{%

\begin{tikzpicture}[auto,node distance=2.8cm,
                    thick]
  \tikzstyle{every state}=[]
  \tikzset{myptr/.style={decoration={markings,mark=at position 1 with %
    {\arrow[scale=1.4,>=stealth,very thick]{>}}},postaction={decorate}}}

  \node[state] (A) at (4,0) {$\phi$ I};
  \node[state] (A1) at (-1,0) {$\phi$ D};

  \node[state,accepting] (stop) at (11,0) {stop};

  \node (periodic) at (9,-6) {};

  \node[state,accepting,align=center] (jr) at (11,-6) {just\\ rotating};

  \node[state] (B) at (1.5,-6) {$\phi$};

  \node[right] at (-6.8,0) {$\phi > \pi/2$:};
  \node[right] at (-6.8,-6) {$\phi \le \pi/2$:};

  \path 

  (-7,-3) edge[myptr] node[rotate=90,yshift=10,xshift=20] {$\phi$ bigger} (-7,0)

  (-7,-3) edge[myptr] node[rotate=90,yshift=10,xshift=-20] {$\phi$ smaller} (-7,-6)

  (A) edge[myptr,swap] node {$\phi \downarrow$} ($(A) + (0,-2)$)
  (A1) edge[myptr,swap,bend right=5] node {$\phi \downarrow$} ($(A1) + (2,-2)$)

  (A1) edge[loop above] node(Z) {$M\downarrow\phi S $} (A1)
  (A) edge[loop above] node (Z1) {$M\uparrow\phi S $} (A)

  (A) edge[myptr] node {$M\downarrow\phi S $} (A1)
  (A) edge[myptr, bend left=25,ApyVermillion] node[align=center,xshift=1.5cm] {$M\uparrow\phi S $\\ $\abs{x_{k-1}} > \max\{\sfrac{\abs{x_{k}}}{\sqrt{2}}, \abs{\target_{k-1}}\}$,\\ $\gamma \le \pi/2$\\ (by \cref{cor:notdecreasingagain})} (stop)

  (B) edge[loop above] node (Z2) {$M\uparrow\phi S $} (B)
  (B) edge[myptr] node {$\phi \downarrow$} ($(B) + (0,-2)$)

    (A) edge[myptr,MRed,swap] node[align=center,xshift =5mm] {$M\uparrow\phi S$\\ $\gamma >\pi/2$\\ impossible by \cref{lemma:ulq}} ($(A) + (+4,0)$) 

  (A1) edge[myptr,MRed] node[align=center] {$M\uparrow\phi S$ impossible\\ by \cref{corr:nodecreaseincrease}} ($(A1) + (-4,0)$) 

  (B) edge[myptr,MRed] node[align=center] {$M\downarrow\phi S$ impossible \\ (as $\phi \le \pi/2$)} ($(B) + (-4,0)$) 

  (B) edge[myptr,bend right=5,swap,ApyVermillion] node[align=center,yshift=4mm,xshift=4mm] {$M\uparrow\phi S$ and $\abs{x_{k-1}}  > \abs{\target_{k-1}}$\\ (as $M\downarrow$ unreachable)} (stop)

  (A) edge[ApyBluishGreen, bend right=3] node {} (7.4,-6)
  (A1) edge[ApyBluishGreen,,swap, bend right=3] node {} (7.4,-6)
  (B) edge[ApyBluishGreen,] node {} (7.4,-6)
  (7.4,-6) edge[ApyBluishGreen,myptr] node[align=center,xshift=-1mm] {$MS\phi S$\\ (by \cref{lemma:willperiod})} (jr)

  ;

  \node[ApyBlue,align=center] at ($(Z) + (0,0.7)$) {at most $\abs{x^{(N)}_{k-1}}$\\(cannot decrease below $0$)};
  \node[ApyBlue,align=center] at ($(Z1) + (0,0.7)$) {at most $\max\{\sfrac{\abs{x_k}}{\sqrt{2}},\abs{\target_{k-1}}\}$\\({\color{ApyVermillion}stop}  transition available) };
  \node[ApyBlue,align=center] at ($(Z2) + (0,0.7)$) {at most $\abs{\target_{k-1}}$ \\({\color{ApyVermillion}stop} transition available)};

\end{tikzpicture}

}%

\caption{State diagram for $\phi$ whilst considering dimension $\ck{-}1$, assuming $\ck$ is \textit{just rotating}.}
\label{fig:state-machine}
\end{figure}

\begin{proof}[Proof of \cref{thm:polar-expspace}]
As described above, we consider each dimension separately, starting with $\ck{} = \dimension$, and assume by induction that the previous dimension is just rotating.
We describe an algorithm that tracks the value of $\phi$ and operates according to \cref{fig:state-machine}.
Each realisable value of $\phi$ relates to a copy of \cref{fig:state-machine} (we only draw one example of $\phi$ satisfying $\phi > \pi/2$ and $\phi \leq \pi/2$ respectively).
For $\phi > \pi/2$ two states are used, one which encodes that the previous transition was decrementing the modulus ($\phi$ D), the other which indicates the previous was not decrementing (including first arrival) ($\phi$ I).

The algorithm moves on each update step according to the arrow, which denotes whether the update is modulus increasing $M\uparrow$, decreasing $M\downarrow$ or stationary $MS$.
Similarly $\phi$ may decrease $\phi\downarrow$ or stay stationary $\phi S$, but never increase (\cref{corollary:phidecreasing}).
Whenever $\phi$ decreases we make progress through the DAG to a lower value of $\phi$.
All combinations $\{M\uparrow,M\downarrow,MS\}\times \{\phi\downarrow,\phi S\}$ are accounted for at each state.

Progress is made whenever we move through the DAG towards a stopping criterion.
For self-loops a bound is provided (in blue) on the maximum time spent in this state.
Since for each dimension we will ultimately end up in just rotating, or be able to stop early, the problem is decidable. 
\end{proof}

\begin{figure}[t]
\begin{subfigure}{.5\textwidth}
\centering
\includegraphics[width=0.95\textwidth]{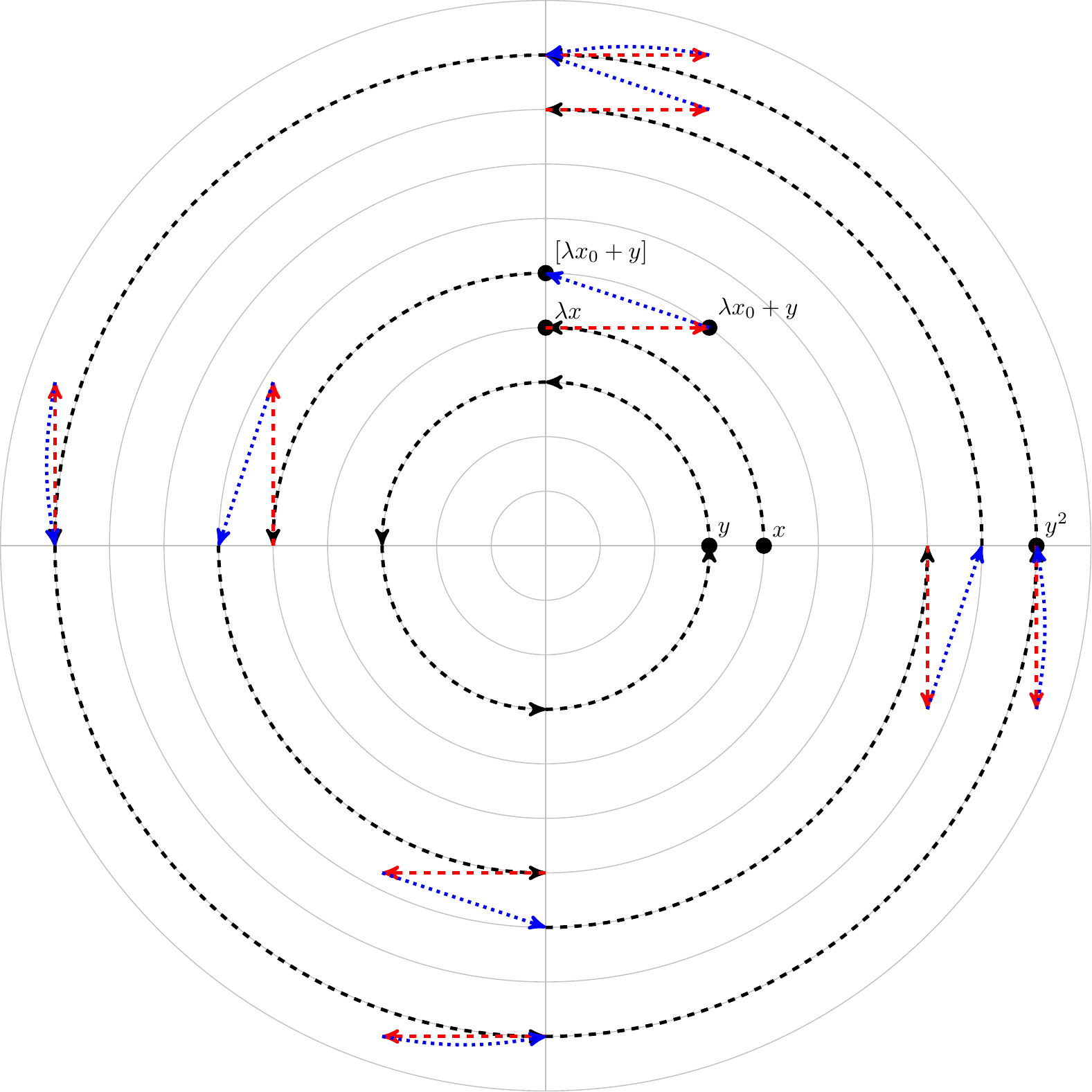}

\end{subfigure}
\begin{subfigure}{.5\textwidth}
\includegraphics[width=0.95\textwidth]{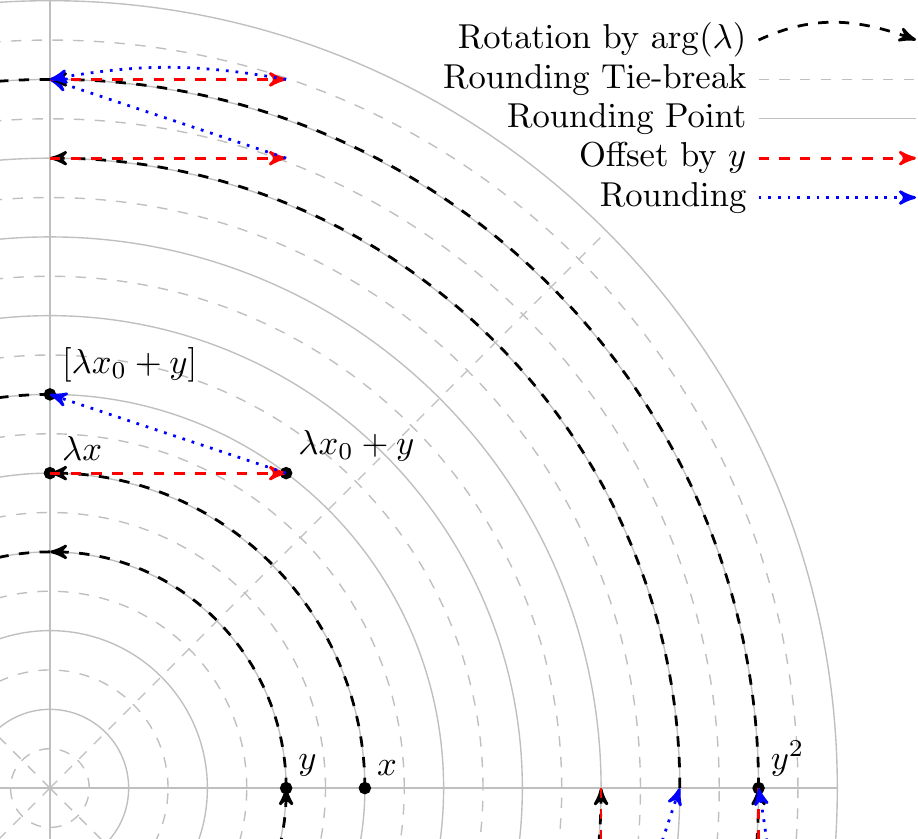}
\end{subfigure}
\caption{Example where the system may become large before being periodic (see~\cref{eg:expspace-req}). }
\label{fig:eg-large-period}
\end{figure}

\begin{example}[System requiring $\EXPSPACE$ to be periodic]\label{eg:expspace-req}
If $\phi\le \pi/2$, the considered dimension will either diverge at some point, or become periodic.
This depends, essentially, on whether $\abs{\vecitcomp{x}{i}{\ck}}\cos(\phi)< 0.5$, in which case the rounding will not lead to an increase when $\abs{\vecitcomp{x}{i}{\ck{-}1}}$ is sufficiently large relative to $\abs{\vecitcomp{x}{i}{\ck}}$.
We give an example where $\abs{\vecitcomp{x}{i}{\ck{-}1}}$ grows to $\abs{\vecitcomp{x}{i}{\ck}}^2$, and requires numbers of doubly exponential size (and exponential space) in $\dimension$ before becoming periodic.
We assume that $\theta_g = \pi/2$ (so there are four possible angles) and integer modulus granularity.
Let $M$ be a single Jordan block of dimension $\dimension$ with eigenvalue $\lambda = e^{i\pi/2}$. The angle $\phi(i)$ remains constant, but the modulus grows while $\abs{\vecitcomp{x}{i}{\ck}} < \abs{\vecitcomp{x}{i}{\ck{-}1}} \le \abs{\vecitcomp{x}{i}{\ck}}^2$.
We start at the point $\vecit{x}{0} = ((3 + \dimension,0), \dots,(6,0),(5,0),(4,0))$, using the representation that $Ae^{i\theta}$ is written $(A,\theta)$. This system is periodic, with maximal component $\vecit{x}{N} = ((4^{(2^{\dimension-1})},0), \dots,(4^{2\cdot 2\cdot 2},0),(4^{2\cdot 2},0),(4^2,0),(4,0))$. Note that $4^{2^4}$ is larger than a 32-bit number.
This idea is illustrated in~\cref{fig:eg-large-period}, where $y$ represents $\vecitcomp{x}{i}{\ck}$ and is just rotating, and $x$ represents  $\vecitcomp{x}{i}{\ck{-}1}$, which grows to $\abs{y}^2$.
\end{example}

Despite \cref{eg:expspace-req}, which shows that waiting until becoming periodic may need exponential space, we conjecture the Rounded P2P can be solved in $\PSPACE$.
This is because if $\vecitcomp{x}{i}{1}$ exceeds a value representable in polynomial space we expect it will never return to the target $\target_1$ (a value representable in polynomial space).
However, we are unable to show at the moment that it never gets very large and subsequently returns to a small value.

\subsection{Argand truncation or expansion in Jordan normal form}\label{sec:truncated}
We now consider Argand truncation based rounding showing decidability in $\EXPSPACE$. The rounding function is of the form $[a+bi] = [a]+[b]i$ where, for $x\in \mathbb{R}$ ,  $[x] =\floor{x}$ if $x \ge 0$ and $[x] = \ceil{x}$ if $x < 0$, which has a non-increasing effect on the modulus.

\begin{restatable}{theorem}{jnftruncation}\label{thm:jnf-tw-zero}
The Rounded P2P Problem is decidable in $\EXPSPACE$ for deterministic Argand rounding function with a non-increasing effect on the modulus and matrices $M \in \algs^{\dimension \times \dimension}$ in Jordan normal form.
\end{restatable}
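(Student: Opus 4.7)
The plan is to reduce to a single Jordan block with eigenvalue $\lambda$ of modulus $1$ and then analyse the dynamics inductively from the bottom component upwards. First, I would note that Argand truncation has bounded (in fact non-increasing) effect on the modulus: for any $z=a+bi$ one has $\abs{[z]}^2 = [a]^2 + [b]^2 \le a^2 + b^2 = \abs{z}^2$, with equality iff both $a$ and $b$ already lie on the granularity lattice. In particular, for any Jordan block whose eigenvalue does not have modulus $1$, the algorithm of \cref{thm:jnf-bounded-rounding} can be run in lock step. It thus remains to treat a single Jordan block of dimension $d$ with $\abs{\lambda}=1$.

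For the bottom component, the update $\vecitcomp{x}{i+1}{d} = [\lambda \vecitcomp{x}{i}{d}]$ combined with $\abs{\lambda}=1$ and the inequality above gives $\abs{\vecitcomp{x}{i+1}{d}} \le \abs{\vecitcomp{x}{i}{d}}$. Since the sequence of moduli takes values in the discrete set $g\sqrt{\naturals}$, it must stabilise at some value $c \ge 0$ after finitely many steps. Once stable, the inequality must hold with equality at every subsequent step, which forces $\lambda\vecitcomp{x}{i}{d}$ to already lie on the lattice; the truncation then has no effect and $\vecitcomp{x}{i}{d}$ evolves by pure rotation. Because only finitely many lattice points have modulus $c$, the sequence is ultimately periodic, and either $c=0$ or $\lambda$ is a root of unity of computable order.

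Building on this, I would induct downwards on $k$: assuming $\vecitcomp{x}{i}{k}$ is eventually periodic with constant modulus $c_k$, I analyse the update $\vecitcomp{x}{i+1}{k-1} = [\lambda \vecitcomp{x}{i}{k-1} + \vecitcomp{x}{i}{k}]$, which rotates $\vecitcomp{x}{i}{k-1}$, adds a periodic forcing term, and then applies a modulus-non-increasing truncation. The analysis splits into two cases: either $\abs{\vecitcomp{x}{i}{k-1}}$ stays confined in a suitable exponentially bounded ball, in which case the orbit is eventually periodic by pigeonhole on lattice points, or $\abs{\vecitcomp{x}{i}{k-1}}$ grows past a computable threshold beyond which the contribution of $\vecitcomp{x}{i}{k}$ is too small ever to bring it back to $\abs{\target_{k-1}}$. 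The algorithm then simulates step by step until one of these outcomes, or the target, is detected.

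The main obstacle I anticipate is establishing the exponential confinement bound at each dimension and translating this into the claimed \EXPSPACE{} upper bound. As in the polar setting (cf.\ \cref{eg:expspace-req}), the inductive blow-up across the $d$ components may force the orbit to values of doubly exponential magnitude before stabilising, so each coordinate can require exponentially many bits. Nevertheless, storing the current point and an exponential-length step counter within exponential space suffices, and both the target check and the detection of repetition within the at most doubly exponentially large lattice state space can be carried out within that budget.
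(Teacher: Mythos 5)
Your first two paragraphs are on track: the reduction to a single Jordan block with $\abs{\lambda}=1$, the monotone non-increase of $\abs{\vecitcomp{x}{i}{\dimension}}$, and the observation that the bottom component stabilises at some lattice modulus $c$ after which the truncation has no effect, are all correct and match the paper. The gap is in your inductive step, which as stated is not sound. You assume the $(k{-}1)$th component either (a) stays confined in an exponentially bounded ball, in which case pigeonhole gives periodicity, or (b) grows past a threshold beyond which the bounded forcing from dimension $k$ can never pull it back to $\abs{\target_{k-1}}$. This dichotomy is exactly what fails when $\abs{\lambda}=1$: multiplication by $\lambda$ preserves modulus, the forcing from dimension $k$ is bounded, and the rounding never increases modulus, so nothing a priori rules out the orbit oscillating between large and small moduli forever — precisely the phenomenon that makes the modulus-$1$ case hard and that the hyperbolic-style threshold argument cannot handle.

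The paper closes this gap with a number-theoretic argument you touch on but do not exploit. Once the $d$th component stabilises at modulus $c>0$, the orbit of that coordinate must lie exactly on the lattice (no rounding whatsoever), i.e.\ $\lambda$ maps lattice points of modulus $c$ to lattice points of modulus $c$. Applying \cref{thm:nivens} then forces $\arg(\lambda)$ to be a multiple of $\pi/2$; the intermediate cases $\pi/4, \pi/3, \pi/6$ are ruled out by rationality/irrationality considerations on $\sin$, $\cos$, $\tan$ and the absence of a $45^\circ$ Pythagorean triangle. Crucially, in that rounding-free regime the whole orbit coincides with the unrounded linear dynamics, and the explicit formula $\vecitcomp{x}{n}{\dimension-1} = \lambda^n \vecitcomp{x}{0}{\dimension-1} + n\lambda^{n-1}\vecitcomp{x}{0}{\dimension}$ shows monotone divergence of the $(d{-}1)$th component, giving a clean stopping criterion — so the paper never even needs to analyse components $1,\dots,d{-}2$ in this branch. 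The only other branch is $c=0$, where the $d$th coordinate vanishes permanently and the instance reduces to dimension $d{-}1$, and it is this chain of dimension reductions that drives the doubly-exponential bound on coordinate magnitudes (and hence the $\EXPSPACE$ bound). Your proposal, by not using the root-of-unity forced structure, lacks the mechanism that makes the "diverges monotonically" branch provable rather than merely wished for.
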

As a key ingredient of \cref{thm:jnf-tw-zero} we will make use of the following theorem:

\begin{theorem}[{\cite[Corollary 3.12, p.41]{niven1956irrational}}]\label{thm:nivens}
Both $x$ is a rational multiple of $\pi$ and $\sin(x)$ is rational only at $\sin(x) = 0, \frac{1}{2}, \text{ or } 1$. Both $x$ is a rational multiple of $\pi$  and $\cos(x)$ are rational only at $\cos(x) = 0, \frac{1}{2},$ or $1$. Both $x$ is a rational multiple of $\pi$  and $\tan(x)$ are rational only at $\tan(x) = 0,$ or $\pm1$.
\end{theorem}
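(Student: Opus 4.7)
The plan is to derive all three statements from a single application of the algebraic-integer principle to the cosine case, then reduce the sine and tangent statements to it via elementary trigonometric identities.

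For the cosine statement, I would suppose that $x$ is a rational multiple of $\pi$, say $x = \pi p/q$ with $p,q \in \mathbb{Z}$ and $q>0$. Then $\zeta := e^{ix}$ is a $2q$-th root of unity, hence an algebraic integer, and so is $\zeta^{-1}$. Since the ring of algebraic integers is closed under addition, $2\cos(x) = \zeta + \zeta^{-1}$ is an algebraic integer. If $\cos(x)$ is assumed rational, then $2\cos(x)$ is a rational algebraic integer; by the standard fact that $\mathbb{Q} \cap \overline{\mathbb{Z}} = \mathbb{Z}$ (a direct consequence of the rational root theorem applied to the monic minimal polynomial), $2\cos(x) \in \mathbb{Z}$. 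Combined with $|2\cos(x)| \le 2$, this forces $\cos(x) \in \{0, \pm\tfrac{1}{2}, \pm 1\}$, matching the claim.

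The sine statement then follows from the identity $\sin(x) = \cos(\pi/2 - x)$: if $x$ is a rational multiple of $\pi$, so is $\pi/2 - x$, so the admissible rational values of $\sin(x)$ coincide with those of $\cos(\cdot)$ evaluated at a rational multiple of $\pi$. For the tangent statement, I would use the double-angle identity $\cos(2x) = (1-\tan^2 x)/(1+\tan^2 x)$. Writing $\tan(x) = p/q$ in lowest terms, $\cos(2x) = (q^2-p^2)/(q^2+p^2)$ is rational, and $2x$ is still a rational multiple of $\pi$, so the cosine case forces $\cos(2x) \in \{0, \pm\tfrac{1}{2}, \pm 1\}$. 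A short case check completes the proof: $\cos(2x)=1$ gives $p=0$ so $\tan(x)=0$; $\cos(2x)=0$ gives $p^2=q^2$ so $\tan(x)=\pm 1$; $\cos(2x)=\pm\tfrac{1}{2}$ would require $p^2/q^2 \in \{1/3,3\}$, impossible for integers; and $\cos(2x)=-1$ would require $q=0$, excluded since $\tan(x)$ is defined.

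The principal (and essentially only substantive) step is verifying that $2\cos(x)$ is an algebraic integer, and then that a rational algebraic integer is an ordinary integer. Everything else is elementary identity manipulation and a small finite case analysis. No deeper machinery — cyclotomic or Galois theory, Chebyshev polynomial identities, or minimal polynomial degree formulas — is needed beyond the two ingredients that roots of unity are algebraic integers and that $\mathbb{Q} \cap \overline{\mathbb{Z}} = \mathbb{Z}$.
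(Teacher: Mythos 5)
The paper does not prove this statement; it cites it as a classical result (Niven's theorem, \cite[Corollary~3.12]{niven1956irrational}), so there is no in-paper proof to compare against. Your proof is correct and complete. It is worth noting the relationship to Niven's own argument: Niven also reduces everything to showing that $2\cos x$ is an algebraic integer and then invokes $\mathbb{Q} \cap \overline{\mathbb{Z}} = \mathbb{Z}$ together with the bound $|2\cos x| \le 2$, but he establishes algebraic-integrality via a Chebyshev-type recurrence---$2\cos(n\theta)$ is a monic integer polynomial in $2\cos\theta$, and specializing $n$ to the denominator of $x/\pi$ makes $2\cos(n\theta) = \pm 2$, exhibiting a monic integer polynomial that $2\cos\theta$ satisfies. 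You instead observe that $e^{ix}$ is a root of unity, hence an algebraic integer, and that $2\cos x = e^{ix} + e^{-ix}$ is a sum of algebraic integers; this is cleaner if one is willing to take the integral-closure property of $\overline{\mathbb{Z}}$ as known, whereas Niven's recurrence is self-contained and more elementary. Your reductions of the sine case via $\sin x = \cos(\pi/2 - x)$ and the tangent case via $\cos 2x = (1-\tan^2 x)/(1+\tan^2 x)$, with the finite case check, are the standard ones and are carried out correctly. One small remark on the statement itself: as printed in the paper the allowed values are listed as $0, \tfrac{1}{2}, 1$ (and $0, \pm 1$ for tangent), implicitly meaning absolute values; your derivation correctly yields $\{0, \pm\tfrac{1}{2}, \pm 1\}$, which is the precise set.
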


\begin{proof}[Proof sketch of \cref{thm:jnf-tw-zero}]
Without loss of generality we consider only a single Jordan block with $\abs{\lambda} = 1$, as the remaining blocks can be handled in lock step (using the algorithm of \cref{thm:jnf-bounded-rounding} if the eigenvalue is not of modulus one).
Consider the $\dimension$th component. At each step, whenever rounding takes place, then there is some decrease in the modulus. Thus, either the coordinate hits zero (and stays forever), or it stabilises and becomes periodic (with no rounding ever occurring again). 
The $\dimension$th coordinate can be simulated until this happens.
At this point, if its modulus is not $\abs{\target_\dimension}$, $\target$ will not be reached in the future and we return $\textsc{no}$. 

If dimension $x_\dimension$ reaches zero, then this dimension from some point on becomes irrelevant and the instance can be reduced to an instance of dimension $d-1$. Note that this case must occur if $\arg(\lambda)$ is not a root of unity as an irrational point is found infinitely often.

In the case where $x_\dimension$ does not reach zero, then it is periodic at some modulus. This implies it never rounds again, and so surely hits integer points at every step. We show that this can only occur if $\arg(\lambda)$ is a multiple of $\pi/2$.  Assume that $\arg(\lambda)$ is not a multiple of $\pi/2$: the rotation of a point with integer coordinate to integer coordinate leads to the conclusion of either rational tangent or rational sine and cosine. By \cref{thm:nivens} a rational tangent alongside a rational angle ($\arg(\lambda)$ is a root of unity) implies that the angle must be a multiple of $\pi/4$. It is not $\pi/4$, as there is no Pythagorean triangle with angle $\pi/4$. By \cref{thm:nivens} rational sine and cosine and rational angle concludes the angle must be a multiple of $\pi/2$.
Finally, we show that when $\arg(\lambda)$ is a multiple of $\pi/2$ the system surely diverges at dimension $\dimension{-}1$, and hence we can put a bound on how far we need to simulate. 
\end{proof}

\begin{remark*}[{Argand expansion in Jordan Normal Form}]
Instead of considering the rounding function to always decrease the modulus, we consider the rounding function to always increase the modulus. Then, by the same rationality argument either $\arg(\lambda)$ is a multiple of $\pi/2$ (so no rounding occurs and standard methods can be applied), or $\arg(\lambda)$ is not a multiple of $\pi/2$ and rounding is applied infinitely often. We observe that rounding infinitely often results in divergence. Suppose instead the modulus converges, in supremum, to $C$. However the $[C]$-ball is finite, thus rounding infinitely often must eventually exhaust the set, contradicting supremacy.  Since divergence occurs in the $d$th component the system can be iterated until either $\vecit{x}{i} = y$ or $\vecitcomp{x}{i}{\dimension}$ exceeds $y_\dimension$. (Unless $\vecitcomp{x}{0}{\dimension} = \target_\dimension=0 $, in which case the $d$th component can be deleted.)
\end{remark*}
\section{Discussion of open problems}\label{sec:problemswith2x2}
\begin{sloppypar}
In this section we consider the following open problem, which already exhibits a technical difficulty for a relatively simple instance.

\begin{openproblem}\label{openproblem:klabr}
Under which deterministic bounded-effect rounding functions does the Rounded P2P Problem become decidable (even when restricted to Jordan normal form)?
\end{openproblem}

In particular we emphasize that even decidability of the Rounded P2P Problem in the case of a 2D \textit{rotation} matrix remains open. This should be compared to the papers \cite{LowensteinHV1997, BosioV1999, VivaldiV2003, Vivaldi2006, AkiyamaP2013, PethoTW2019}, which consider linear maps on $\reals^2$ that are close to rotations, and the floor rounding $\lfloor \cdot\rfloor$ is used to induce discretized maps on $\ints^2$. The conjecture made in \cite{Vivaldi2006} that all orbits of these maps are eventually periodic (and thus finite) is, to the best of our knowledge, still open in general. This lack of understanding of the dynamics of rotations even on a $2$-dimensional lattice is striking and hints at an intrinsic level of difficulty in dealing with eigenvalues of modulus $1$.

\end{sloppypar}

\begin{figure}\centering
\noindent\begin{tabular}{@{}c@{}c@{}c@{}c@{}}
	\includegraphics[width=0.243\textwidth]{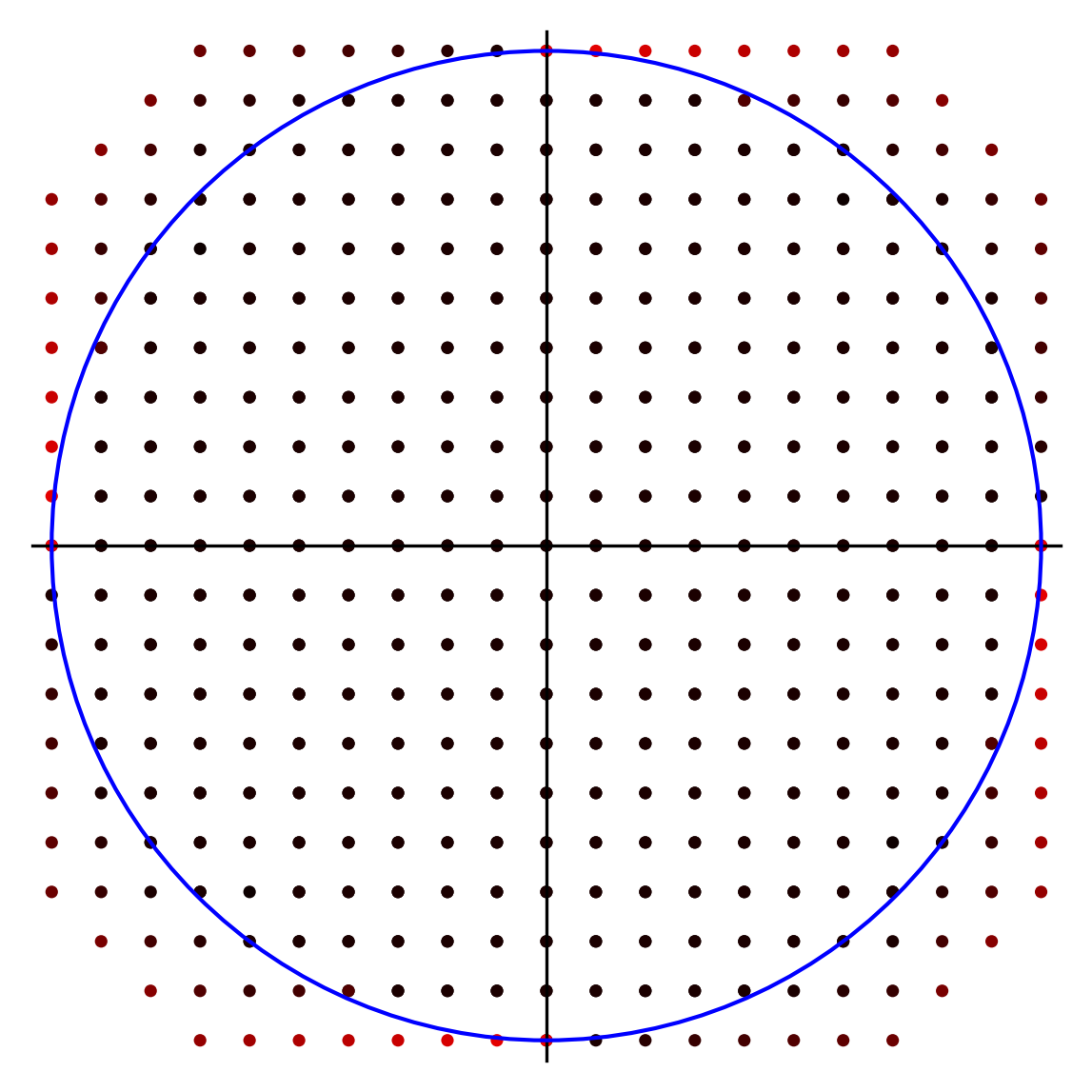}&%
	\includegraphics[width=0.243\textwidth]{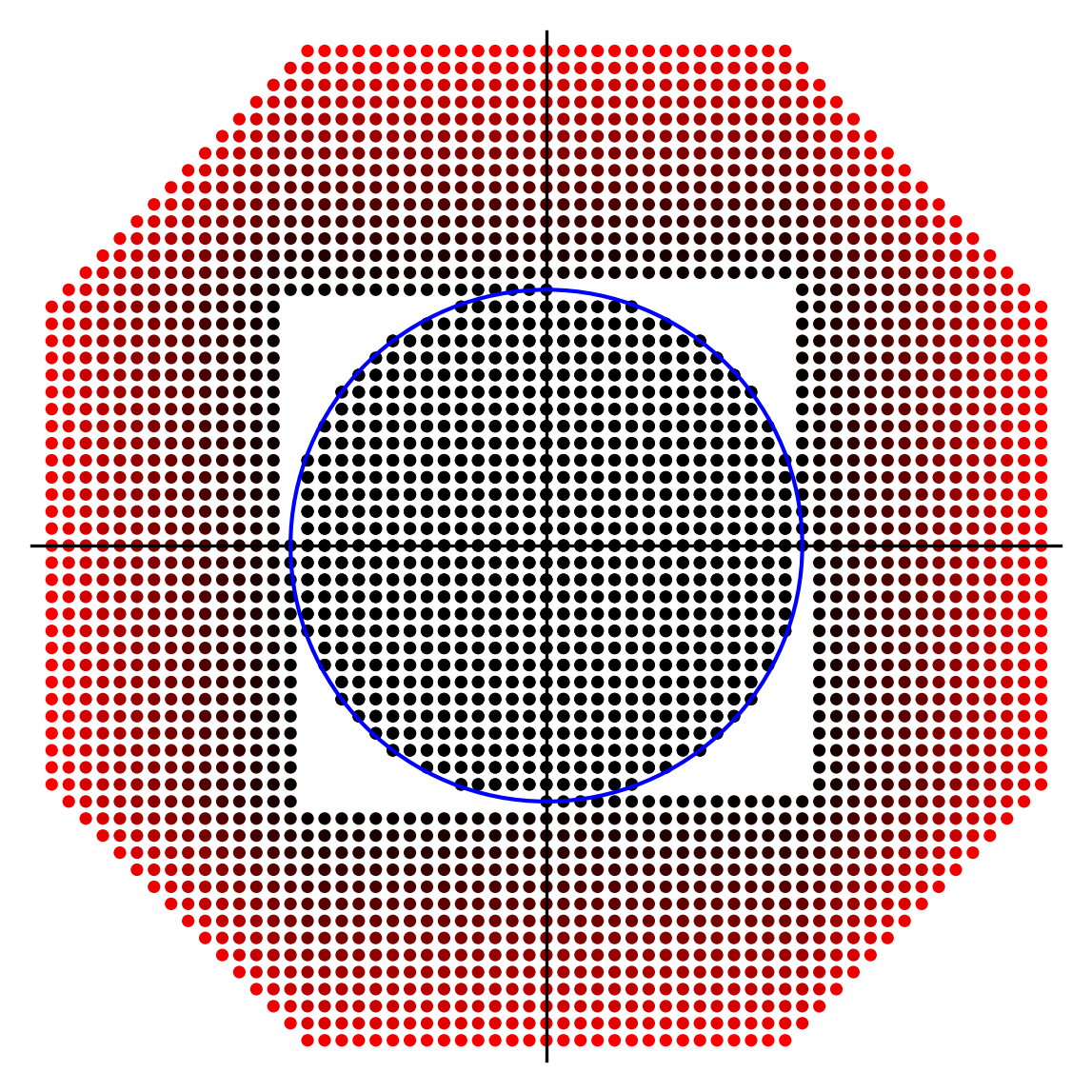}&%
	\includegraphics[width=0.243\textwidth]{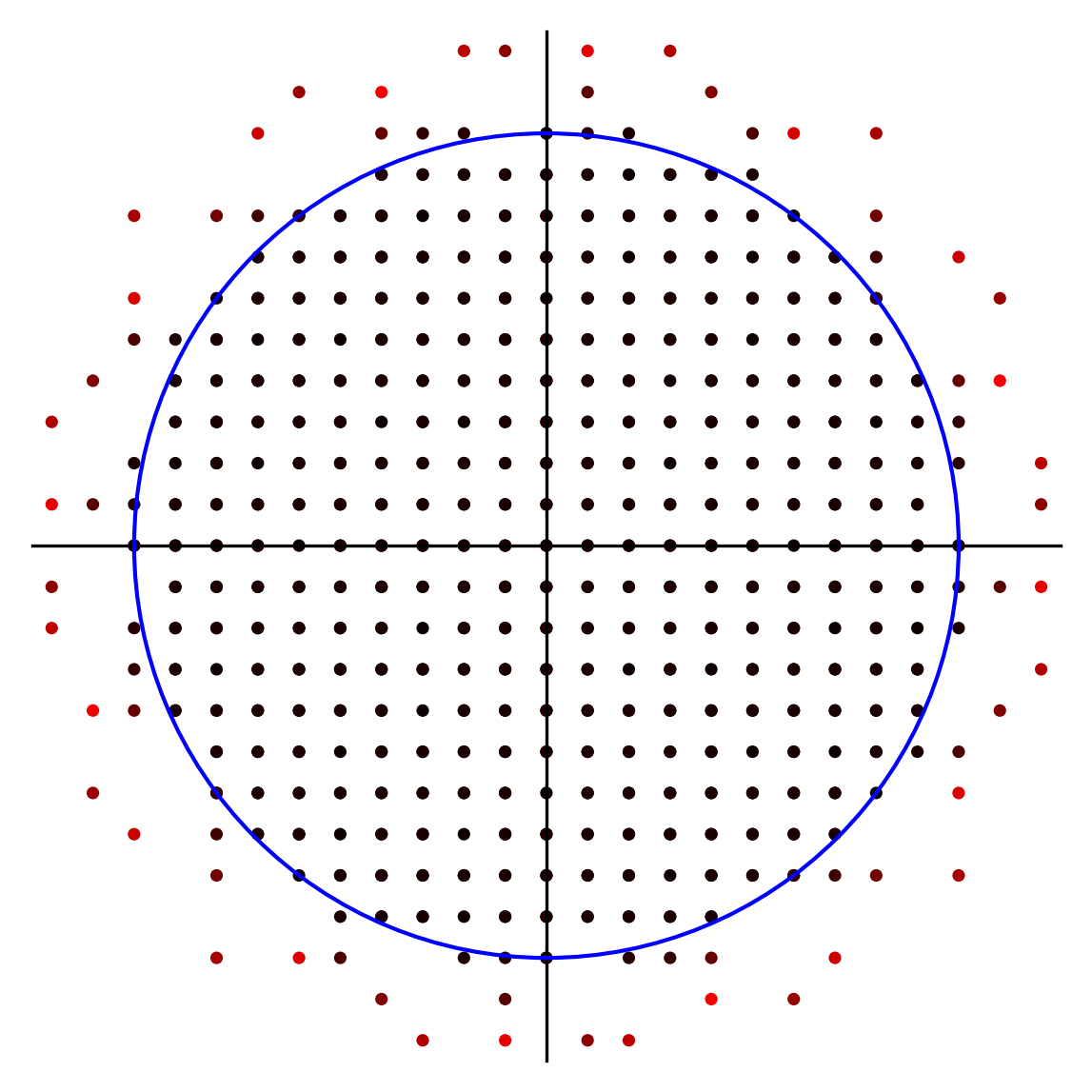}&%
	\includegraphics[width=0.243\textwidth]{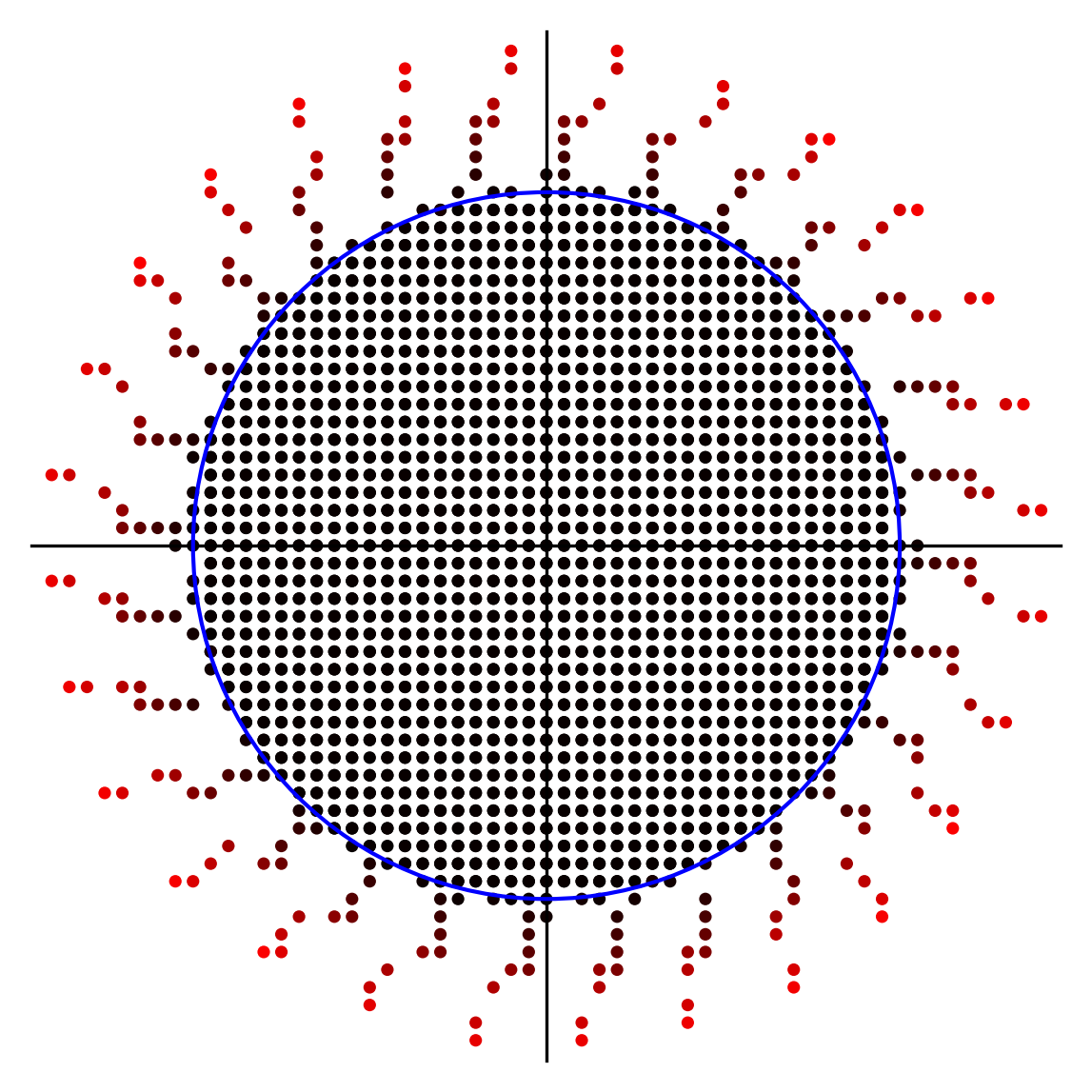}\\
		(a) $r=10$, $\theta=\pi/42$
	&
	(b) $r=15$, $\theta=\pi/91$
	&
	(c) $r=10$, $\theta=\frac{2^{(0.4)}}{10}\pi$
	&
	(d) $r=20$, $\theta=\pi/14$
\end{tabular}
\caption{Rotational examples. We start with all points in the circle of radius $r$, and consider the effect of rotating every point by $\theta$, followed by minimal error rounding. This can be seen as viewing the combined orbits, starting at several points. Redder points are added in later generations.}\label{fig:rotations}
\end{figure}

We ran experiments on the behaviour of rounded orbits induced by rotations in the plane. Four prototypical results are depicted in \cref{fig:rotations}. We note that in every one of our examples  the orbits  eventually become  periodic. Moreover, all experiments fall into the four categories of \cref{fig:rotations}, i.e., where the resulting set consists of (a) a square with cut-off corners, (b) this same square, but with a central square cut out, and  (c) all points within the circle with some seemingly randomly added points outside (in the case of an irrational multiple of $\pi$), (d) the initial circle with added `tentacles' occuring in intervals corresponding to the rotational angle (in the case of a rational multiple of $\pi$). We have been unable construct a rotation with an infinite rounded orbit.

One could hope that other kinds of rounding functions simplify the analysis of the orbits. We have shown truncation based rounding, for example, either helps converge towards zero, or diverge towards infinity, and this can be exploited (particularly at the bottom dimension of a Jordan block). However, roundings which may either round up or down greatly complicate the analysis. 
Nevertheless, we conjecture that all rounded orbits obtained by rotation eventually become periodic.

\subparagraph*{Random rounding functions}
Orbit problems for rounding functions which behave probabilistically are another line of open problems and are a natural candidate for future work.

\appendix

\section{Additional material for Section~\ref{sec:pspacehard}, \texorpdfstring{$\PSPACE$}{PSPACE}-hardness}

\subsection{Perturbation}
We expand on \cref{remark:purtubation}, observing that multiplying by $1.1$ (or similar) maintains the logical equivalence required for all of our update functions. Assuming $x_i,x_j,x_k$ are each in $\{0,1\}$ we have,
\begin{itemize}
	\item $1 \leftarrow \floor{1 * 1.1}$
\item $x_i \leftarrow x_j\vee x_\ck = \floor{\frac{1 + x_j + x_\ck}{2} } =\floor{(\frac{1 + x_j + x_\ck}{2} )*1.1}$
\item $x_i \leftarrow x_j\wedge x_\ck =\floor{\frac{1 + x_j + x_\ck}{3} }  \floor{(\frac{1 + x_j + x_\ck}{3} )*1.1}$ 
\item $x_i \leftarrow \neg x_j = \floor{1-x_j} = \floor{(1-x_j)*1.1}$.
\item $x_i \leftarrow \floor{0} = \floor{(0)*1.1}$
\item $x_i \leftarrow \floor{x_i} = \floor{(x_i)*1.1}$
\item $x_i \leftarrow \floor{x_j} = \floor{(x_j) *1.1}$.
\end{itemize}

\subsection{Dimension of Rounded P2P instance in proof of \texorpdfstring{$\PSPACE$}{PSPACE}-hardness}
It is clear that the required reduction is polynomial, we precisely characterise the dimension of the resulting system here.
\begin{proposition}\label{prop:rkldim}
The resulting instance of rounded P2P has dimension $(3n+1+\ell)(4n+15+\ell)$, if $\psi$ has $\ell$ logical operations.
\end{proposition}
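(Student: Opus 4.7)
The proof reduces to a careful double counting of (i) the number of state variables used by the compiled program and (ii) the number of atomic step functions (each of the form $\floor{p_j\cdot x}$) into which the high-level pseudo-code of \cref{lemma:lang:qbf} decomposes. By the construction that takes $m$ copies of every variable when the program has $m$ atomic step functions, the dimension of the resulting Rounded P2P instance is the product of these two counts, so it suffices to show that the two counts are $4n+15+\ell$ and $3n+1+\ell$ respectively.

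For the variable count, I would first enumerate the variables that appear explicitly in the reduction: the $n$ QBF variables $x_1,\dots,x_n$; the value $\hat\psi$; the $2n$ slots $s_1^0,\dots,s_n^0$ and $s_1^1,\dots,s_n^1$; the $n$ carry bits $c_1,\dots,c_n$; together with one implicit slot permanently equal to~$1$. This gives $4n+2$ variables. Compiling the quantifier-free formula $\psi$ in Step~$1$ into atomic $\wedge/\vee/\neg$ instructions uses one fresh auxiliary boolean per logical operation, adding $\ell$ more. Each compound if-then-else appearing in Steps~$2$ and $3+n-i$ is realised by a fixed, reusable pool of auxiliary booleans shared across all steps; a direct accounting of the worst case (which occurs in the five-assignment block $f_{3+n-i}$ containing three nested if-then-elses built from $\wedge$ and $\neg$) shows that $13$ such extra slots are enough, bringing the total to $4n+15+\ell$.

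For the instruction count, the key observation is that each basic boolean primitive ($\wedge$, $\vee$, $\neg$, or a single assignment of the form $v\leftarrow\floor{p\cdot x}$) compiles into exactly one atomic step $f_j$, because multiple simultaneous assignments can be merged into one $f_j$ only when each target $v$ is a linear combination plus floor of the current state. I would then show: Step~$1$ expands into $\ell$ atomic steps (one per logical operation in $\psi$); Step~$2$ and Step~$3$ together contribute a small constant; each of the $n-1$ high-level blocks $f_{3+n-i}$ for $i=n-1,\dots,1$ contributes exactly $3$ atomic steps because its three cascaded if-then-elses must be sequentialised; and the final Step~$3+n$ contributes the remaining instructions so that the total telescopes to $3(n-1)+\ell+c=3n+1+\ell$ for the appropriate constant $c$.

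Multiplying the two counts yields dimension $(3n+1+\ell)(4n+15+\ell)$, as required. The main obstacle is the systematic decomposition of each compound if-then-else and each multi-assignment into linear-combination-plus-floor primitives while simultaneously verifying that the auxiliary pool can be reused across every step of the program, so that the constant contributions to the two counts match the stated $+15$ and $+1$ exactly; this is routine but requires care to avoid double-counting auxiliary variables that are reset and reused at different stages.
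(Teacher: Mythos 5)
Your proposal is correct and uses essentially the same double-counting strategy as the paper: tally the atomic step functions ($3n+1+\ell$), tally the variables including auxiliaries ($4n+15+\ell$), and multiply via the $m$-copies construction. The only cosmetic difference is that you fold the constant auxiliary overhead into a single shared pool of 13, whereas the paper itemizes it as $4$ (for $f_2$) $+\,8$ (shared across the $f_{3+n-i}$) $+\,1$ (for $f_{3+n}$) $+\,1$ (for the constant \texttt{true}); both groupings give the same total.
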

\begin{proof}[Proof of \cref{prop:rkldim}]The functions $f_1, \dots, f_{3+n}$ hide the inner workings of the reduction to the Rounded P2P Problem, by contracting steps and auxiliary variables and illustrating the effect using logic, rather than the floor of a linear combination.

Following the routine steps to translate the logical commands into the Rounded P2P Problem, we see that: 
\begin{itemize}
\item $f_1$ depends on the formula $\psi$ to evaluate. If $\psi$ is a formula with $\ell$ logical operators we have:
\begin{itemize}
  \item $\ell$ steps, resolving each logical operator according to topological ordering
  \item $\ell$ auxiliary variables to store partial computations.
  \end{itemize}
\item $f_2$ takes two steps and four extra variables.

\item $f_{3+n-i}$ takes $3$ steps  for each $i$ and $8$ extra variables. The 8 variables can be shared for all functions.

\item $f_{3+n}$ takes 3 steps and $2\times t$ extra variables, where $t$ is the total number of main variables. However it can be simplified to 2 steps, and 1 extra variable (by noticing it is equivalent to $v \leftarrow (s_{1}^0 \wedge s_{1}^1) \vee v$).
\end{itemize}
Thus the total number of steps is $\ell+2+3(n-1) +2 = 3n+1+\ell$ steps. The total number of variables is $\ell + t + 4 + 8+1$ plus $1$ to store \texttt{true}, so total of $t+14+\ell$. Note that $t = 4n+1$, total $4n +15+\ell$. Thus when exploded as per \cref{sec:explode-dim}, there are $(3n+1+\ell)(4n+15+\ell)$ dimensions in the Rounded P2P Problem.
\end{proof}

\section{Additional material for Section~\ref{sec:polar}, Polar rounding in Jordan normal form}
\label{appen:sec:polar}

\begin{lemma}\label{lemma:rotationorderpreserving}
Assume $b = [b]$, then $\abr{[a],b}\ge \abr{[a+b],b}$
\end{lemma}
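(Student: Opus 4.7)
The plan is to split the argument into a purely geometric observation about un-rounded vector sums, followed by a lift through the polar rounding.

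First, I would show that adding $b$ to $a$ moves the argument toward $\arg(b)$ without crossing it: letting $\tilde\theta_a := \arg(a) - \arg(b)$ and $\tilde\theta_{a+b} := \arg(a+b) - \arg(b)$ be reduced to $(-\pi,\pi]$, I claim that $\tilde\theta_a$ and $\tilde\theta_{a+b}$ have the same sign (with the convention that $0$ is compatible with any sign) and that $|\tilde\theta_{a+b}| \le |\tilde\theta_a|$. This is a short calculation after rotating coordinates so that $b$ lies on the positive real axis: writing $a = x + iy$, we have $a + b = (x+|b|)+iy$, so $a$ and $a+b$ share the same imaginary part $y$ and hence lie in the same closed half-plane bounded by the real axis; meanwhile the absolute angle $\operatorname{arctan2}(|y|, x')$ to the positive real axis is non-increasing as the real coordinate $x'$ grows with $y$ held fixed.

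Second, I would lift this inequality to the rounded values. The hypothesis $b = [b]$ forces $\arg(b)$ to be a multiple of the angular granularity $\theta_g$, so the angle-rounding map commutes with translation by $\arg(b)$; hence $\arg([a]) - \arg(b) = [\tilde\theta_a]$ and $\arg([a+b]) - \arg(b) = [\tilde\theta_{a+b}]$, where $[\cdot]$ on the right denotes rounding a real number to the nearest multiple of $\theta_g$ with the fixed ``round half up'' tie-break. Because this rounding is monotone on $\mathbb{R}$, and because $\tilde\theta_a$ and $\tilde\theta_{a+b}$ share a sign, monotonicity yields $0 \le [\tilde\theta_{a+b}] \le [\tilde\theta_a]$ in the non-negative case (the non-positive case is symmetric), so $|[\tilde\theta_{a+b}]| \le |[\tilde\theta_a]|$. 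Translating back, this is exactly the claim $\abr{[a+b], b} \le \abr{[a], b}$.

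The step that truly requires attention is the same-sign conclusion of the first part. Without it the inequality could fail: with an asymmetric tie-break, two points equidistant from $\arg(b)$ but on opposite sides can round to different distances from $\arg(b)$. The common-imaginary-part observation is precisely what rules this out. The only remaining technicality is the antipodal case $\tilde\theta = \pm\pi$, which is harmless because $\pi = R\theta_g$ is itself a multiple of $\theta_g$, and the degenerate cases $a = 0$ or $a + b = 0$, where the claim is either vacuous or trivially satisfied.
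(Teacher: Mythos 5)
Your proof is correct and takes essentially the same approach as the paper's: rotate so $\arg(b)=0$, observe that $a$ and $a+b$ share the same imaginary part so adding $b$ shrinks the angle without changing its sign, and use $b=[b]$ (so $\arg(b)$ is an admissible angle) together with monotonicity of the minimal-error angle rounding to transfer the inequality to $[a]$ and $[a+b]$. You spell out the commutation of the rounding with translation by $\arg(b)$ and the tie-break subtlety more explicitly than the paper, but the underlying argument is identical.
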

\begin{proof}[Proof of \cref{lemma:rotationorderpreserving}]
Assume without loss of generality (by rotation) that $\arg(b) = 0$. Thus $b = (x,0)$ for $x \ge 0$. A point at $a = (u,v)$ is translated to $(u+x,v)$, thus the angle between the $x$-axis is smaller.
Hence $\abr{a,b}\ge \abr{a+b,b}$.

Now assume first that $\arg(a) \in [0,\pi]$.
Then $\arg(a) = \abr{a,b}$ and $\arg(a) \ge \abr{a+b,b} = \arg(a+b) \ge 0$.
From the fact that $b = [b]$ it follows that $\arg(b) = 0$ is a viable angle in our rounding.
As we assume minimal error rounding on the angle, it follows that $\arg([a]) \ge \arg([a + b])$ and hence $\abr{[a],b}\ge \abr{[a+b],b}$.

If $\arg(a) \in [- \pi, 0]$, we have $- \arg(a) = \abr{a,b}$ and 
$\arg(a) \le - \abr{a+b,b}  = \arg(a+b) \le 0$.
As before, we can conclude $\abr{[a],b} \ge \abr{[a+b],b}$.
\end{proof}

\angledecreasing*
\begin{proof}[Proof of \cref{corollary:phidecreasing}]
  We show that for all $i \geq N +1$:
  \[\abr{\lambda \vecitcomp{x}{i}{\ck{-}1}, \vecitcomp{x}{i}{\ck}} \ge \abr{\lambda \vecitcomp{x}{i+1}{\ck{-}1}, \vecitcomp{x}{i+1}{\ck}}\]
We first make the following calculating:
\begin{align}
\abr{[\lambda \vecitcomp{x}{i}{\ck{-}1}], \vecitcomp{x}{i}{\ck}} &\geq \abr{[\lambda \vecitcomp{x}{i}{\ck{-}1} + \vecitcomp{x}{i}{\ck}], \vecitcomp{x}{i}{\ck}} \tag{\Cref{lemma:rotationorderpreserving}}
\\&= \abr{\vecitcomp{x}{i+1}{\ck{-}1}, \vecitcomp{x}{i}{\ck}} \tag{by definition}
\\&= \abr{[\lambda \vecitcomp{x}{i+1}{\ck{-}1}],[\lambda \vecitcomp{x}{i}{\ck}]} \tag{1} \label{prop:1}
\\&= \abr{[\lambda \vecitcomp{x}{i+1}{\ck{-}1}],\vecitcomp{x}{i+1}{\ck}} \tag{$\ck{}$ is just rotating}
\end{align}
\cref{prop:1}: Since $\vecitcomp{x}{i+1}{\ck{-}1}$ and $\vecitcomp{x}{i}{\ck}$ are both at admissible angles, their rotations are at the same point between two admissible angles.
Thus the rotation-effect of the rounding will be the same for both values.

Observe that $[\lambda \vecitcomp{x}{i+1}{\ck{-}1}]$ and $\vecitcomp{x}{i+1}{\ck}$ both lie on admissible angles. Consider $\theta = \abr{[\lambda[a]],\lambda[a]}$, observe this angle (and the direction of the angle) is the same, no matter the value of $a$. This angle corresponds with $\abr{[\lambda \vecitcomp{x}{i+1}{\ck{-}1}],\lambda \vecitcomp{x}{i+1}{\ck{-}1}}$ and $\abr{[\lambda \vecitcomp{x}{i}{\ck{-}1}],\lambda \vecitcomp{x}{i}{\ck{-}1}}$. Further $\theta \le \theta_g/2$ the maximum effect of the rounding.

\begin{case}[{Suppose $\abr{[\lambda \vecitcomp{x}{i}{\ck{-}1}], \vecitcomp{x}{i}{\ck}} < \pi$}]
The calculation above also shows that $\arg([\lambda \vecitcomp{x}{i}{\ck{-}1}])$ relative to $\arg(\vecitcomp{x}{i}{\ck})$ is positive if and only if $\arg([\lambda \vecitcomp{x}{i+1}{\ck{-}1}])$ relative to $\arg(\vecitcomp{x}{i+1}{\ck{-}1})$ is positive.
That is, not only does the angle-distance decrease, but the relative position of the two points stays the same.

Because of this, and the fact that both $\vecitcomp{x}{i}{\ck{-}1}$ and $\vecitcomp{x}{i+1}{\ck{-}1}$ lie on admissible angles, the angle effect of rounding $\lambda \vecitcomp{x}{i}{\ck{-}1}$ relative to $\vecitcomp{x}{i}{\ck}$ is the same as the angle effect of rounding $\lambda \vecitcomp{x}{i+1}{\ck{-}1}$ relative to $\vecitcomp{x}{i+1}{\ck}$.
Hence we can conclude:
\[\abr{\lambda \vecitcomp{x}{i}{\ck{-}1}, \vecitcomp{x}{i}{\ck}} \ge \abr{\lambda \vecitcomp{x}{i+1}{\ck{-}1}, \vecitcomp{x}{i+1}{\ck}}\]
\end{case}
\begin{case}[{Suppose $\abr{[\lambda \vecitcomp{x}{i}{\ck{-}1}], \vecitcomp{x}{i}{\ck}} = \pi$ and $\abr{[\lambda \vecitcomp{x}{i+1}{\ck{-}1}],\vecitcomp{x}{i+1}{\ck}} < \pi$}]

Given $R\theta_g = \pi = \abr{[\lambda \vecitcomp{x}{i}{\ck{-}1}], \vecitcomp{x}{i}{\ck}} $ and  $\abr{[\lambda \vecitcomp{x}{i}{\ck{-}1}],\lambda \vecitcomp{x}{i}{\ck{-}1}}\le \theta_g/2$ we have 
\[
\phi(i) =  \abr{[\lambda \vecitcomp{x}{i}{\ck{-}1}],\vecitcomp{x}{i}{\ck}} - \abr{[\lambda \vecitcomp{x}{i}{\ck{-}1}],\lambda \vecitcomp{x}{i}{\ck{-}1}} \ge (R-1)\theta_g +\theta_g/2
\]

Given $\abr{[\lambda \vecitcomp{x}{i+1}{\ck{-}1}],\vecitcomp{x}{i+1}{\ck}} \le (R-1)\theta_g$ 
we have \[
\phi(i+1) \leq  \abr{[\lambda \vecitcomp{x}{i+1}{\ck{-}1}],\vecitcomp{x}{i+1}{\ck}} + \abr{[\lambda \vecitcomp{x}{i+1}{\ck{-}1}],\lambda \vecitcomp{x}{i+1}{\ck{-}1}} \le (R-1)\theta_g +\theta_g/2
\]
Hence $\phi(i) \ge (R-1)\theta_g +\theta_g/2 \ge \phi(i+1)$.
\end{case}
\begin{case}[{Suppose $\abr{[\lambda \vecitcomp{x}{i}{\ck{-}1}], \vecitcomp{x}{i}{\ck}} =\abr{[\lambda \vecitcomp{x}{i+1}{\ck{-}1}],\vecitcomp{x}{i+1}{\ck}} = \pi$}]

The rotation by $\abr{[\lambda \vecitcomp{x}{i}{\ck{-}1}], \lambda\vecitcomp{x}{i}{\ck{-}1}}$ in either direction results in the angle (when renormalised into $[0,\pi]$) of
\[
\phi(i) =  \abr{[\lambda \vecitcomp{x}{i}{\ck{-}1}],\vecitcomp{x}{i}{\ck}}-\abr{[\lambda \vecitcomp{x}{i}{\ck{-}1}],\lambda \vecitcomp{x}{i}{\ck{-}1}}
\]

Similarly, since the effect is the same at $i+1$ we have
\[
\phi(i+1) =  \abr{[\lambda \vecitcomp{x}{i+1}{\ck{-}1}],\vecitcomp{x}{i+1}{\ck}}-\abr{[\lambda \vecitcomp{x}{i+1}{\ck{-}1}],\lambda \vecitcomp{x}{i+1}{\ck{-}1}}
\]

Since $\abr{[\lambda \vecitcomp{x}{i}{\ck{-}1}],\vecitcomp{x}{i}{\ck}} = \abr{[\lambda \vecitcomp{x}{i+1}{\ck{-}1}],\vecitcomp{x}{i+1}{\ck}}$ and $\abr{[\lambda \vecitcomp{x}{i}{\ck{-}1}],\lambda \vecitcomp{x}{i}{\ck{-}1}} = \abr{[\lambda \vecitcomp{x}{i+1}{\ck{-}1}],\lambda \vecitcomp{x}{i+1}{\ck{-}1}} $ we have 
 $\phi(i) =  \phi(i+1)$.\qedhere
\end{case}
\end{proof}

\willperiod*
\begin{proof}[Proof of \cref{lemma:willperiod}]
  We show that for all $i \ge N$: $\vecitcomp{x}{i+1}{\ck{-}1} = [\lambda \, \vecitcomp{x}{i}{\ck{-}1}]$.
  First let $i = N$.
  We have
  \begin{align*}
    \abr{\lambda\vecitcomp{x}{N+1}{\ck{-}1}, \vecitcomp{x}{N+1}{\ck}}  &= \abr{\lambda\vecitcomp{x}{N}{\ck{-}1}, \vecitcomp{x}{N}{\ck}} \tag{by assumption}\\
    &= \abr{\lambda[\lambda\vecitcomp{x}{N}{\ck{-}1}], [\lambda \vecitcomp{x}{N}{\ck}]} \tag{1} \label{eq:1:willperiod}\\
    &= \abr{\lambda[\lambda\vecitcomp{x}{N}{\ck{-}1}], \vecitcomp{x}{N+1}{\ck}} \tag{$\ck{}$ is just rotating}
  \end{align*}
\Cref{eq:1:willperiod} holds because $\vecitcomp{x}{N}{\ck{-}1}$ and $\vecitcomp{x}{N}{\ck}$ are both at admissible points, and hence rounding after rotating by $\lambda$ has the same effect on both sides.

It follows that $\arg(\lambda\vecitcomp{x}{N+1}{\ck{-}1}) = \arg(\lambda[\lambda\vecitcomp{x}{N}{\ck{-}1}])$.
As $\abs{\vecitcomp{x}{N+1}{\ck{-}1}} = \abs{\vecitcomp{x}{N}{\ck{-}1}}$ by assumption, we can conclude that $\vecitcomp{x}{N+1}{\ck{-}1} = [\lambda \vecitcomp{x}{N}{\ck{-}1}]$.

  The next step from $N{+}1$ to $N{+}2$ is just a rotation of this case, and hence we can conclude by induction.
\end{proof}

\begin{lemma}\label{lemma:below90}
For $R\ge 3$: $\ceil{\frac{\pi/4}{\theta_g}}\theta_g + \frac{\theta_g}{2} \le \pi/2$, where $\theta_g= \frac{\pi}{R}$.
\end{lemma}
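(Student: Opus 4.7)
The plan is to reduce this to a clean arithmetic inequality and then do a short case analysis on $R \bmod 4$.

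First I would substitute $\theta_g = \pi/R$ throughout the inequality. Dividing both sides by $\theta_g = \pi/R$, the statement becomes exactly
\[
\left\lceil \frac{R}{4} \right\rceil + \frac{1}{2} \;\le\; \frac{R}{2},
\qquad \text{i.e.,} \qquad
\left\lceil \frac{R}{4} \right\rceil \;\le\; \frac{R-1}{2}.
\]
So the whole lemma is equivalent to this purely arithmetic statement about the integer $R \ge 3$.

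Next I would dispatch the bulk of the cases with the universal bound $\lceil R/4 \rceil \le (R+3)/4$. Plugging this in, it suffices to verify $(R+3)/4 \le (R-1)/2$, which simplifies to $R \ge 5$. Hence the inequality holds for all $R \ge 5$.

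The remaining obstacle is that the easy estimate is just barely too weak for the boundary values $R=3$ and $R=4$, so these must be checked directly. For $R=3$: $\lceil 3/4\rceil = 1$ and $(R-1)/2 = 1$, giving equality. For $R=4$: $\lceil 1\rceil = 1$ and $(R-1)/2 = 3/2$, which holds strictly. This exhausts all cases and proves the lemma. (One may also note that for $R=1,2$ the inequality would fail, which is why the hypothesis $R \ge 3$ is sharp, and in particular the inequality is tight at $R=3$, corresponding to the $\pi/2$ upper bound being attained.)
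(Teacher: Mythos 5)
Your proof is correct, and it takes a genuinely different — and in fact more rigorous — route than the paper's. The paper's argument is quite informal: it notes that the left-hand side tends to $\pi/4$ as $R\to\infty$ and then asserts "enumeration of the first 100 cases" suffices, without making the tail precise. Your reduction, dividing through by $\theta_g = \pi/R$ to obtain the purely arithmetic inequality $\lceil R/4\rceil \le (R-1)/2$, is cleaner and genuinely closes the gap: the universal estimate $\lceil R/4\rceil \le (R+3)/4$ disposes of all $R \ge 5$, and the two boundary values $R=3,4$ are checked by hand. You also correctly identify $R=3$ as the tight case (both sides equal $\pi/2$), which explains why the hypothesis $R \ge 3$ cannot be relaxed. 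This is a nice improvement over the paper's proof; nothing is missing.
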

\begin{proof}[Proof of \cref{lemma:below90}]
$\ceil{\frac{\pi/4}{\theta_g}}\theta_g + \frac{\theta_g}{2} \to \pi/4$ as $\theta_g\to 0$ (or $R\to \infty$). Enumeration of the first 100 cases concludes less than $\pi/2$ before being close to $\pi/4$.
\end{proof}

\begin{figure}
\centering
\includegraphics[width=0.85\textwidth]{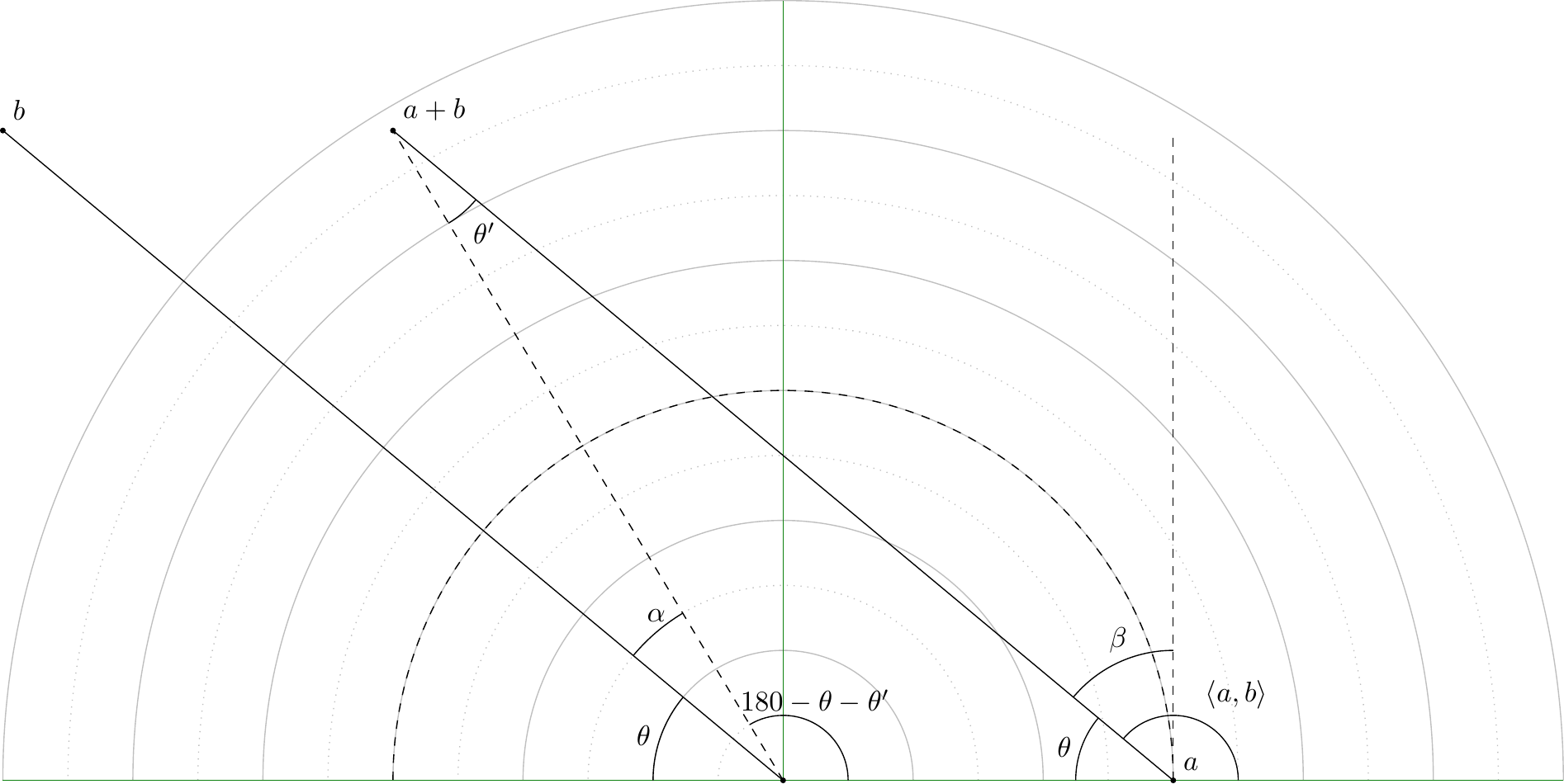}
\caption{Situation in the upper left quadrant ($\phi(i) > \pi/2$) after an increasing step. $\alpha \le \pi/4$, and so $\phi(i+1) \le \pi/2$ from this point on.}
\label{fig:sketch:lemma:ulq}
\end{figure}
\ulq*
\begin{proof}[Proof of \cref{lemma:ulq}] Refer to \cref{fig:sketch:lemma:ulq}, let $\alpha = \abr{a + b , b}$, $\beta = \abr{a,b} - \pi/2$ and $\theta = \pi/2 - \beta = \pi - \abr{a,b}$.

First we claim $\alpha \le \pi/4$: If $\theta > \pi/4$ and since $\alpha \le \pi/2 - \theta$, we have $\alpha \le \pi/4$. Otherwise suppose $\theta \le \pi/4$ and observe that $\theta' = \alpha$. Because  $\abs{a+b} > \abs{a}$ we have $\theta' < \theta$ so  $\alpha \le \pi/4$.

Suppose $\theta_g = \pi/2$, which implies $\theta_g/2 = \pi/4$ and $\alpha \le \theta_g/2$. Since $\abr{a + b , b} < \theta_g/2$ then $\abr{[ a + b] , b} = 0$  since $b$ is already rounded and $[\lambda a + b]$ is within $\theta_g/2$. then $\abr{\lambda[ a + b] , [\lambda b]} \le \theta_g/2$.

Instead suppose $\theta_g = \pi/R$, $R\ge 3$. Then $\abr{a + b , b} \le \pi/4$ and hence $\abr{[ a + b ], b} \le \ceil{\frac{\pi/4}{\theta_g}}\theta_g$ and $\abr{\lambda[ a + b] , [\lambda b]} \le  \ceil{\frac{\pi/4}{\theta_g}}\theta_g + \theta_g/2$, then by \cref{lemma:below90} $\ceil{\frac{\pi/4}{\theta_g}}\theta_g + \theta_g/2 \le \pi/2$.
\end{proof}

\begin{lemma}\label{lemma:urq}
Let $a_i = \lambda\vecitcomp{x}{i}{\ck{-}1}$ and $b_i = \vecitcomp{x}{i}{\ck}$.

Suppose $\abr{a_i,b_i} > \pi/2$ and $\abr{a_i+b_i,a_i} \le \pi/2$ and $\abs{a_{i+1}} = \abs{[a_i+b_i]} < \abs{a_i}$ (i.e. $\abs{\vecitcomp{x}{i+1}{\ck{-}1}} <  \abs{\vecitcomp{x}{i}{\ck{-}1}}$).

If $\abr{a_{i+1},b_{i+1}} = \abr{a_i,b_i}$ and $\abr{a_{i+1}+b_{i+1},a_{i+1}}\le \pi/2$ then $\abs{a_{i+2}}\le \abs{a_{i+1}}$ 
(entailing $\abs{\vecitcomp{x}{i+2}{\ck{-}1}}\le \abs{\vecitcomp{x}{i+1}{\ck{-}1}}$).
\end{lemma}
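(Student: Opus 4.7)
The plan is to reduce $\abs{a_{i+2}}\le\abs{a_{i+1}}$ to a comparison of the pre-rounding moduli $\abs{a_{i+1}+b_{i+1}}$ and $\abs{a_i+b_i}$ by reinterpreting the axis hypothesis $\abr{a+b,a}\le \pi/2$ as an explicit modulus bound on $\abs{a}$, then to invoke the law of cosines, and finally to lift the resulting inequality through the polar modulus rounding.

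First I would normalise the data. Since $\ck$ is just rotating and polar rounding leaves already-rounded moduli invariant, $\abs{b_i}=\abs{b_{i+1}}$; call this common value $B$. By hypothesis $\phi := \abr{a_i,b_i} = \abr{a_{i+1},b_{i+1}} > \pi/2$, so $\cos\phi < 0$. Placing $a$ on the positive real axis (so that $b = B(\cos\phi \pm i\sin\phi)$), one computes $a+b = (\abs{a}+B\cos\phi)\pm i\,B\sin\phi$, and the sign of the real part determines whether $\abr{a+b,a}\le \pi/2$: this is equivalent to $\abs{a}\ge -B\cos\phi = B\abs{\cos\phi}$. Applied at steps $i$ and $i{+}1$, this yields $\abs{a_i},\abs{a_{i+1}}\in [B\abs{\cos\phi},\infty)$.

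Next I would invoke the law of cosines, $\abs{a+b}^2 = \abs{a}^2 + 2\abs{a}B\cos\phi + B^2$, so that the function $f(r) = \sqrt{r^2 + 2rB\cos\phi + B^2}$ has derivative $(r+B\cos\phi)/f(r)$, which is non-negative precisely on $[B\abs{\cos\phi},\infty)$. Hence $f$ is non-decreasing there, and the hypothesis $\abs{a_{i+1}}<\abs{a_i}$ (both lying in that interval by the previous step) gives $\abs{a_{i+1}+b_{i+1}} = f(\abs{a_{i+1}}) \le f(\abs{a_i}) = \abs{a_i+b_i}$.

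Finally, polar rounding satisfies $\abs{[z]}=[\abs{z}]$, whence $\abs{a_{i+1}} = [\abs{a_i+b_i}]$ and $\abs{a_{i+2}} = [\abs{a_{i+1}+b_{i+1}}]$. Because the modulus rounding is monotone on $\reals_{\ge 0}$ for the standard real rounding functions considered in the paper, the inequality from the previous step lifts to $\abs{a_{i+2}}\le \abs{a_{i+1}}$, as required. The main subtlety I expect is the translation of the angle hypothesis $\abr{a+b,a}\le \pi/2$ into the clean algebraic bound $\abs{a}\ge B\abs{\cos\phi}$, which is exactly what places both values of $\abs{a}$ on the monotone branch of the cosine-law parabola; the final lifting step is routine once monotonicity of the modulus rounding is granted.
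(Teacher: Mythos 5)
Your proof is correct and is essentially the argument the paper gives, rendered algebraically rather than pictorially: the paper's figure-based claim that $\abs{(t,0)+b}\le r$ (with the orange/dotted case excluded because $\abr{a+b,a}>\pi/2$) is exactly your observation that $\abs{a+b}$ is non-decreasing in $\abs{a}$ on the half-line $\abs{a}\ge B\abs{\cos\phi}$ carved out by the angle hypothesis. The final lift through the modulus rounding likewise matches the paper's use of monotonicity (stated there as ``minimal error rounding is used''), so you have the same decomposition with a cleaner, law-of-cosines presentation of the key geometric fact.
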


\begin{figure}
\centering
\includegraphics[width=0.85\textwidth]{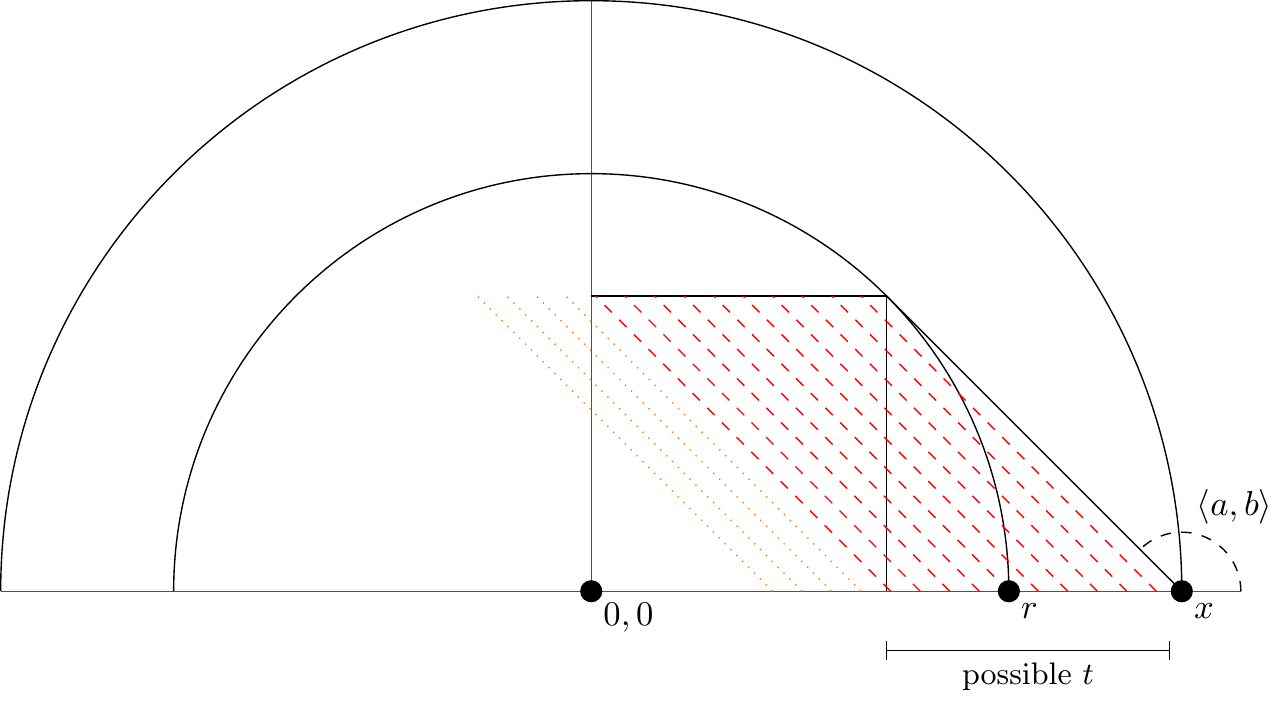}
\caption{Situation in the upper right quadrant after a decreasing step, indicating that if the next step stays inside the upper right quadrant, it cannot increase. Orange/dotted can be excluded as then $\abr{a+b,a} > \pi/2$.}
\label{fig:sketch:lemma:urq}
\end{figure}
\begin{proof}[Proof of \cref{lemma:urq}]
The situation is depicted in \cref{fig:sketch:lemma:urq}. By rotational symmetry, assume $a_{i} = (x,0)$. By applying the same rotational normalisation to $b_i$ let $b = (-l,m)$. Let $r,t$ be such that: $\abs{a_{i}+b} = r$ (inner black circle) and $\abs{[a_{i}+b]} = t < \abs{a_i} (=x)$.

Then by rotational symmetry again (using $\abr{a_{i+1},b_{i+1}} = \abr{a_i,b_i}$), assume $a_{i+1} = (t,0)$. Then observe that $\abs{(t,0) + b} \le r$ (red/dashed lines), and since $[r] = t$ and minimal error rounding is used $\abs{[(t,0) +b ]} \le t$.  We have $a_{i+2} = \lambda[a_{i+1}+b] = \lambda[(t,0)+b]$, hence $\abs{a_{i+2}} \le t$.
\end{proof}

\nodecreaseincrease*
\begin{proof}[Proof of \cref{corr:nodecreaseincrease}]
Recall $\gamma(i) = \abr{\lambda \vecitcomp{x}{i}{\ck{-}1} + \vecitcomp{x}{i}{\ck},\lambda \vecitcomp{x}{i}{\ck{-}1}}$.

Suppose following a modulus  decreasing transition (hence $\phi(i) \ge \pi/2$) there is  $\phi(i) = \phi(i+1) $ and $\abs{\vecitcomp{x}{i+1}{\ck{-}1}} > \abs{\vecitcomp{x}{i}{\ck{-}1}}$ 
\begin{itemize}
\item the last step was a decrease with $\gamma(i-1) \le \pi/2$ and $\gamma(i) \le \pi/2$ then by \cref{lemma:urq} it is not increasing (contradicting $\abs{\vecitcomp{x}{i+1}{\ck{-}1}} > \abs{\vecitcomp{x}{i}{\ck{-}1}}$).
\item the last step was a decrease with $\gamma(i-1) > \pi/2$ and this step has $\gamma(i) \le \pi/2,$ this cannot happen without a change of angle (contradicting $\phi(i-1)=\phi(i)$).
\item this step has $\gamma(i)> \pi/2$, then by \cref{lemma:ulq} an increase ($\abs{\vecitcomp{x}{i+1}{\ck{-}1}} > \abs{\vecitcomp{x}{i}{\ck{-}1}}$) would cause the angle to decrease (contradicting $\phi(i) = \phi(i+1) $).\qedhere
\end{itemize}
\end{proof}

\begin{lemma}\label{cor:notdecreasingagain:p1}
  Let $a = \lambda \vecitcomp{x}{N}{\ck{-}1}$ and $b = \vecitcomp{x}{N}{\ck}$ for some $N > 0$, and let $\gamma = \abr{a+b,a}$.
  Suppose that $\ck$ is just rotating after $N$, $\gamma < \pi/2$ and $\abs{a+b} > \abs{a} + 0.5$.

Then, for all $i > N$: $\abs{\vecitcomp{x}{i}{\ck{-}1}} > \abs{a}$.
\end{lemma}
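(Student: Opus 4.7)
My plan is to prove the lemma by induction on $i > N$, with the strengthened invariant that both $\abs{\vecitcomp{x}{i}{\ck{-}1}} > \abs{a}$ and $\gamma(i) < \pi/2$ hold. Write $a_i = \lambda \vecitcomp{x}{i}{\ck{-}1}$, $b_i = \vecitcomp{x}{i}{\ck}$, noting that $\abs{b_i} = \abs{b}$ for all $i \geq N$ since $\ck$ is just rotating. I assume WLOG that the modulus granularity is $1$, so all moduli $\abs{a_i}$ are non-negative integers; in particular $\abs{a_i} > \abs{a}$ is equivalent to $\abs{a_i} \geq \abs{a}+1$.

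For the base case $i = N+1$, the update gives $\vecitcomp{x}{N+1}{\ck{-}1} = [a+b]$. Since polar rounding uses minimal-error rounding on the modulus, $\abs{[a+b]} \geq \abs{a+b} - 1/2 > \abs{a}$, and integrality upgrades this to $\abs{a_{N+1}} \geq \abs{a}+1$. The invariant $\gamma(N+1) < \pi/2$ follows by combining Lemma~\ref{corollary:phidecreasing} ($\phi$ is non-increasing, so $\phi(N+1) \leq \phi(N)$) with the observation highlighted just before Lemma~\ref{lemma:ulq}: since $\abs{b_i}$ stays constant while $\abs{a_i}$ increased, the angle $\gamma$ (a function of $\phi$ and $\abs{a_i}/\abs{b_i}$) can only decrease.

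For the inductive step, I would use $\gamma(i) < \pi/2$ to conclude that $a_i + b_i$ has a strictly positive projection along $a_i$, which yields a lower bound on $\abs{a_i + b_i}$. Explicitly, writing $b_i = b_{i,\parallel} + b_{i,\perp}$ in the decomposition along and perpendicular to $a_i$, the condition $\gamma(i) < \pi/2$ is precisely $\abs{a_i} + b_{i,\parallel} > 0$, and $\abs{a_i + b_i}^2 = (\abs{a_i} + b_{i,\parallel})^2 + b_{i,\perp}^2$. I would then show, using the inductive hypothesis $\abs{a_i} \geq \abs{a}+1$ together with the initial cushion $\abs{a+b} > \abs{a}+1/2$, that $\abs{a_i + b_i} > \abs{a} + 1/2$. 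After the modulus rounding (which loses at most $1/2$), this gives $\abs{a_{i+1}} > \abs{a}$, lifted to $\abs{a_{i+1}} \geq \abs{a}+1$ by integrality. Finally, the monotonicity argument used in the base case propagates $\gamma(i+1) \leq \gamma(i) < \pi/2$.

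The main obstacle is the geometric bookkeeping in the inductive step: when $\phi(i) > \pi/2$, the naive bound $\abs{a_i + b_i} \geq \abs{a_i}$ fails, and one must exploit the $1/2$-cushion from the hypothesis together with $\gamma(i) < \pi/2$ in a quantitative fashion to compensate for the potential shrinkage. The key fact is that $\gamma(i) < \pi/2$ forces $\cos\phi(i) > -\abs{a_i}/\abs{b}$, a bound that improves (becomes less restrictive) as $\abs{a_i}$ grows; coupled with the monotonicity of $\phi$, this prevents the modulus from ever catching back down to $\abs{a}$. I expect the technical heart of the proof to be a case analysis on the sign of $b_{i,\parallel}$, handling separately the "acute" regime ($\phi(i) \leq \pi/2$, where $\abs{a_i + b_i} \geq \abs{a_i}$ is immediate) and the "mildly obtuse" regime ($\phi(i) > \pi/2$ but $\gamma(i) < \pi/2$), where one uses integrality plus the above inequalities to verify that $\abs{a_i + b_i}$ still exceeds $\abs{a} + 1/2$.
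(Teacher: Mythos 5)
Your overall plan (induction on $i$, using the $1/2$-cushion together with the monotonicity of $\phi$) is the right shape, but the argument as written has a genuine gap precisely at the step you flag as the ``technical heart.'' Maintaining the invariant $\gamma(i) < \pi/2$, i.e.\ $\cos\phi(i) > -\abs{a_i}/\abs{b}$, does not by itself yield a useful lower bound on $\abs{a_i+b_i}$: plugging it into the law of cosines only gives $\abs{a_i+b_i}^2 > \abs{b}^2 - \abs{a_i}^2$, which is vacuous as soon as $\abs{a_i} \geq \abs{b}$. The bound ``improving as $\abs{a_i}$ grows'' means the constraint becomes \emph{weaker}, so it cannot be what stops the modulus from falling back to $\abs{a}$. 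Also, your justification for the base-case invariant --- that $\gamma$, as a function of $\phi$ and $r = \abs{a_i}/\abs{b}$, can only decrease --- is not correct: $\frac{\partial}{\partial \phi}\tan\gamma$ has the sign of $r\cos\phi + 1$, so when $r > 1$ and $\cos\phi < -1/r$, decreasing $\phi$ can \emph{increase} $\gamma$. (The conclusion $\gamma(i) < \pi/2$ does hold, but by a different argument, see below.)

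The missing ingredient is the \emph{fixed} inequality $\abs{a} \geq \abs{b}\cos(\pi - \phi(N))$, which is exactly what the hypothesis $\gamma = \gamma(N) < \pi/2$ at time $N$ gives you (via $\abs{a} = \abs{b}\cos(\pi-\phi(N)) + \abs{a+b}\cos\gamma$). With this, the paper proves directly that $\abs{a_i + b_i} \geq \abs{a+b}$ rather than comparing against $\abs{a}+1/2$: writing $\abs{a_i+b_i}^2 = \abs{a_i}^2 + \abs{b}^2 + 2\abs{a_i}\abs{b}\cos\phi(i)$, first use $\phi(i) \leq \phi(N)$ to replace $\cos\phi(i)$ by $\cos\phi(N)$, and then observe that $t \mapsto t^2 + \abs{b}^2 + 2t\abs{b}\cos\phi(N)$ is increasing for $t \geq \abs{b}\cos(\pi-\phi(N))$, hence for $t \geq \abs{a}$; applied at $t = \abs{a_i} \geq \abs{a}$, this yields $\abs{a_i+b_i}^2 \geq \abs{a+b}^2 > (\abs{a}+\tfrac12)^2$, and the rounding step finishes as you describe. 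Note that this argument only needs $\gamma(N) < \pi/2$ at time $N$; the extra invariant $\gamma(i) < \pi/2$ is a (true, but unnecessary) consequence --- it follows from $\abs{a_i} > \abs{a} \geq \abs{b}\cos(\pi-\phi(N)) \geq \abs{b}\cos(\pi-\phi(i))$, not from any joint monotonicity of $\gamma$ in $(\phi, r)$.
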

\begin{proof}
  Recall that $\phi(i) = \abr{\lambda \vecitcomp{x}{i}{\ck{-}1},\vecitcomp{x}{i}{\ck}}$ and let $\phi = \phi(N)$.
  By \cref{corollary:phidecreasing} we have: $\phi(i) \leq \phi$ for all $i \geq N$.
  We show the claim by induction on $i$.
  If $i = N + 1$, it holds by assumption.
  Else, we can assume that $\abs{\vecitcomp{x}{i{-}1}{\ck{-}1}} > \abs{a}$.
  We let $d = \vecitcomp{x}{i{-}1}{\ck{-}1}$, $e = \vecitcomp{x}{i{-}1}{\ck}$ and we aim to show that $\abs{d + e} \geq \abs{a + b}$.
  \begin{alignat*}{2}
    & \abs{d + e} && \geq \abs{a + b} \\
    \iff& \abs{d + e}^2 &&\geq \abs{a + b}^2 \\
    \Longleftarrow \;\;& \abs{d}^2 + \abs{b}^2 - 2\abs{d} \abs{b} \cos(\pi {-} \phi) &&\geq \abs{a}^2 + \abs{b}^2 - 2\abs{a} \abs{b} \cos(\pi {-} \phi)  \tag{*} \\
    \iff& (\abs{a} + C)^2 - 2 (\abs{a} + C) \abs{b} \cos(\pi {-} \phi) &&\geq \abs{a}^2 - 2\abs{a} \abs{b} \cos(\pi {-} \phi) \tag{**} \\
    \iff& 2\abs{a}C + C^2 - 2 C \abs{b} \cos(\pi {-} \phi) && \geq 0 \\
    \iff& 2\abs{a}C + C^2 && \geq 2 C \abs{b} \cos(\pi {-} \phi) \\
    \Longleftarrow \;\;& \abs{a} &&\geq \abs{b}\cos(\pi {-} \phi)
  \end{alignat*}
  To see that $\abs{a} \geq \abs{b}\cos(\pi {-} \phi)$ holds, it is enough to observe that $\abs{a} = \abs{b}\cos(\pi {-} \phi) + \abs{a+b} \cos(\gamma)$ (see~\cref{fig:sketchpolar}), and $0 < \gamma < \frac{\pi}{2}$.
  The step $(*)$ is valid as $\ck$ is just rotating, which implies $\abs{b} = \abs{e}$, together with $\phi(i{-}1) \leq \phi$, which implies $\cos(\pi {-} \phi) \geq \cos(\pi {-} \phi(i{-}1))$.
  In $(**)$ we use that $\abs{d} = \abs{\vecitcomp{x}{i{-}1}{\ck{-}1}} = \abs{a} + C$, for some positive $C$.

  Now, from $\abs{d+e} \geq \abs{a + b} > \abs{a} + 0.5$ it follows that $\abs{[d+e]} = \abs{\vecitcomp{x}{i}{\ck{-}1}} > \abs{a}$.
\end{proof}

\begin{figure}
  \centering
  \includegraphics[width=0.7\textwidth]{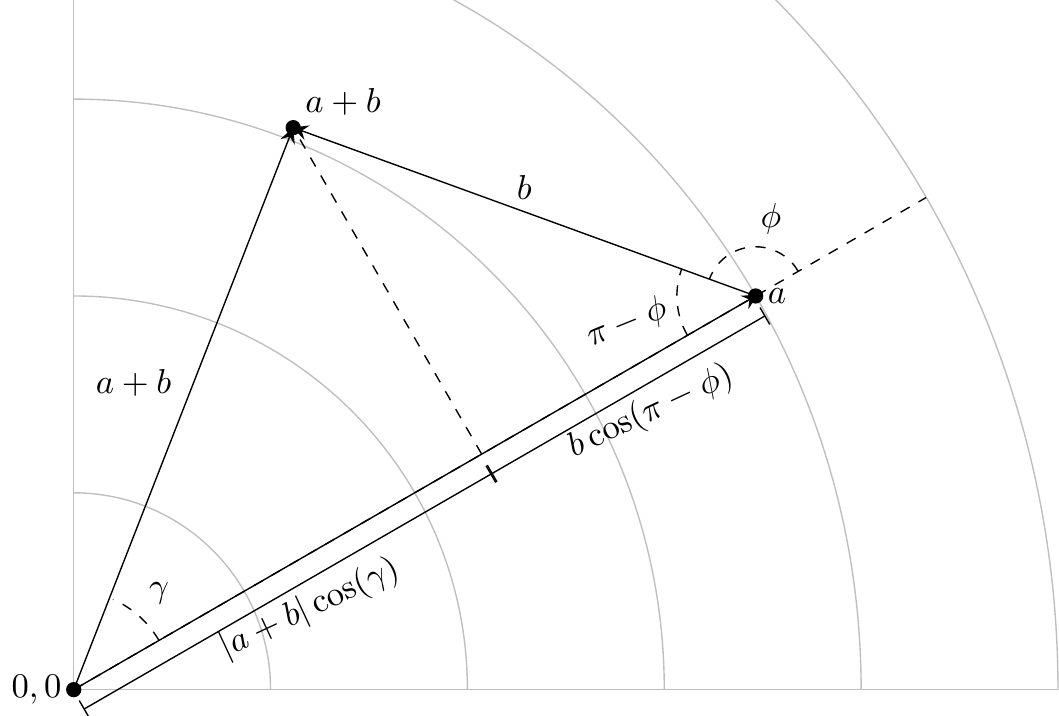}
  \caption{}
  \label{fig:sketchpolar}
\end{figure}

\begin{lemma}\label{cor:notdecreasingagain:p2}
  Let $a,b$ be algebraic numbers, and assume that $\abs{a+b} > \abs{a} \geq \frac{1}{\sqrt{2}}\abs{b}$.

  Then, $\abr{a+b,a} \leq \pi/2$.
\end{lemma}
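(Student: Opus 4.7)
The plan is to translate the angle condition into an inner product inequality using the standard relation $\cos(\angle(u,v)) = \mathrm{Re}(u\bar{v})/(|u||v|)$, and then derive the result from the two given hypotheses via a short algebraic manipulation.

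First I would observe that the claim $\angle(a+b,a) \le \pi/2$ is equivalent to $\mathrm{Re}((a+b)\bar{a}) \ge 0$, since the cosine of the angle between two complex numbers (viewed as vectors in $\mathbb{R}^2$) has the same sign as this real part. Expanding gives
\[
  \mathrm{Re}((a+b)\bar{a}) = |a|^2 + \mathrm{Re}(b\bar{a}),
\]
so it suffices to show $\mathrm{Re}(b\bar{a}) \ge -|a|^2$.

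Next I would extract the lower bound on $\mathrm{Re}(b\bar{a})$ from the hypothesis $|a+b| > |a|$. Squaring yields
\[
  |a|^2 + 2\,\mathrm{Re}(b\bar{a}) + |b|^2 \;=\; |a+b|^2 \;>\; |a|^2,
\]
so $\mathrm{Re}(b\bar{a}) > -|b|^2/2$.

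Finally, the hypothesis $|a| \ge \frac{1}{\sqrt{2}}|b|$ gives $|a|^2 \ge |b|^2/2$, hence $-|a|^2 \le -|b|^2/2 < \mathrm{Re}(b\bar{a})$, so $\mathrm{Re}((a+b)\bar{a}) > 0$ and in fact $\angle(a+b,a) < \pi/2$. There is no real obstacle here: the proof is a direct two-line computation once the angle condition is rewritten as a sign condition on a real inner product, and the algebraicity hypothesis plays no role (it is inherited from the surrounding context rather than used in the argument).
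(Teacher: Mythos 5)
Your proof is correct. The paper instead applies the law of cosines to the triangle with sides $\abs{a}$, $\abs{a+b}$, $\abs{b}$ — writing $\abs{b}^2 = \abs{a+b}^2 + \abs{a}^2 - 2\abs{a}\abs{a+b}\cos\gamma$ where $\gamma = \abr{a+b,a}$ — and derives a contradiction from $\gamma > \pi/2$ (which forces $\cos\gamma < 0$, hence $\abs{b} > \sqrt{2}\abs{a}$). Your version works directly with the sign of $\mathrm{Re}((a+b)\bar a)$ and expands $\abs{a+b}^2$, which is the law-of-cosines identity attached to the other vertex of the same triangle. The underlying algebra is the same quadratic relation; what you gain by your arrangement is a direct (rather than contrapositive) argument, a slightly sharper conclusion ($\gamma < \pi/2$, strict), and avoidance of the trigonometric framing entirely. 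Your remark that algebraicity plays no role is also accurate — the paper carries that hypothesis only because the ambient entries are algebraic, not because the lemma needs it.
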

\begin{proof}
  Let $\gamma = \abr{a+b,a}$ and assume, for contradiction, that $\pi/2 < \gamma \leq \pi$.
  We have:
  \begin{align*}
    \abs{b}^2 = \abs{a+b}^2 + \abs{a}^2 - 2\abs{a}\abs{a+b}\cos(\gamma) > 2\abs{a}^2 - 2\abs{a}\abs{a+b}\cos(\gamma) > 2\abs{a}^2
  \end{align*}
  The last step follows by $\cos(\gamma) < 0$.
  But then we have $\abs{b} > \sqrt{2} \abs{a}$, which contradicts the assumptions.
\end{proof}
\notdecreasingagain*
\begin{proof}[Proof of \cref{cor:notdecreasingagain}]
Direct corollary of \cref{cor:notdecreasingagain:p1} and \cref{cor:notdecreasingagain:p2}.
\end{proof}

\begin{proposition}\label{lemma:expspace}
  The problem (of \cref{thm:polar-expspace}) is in \EXPSPACE.
\end{proposition}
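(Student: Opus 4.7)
The plan is to establish two bounds in tandem: a space bound on the representation of any intermediate configuration, and a time bound on the number of simulation steps, both at most doubly exponential in the input size (hence trackable in exponential space).

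First, I would bound the maximum modulus attained by any coordinate during the simulation. Working dimension by dimension as in the proof of \cref{thm:polar-expspace}, for each dimension $\ck$ we track the state-machine in \cref{fig:state-machine}. The self-loops give explicit bounds on $\abs{\vecitcomp{x}{i}{\ck{-}1}}$: in the ``$\phi$ I'' state, at most $\max\{\abs{\vecitcomp{x}{i}{\ck}}/\sqrt{2},\abs{\target_{\ck{-}1}}\}$; in the ``$\phi$ D'' state, at most the modulus on entry; and in the $\phi \le \pi/2$ state, at most $\abs{\target_{\ck{-}1}}$. Since dimension $\ck$ is just rotating during this phase, $\abs{\vecitcomp{x}{i}{\ck}}$ is fixed, and as illustrated by \cref{eg:expspace-req} the modulus at component $\ck{-}1$ can be bounded by a polynomial of degree at most two in the modulus at component $\ck$. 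Iterating this inequality across the $d$ components yields an overall bound of the form $B \le 2^{2^{\mathrm{poly}(|\text{input}|)}}$, which is storable in exponential space.

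Next I would bound the number of simulation steps. Progress through the DAG of $\phi$-values is monotone: $\phi$ can only decrease (by \cref{corollary:phidecreasing}) and takes at most $2\numangles + 1$ discretised values, where $\numangles$ is polynomial in the input. Within each self-loop state, a counter on the number of iterations is bounded by $B/g$, which is doubly exponential but representable in exponential space. Transitions to the ``stop'' or ``just rotating'' node are detected locally (by testing the current values against $\target_{\ck{-}1}$, the entry modulus, or the precondition of \cref{lemma:willperiod}). Summing across the $d$ dimensions and polynomially many Jordan blocks, the total simulation length is doubly exponential and is tracked with an exponentially sized counter.

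The algorithm therefore maintains in memory: the current vector $\vecit{x}{i}$ (each coordinate stored as a pair of algebraic numbers in polar form of exponential bit-length), the current DAG state for each dimension (constant space per dimension), and the step counter (exponential space). Each simulation step requires polynomial-time arithmetic in the current bit-length of $\vecit{x}{i}$. After each step we check whether $\vecit{x}{i} = \target$, whether a self-loop exit condition fires, or whether the step counter is exhausted (in which case we declare non-reachability). Non-modulus-one blocks are processed in lock-step using the $\PSPACE$ procedure of \cref{thm:jnf-bounded-rounding}. The total workspace fits within $\EXPSPACE$.

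The main obstacle is the rigorous verification of the doubly exponential modulus bound: the self-loop bound in the ``$\phi$ D'' state is only the modulus \emph{on entry}, so we must argue that the modulus cannot be re-inflated arbitrarily before entering that state. This is where \cref{corr:nodecreaseincrease} and \cref{lemma:ulq} come in: between any transition from growth to decay (or vice versa) the angle $\phi$ must strictly drop, so the total number of uninterrupted growth episodes at dimension $\ck{-}1$ is bounded by $2\numangles + 1$, and each such episode at most squares the modulus (by a quantitative version of \cref{cor:notdecreasingagain}). Chaining these per-dimension estimates gives the final tower bound of the required doubly exponential form.
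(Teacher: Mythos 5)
Your proposal has the same overall architecture as the paper's proof: simulate the state machine of \cref{fig:state-machine} dimension by dimension, use monotonicity of $\phi$ to bound the DAG depth, extract doubly-exponential bounds on both the modulus and the step count, and track everything with exponential-sized counters and registers. The storage estimates and the lock-step treatment of non-modulus-one blocks are also as in the paper, so the conclusion is correct and the skeleton is the right one.

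The one place where your justification goes astray is the source of the quadratic per-dimension blow-up. You claim that ``each growth episode at most squares the modulus, by a quantitative version of \cref{cor:notdecreasingagain}.'' That is not where the squaring comes from. Within the processing of dimension $\ck{-}1$ (once $\ck$ is just rotating), the state-machine self-loop annotations bound $\abs{\vecitcomp{x}{i}{\ck{-}1}}$ \emph{linearly} by $\max\bigl\{\abs{\vecitcomp{x}{i}{\ck}}/\sqrt{2},\ \abs{\target_{\ck{-}1}},\ \abs{\vecitcomp{x}{N}{\ck{-}1}}\bigr\}$ --- no squaring happens inside a growth episode, and indeed \cref{cor:notdecreasingagain} only says the modulus never drops again, not anything about how high it goes. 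The quadratic recurrence $U_{\ck{-}1}\lesssim U_\ck^2$ arises from a different interaction: dimension $\ck{-}1$ is being driven by dimension $\ck$ during the entire $T_\ck$-step phase in which dimension $\ck$ has not yet stabilised, gaining at most $U_\ck$ in modulus per step, so $\abs{\vecitcomp{x}{N}{\ck{-}1}}\leq \abs{\vecitcomp{x}{0}{\ck{-}1}}+ N\cdot U_\ck$ with $N\leq d\cdot T_\ck$; combined with $T_\ck\lesssim U_\ck\cdot(2\pi/\theta_g)^2$ this gives the squaring. The paper makes this explicit via the coupled recurrences $T_{\ck-1}=K(2\pi/\theta_g)^2+T_\ck$ and $U_{\ck-1}=i_s+d\,T_\ck U_\ck$, whose unwinding gives $U_{\dimension-j}\leq(F\,i_s)^{2^j}$. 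Without this time--modulus coupling your chain of ``degree-two in the previous modulus'' estimates is asserted but not actually derived, so you should replace the appeal to \cref{cor:notdecreasingagain} (and to \cref{eg:expspace-req}, which is only an illustration) with the recurrence argument.
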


\begin{proof}[Proof of \cref{lemma:expspace}]
  This proof gives a more detailed analysis of the algorithm as presented in~\Cref{thm:polar-expspace}, showing that it only uses exponential space.
  As in~\Cref{thm:polar-expspace} we consider each dimension separately, starting with dimension $\dimension$.
  For each dimension ${\ck}$ we will establish upper bounds $T_{\ck}$ on the number of steps that we need while considering dimension ${\ck}$, and $U_{\ck}$ on the maximum numeric value in dimension $\ck$ that we need to consider.

  We assume that we are at step $N$ and have concluded that dimension $\ck$ just rotates at the right modulus (to conclude this for dimension $\dimension$ we need only one step).
  We use the same stopping criteria as in \cref{thm:polar-expspace} to conclude that we no longer reach $\target$.
  Observe that in \cref{thm:polar-expspace} the value of $\abs{\vecitcomp{x}{i}{\ck{-}1}}$ never increases beyond the value $K = \max(\abs{\target_{\ck}},\abs{\target_{\ck{-}1}},\abs{\vecitcomp{x}{N}{\ck{-}1}})$; thus if $\abs{\vecitcomp{x}{i}{\ck{-}1}} > K$  we can conclude that we can stop.
  This uses that the previous dimension is rotating on the ``right'' modulus, that is $\abs{\target_{\ck}} = \abs{\vecitcomp{x}{i}{\ck}}$ holds.
  
  The $[K]$-ball (see~\cref{def:kball}) has $K \cdot \frac{2\pi}{\theta_g}$ admissible points and hence, by the above argument, after this many steps we will either have: left the ball and concluded that we can stop, become just rotating in the current dimension, or decreased $\phi(i)$.
  As there are $\frac{2\pi}{\theta_g}$ possible values of $\phi$, in total we will spend at most $K \cdot (\frac{2\pi}{\theta_g})^2$ steps in dimension $\ck{-}1$.
 
  To ease calculations, we want to assume that the time spent in each dimension increases with respect to the previous one, and so we set $T_{\ck{-}1} = K \cdot (\frac{2\pi}{\theta_g})^2 + T_k$.
  This lets us assume that $N$, the step at which we start considering dimension $\ck{-}1$, satisfies: $N \leq \dimension \cdot T_{\ck}$.
We note that $\abs{\vecitcomp{x}{N}{\ck{-}1}}$ is at most $\abs{\vecitcomp{x}{0}{\ck{-}1}} + N \cdot U_{\ck}$, as $U_{\ck}$ is an upper bound on dimension $\ck$.
Hence we put $U_{\ck{-}1} = \abs{\vecitcomp{x}{0}{\ck{-}1}} + d \cdot T_k \cdot U_{\ck}$
  
  Using these upper bounds we now calculate how large the values $U_j$ may get with decreasing $j$ (we start at $j = \dimension$).
  In order not to distinguish the initial and target values of each dimension, we overestimate by using $\target_{s} = \sum_{j = 0}^d \abs{\target(j)}$ and $i_{s} = \sum_{j = 0}^d \abs{\vecitcomp{x}{0}{j}}$.

  \begin{align*}
    T_\dimension &= 1, \qquad \quad U_\dimension = i_{s} \\
    T_{\ck{-}1} &= K \cdot \left(\frac{2\pi}{\theta_g}\right)^2 + T_{\ck} \leq \left(\target_s + U_{\ck{-}1}\right) \cdot \left(\frac{2\pi}{\theta_g}\right)^2 + T_{\ck}\\[2mm]
    U_{\ck{-}1} &= i_s + \dimension \cdot T_{\ck} \cdot U_{\ck} \\
    &\leq i_s + \dimension \cdot \left(\left(\target_s + U_{\ck}\right) \cdot \left(\frac{2\pi}{\theta_g}\right)^2 + T_{\ck+1} \right) \cdot U_{\ck} \\
    &\leq \underbrace{i_s \cdot 3 \cdot \dimension \cdot \target_s \cdot \left(\frac{2\pi}{\theta_g}\right)^2}_{F}  \cdot U_{\ck}^2
  \end{align*}
  The last step uses that $T_{\ck + 1} \leq U_{\ck}$.
  Given the input, $F$ is fixed and of pseudopolynomial size.
  We can conclude that:
  \[U_{\dimension {-} j} \leq (F \cdot i_s)^{2^j}\]
  As $F \cdot i_s$ is exponential in the input, it follows that $U_0$ is at most double-exponential in the input.
  Hence, it requires at most exponentially many bits to express.
\end{proof}

\section{Additional material for Section~\ref{sec:truncated}, Argand truncation or expansion in Jordan normal form}

\textbf{In this subsection we assume all angles are given in degrees} as we will make use of rationality arguments on the angles. This is simply a stylistic choice, since it would be equivalent to consider rational multiples of $\pi$. 

\jnftruncation*
\begin{proof}[Proof of \cref{thm:jnf-tw-zero}]

Without loss of generality we consider only the case where $\abs{\lambda} = 1$, since if $\abs{\lambda} \ne 1$ the algorithm of \cref{thm:all-not-1} can be used on each such block in lock step.

Consider the $\dimension$th component. At each step, whenever rounding takes place, then there is some decrease in the modulus. Thus, either the coordinate hits zero (and stays forever), or it stabilises and becomes periodic (with no rounding ever occurring again). 

The $\dimension$th coordinate can be simulated until this happens. Then clearly it must match the target $y_\dimension$ occasionally, otherwise the answer is no. 

In the following we argue either it stabilises at zero in which case it is trivial. Or it becomes periodic with non-zero modulus and that this occurs if and only if $\arg(\lambda)$ is a multiple of $90$ degrees (and otherwise must reduce to zero).

\begin{case}[$\vecitcomp{x}{i}{\dimension}$ reaches zero] In this case it is stable, and the next coordinate is not effected by this coordinate (from some point on). Then the problem can be reduced to a smaller instance, by deleting the $\dimension$th coordinate.
\end{case}

\begin{case}[Periodic at modulus $>0$ and $\arg(\lambda)$ is not a root of unity] This case does not occur. If $\arg(\lambda)$ is not a root of unity then $\lambda^ix$ is dense on the circle of radius $\abs{x}$, and must eventually hit a point with non-integer coordinates. Such points must be rounded, decreasing the modulus, contradicting stability, and thus periodicity.

\end{case}
\begin{case}[Periodic at modulus $>0$ and $\arg(\lambda)$ is a root of unity]
If $\arg(\lambda)$ is a root of unity it is rational ($\lambda^n = 1$ implies $n \arg(\lambda) = k360$ for some $k \in \naturals{}$). We show the only angle that does not tend to zero is a multiple of $90$.

The following arguments assume we start at $a+bi$ and move to $c+di$ by a rotation of $\arg(\lambda)$, the proofs will be based on the rationality/irrationality of the angle and $\tan,\sin,\cos$ of the angle. To do this we assume both are in the upper right quadrant, as by rotating both by 90,180 or 270 to get it there will have the same argument regarding the irrationality.

Suppose we move from $a+0i$ to $c+di$. Recall the modulus is fixed, so $c^2+d^2 = a^2$. The angle formed by this is $\arg(\lambda)$ and the tangent is $\frac{d}{c}$. Since $c,d$ are rounded to a rational (but no rounding takes place) then the tangent is rational.
By \cref{thm:nivens}, the only point with $\arg(\lambda)$ rational and $\tan(\arg(\lambda))$ rational are $\arg(\lambda)= 45$ and $90$. Note that it cannot be $45$, because then $c+di = c+ci$, and we have $\sqrt{(c^2+c^2)} = a$. There is no integer solution to this equation (no Pythagorean triangle has angle 45 degrees). Hence to move from axis to non-axis the only acceptable angle is a multiple of 90 degrees (which indeed is not non-axis).

However, as a result of the finite period before stabilising and becoming periodic, the orbit could already be at a non-axis point, and move entirely within non-axis points. Suppose we move from $a+bi$ to $c+di$, with angle $\arg(\lambda)$. Note that $a+bi = C\exp(i\theta)$ and $c+di = C \exp(i(\theta + \arg(\lambda)))$ and hence $c+di = (a+bi)(\exp(i\arg(\lambda))) = (a+bi)(\cos(\arg(\lambda)) + i\sin(\arg(\lambda)))$. Then we have \[c = \operatorname{Re}(c+di) = \operatorname{Re}((a+bi)(\cos(\arg(\lambda)) + i\sin(\arg(\lambda)))) = a\cos(\arg(\lambda)) - b \sin(\arg(\lambda))\] and 
\[d = \operatorname{Im}(c+di) = \operatorname{Im}((a+bi)(\cos(\arg(\lambda)) + i\sin(\arg(\lambda)))) = b\cos(\arg(\lambda)) + a \sin(\arg(\lambda)).\]
Note then that 
\[
c + \frac{bd}{a} = (a+ \frac{b^2}{a}) \cos(\arg(\lambda)) \text{ and } d-\frac{bc}{a}  = (a + \frac{b^2}{a}) \sin(\arg(\lambda)),
\]
but since $a,b,c,d$ are rational we have $\cos(\arg(\lambda))$ and $\sin(\arg(\lambda))$ rational.
Recall, by \cref{thm:nivens}, the only point $\cos(\arg(\lambda))$ and $\arg(\lambda)$ are rational is $\arg(\lambda)$ multiple of $30$ (but not 60) or $90$ and the only point $\sin(\arg(\lambda))$ and $\arg(\lambda)$ are rational is  $\arg(\lambda)$ multiple of $60$ or $90$. Thus $\arg(\lambda)$ is a multiple of $90$.

In this case there is $\textit{no rounding}$ whatsoever. Indeed in this case the final coordinate, $x_\dimension$, is periodic after the first rounding step, with period at most $4$. If it starts at $a+bi$, it goes through (at most) $-b+ai,-a+bi,b-ai$ before returning back to $a+bi$. Then we show the penultimate coordinate, $x_{\dimension -1}$, grows from some point on giving a stopping criterion (either $\vecit{x}{i} = y$ at some point, or $\vecitcomp{x}{i}{\dimension-1} > y_{\dimension-1}$and never comes back). The initial point is invariant-under-rounding (i.e. when $x = [x])$, it is rotated by 90 degrees to another invariant-under-rounding point and then adds a point from the previous component (which is already invariant-under-rounding), resulting in an invariant-under-rounding point. Therefore we can use standard techniques to show that $\vecitcomp{x}{n}{\dimension-1} = \lambda \vecitcomp{x}{n-1}{\dimension-1} + \vecitcomp{x}{n-1}{\dimension}$ must grow;  To see this note that $\vecitcomp{x}{n}{\dimension-1} = \lambda^n \vecitcomp{x}{0}{\dimension-1} + n \lambda^{n-1}\vecitcomp{x}{0}{\dimension}$, which diverges as $n\to\infty$. Thus the analysis of components $1\dots \dimension -2$ is not necessary.

To see that this algorithm needs at most exponential space, we use a similar argument as in~\cref{lemma:expspace}.
First, we observe that the value in dimension $\dimension$ never increases.
Hence, an upper bound for the value in this dimension is $U_d = \abs{\vecitcomp{x}{0}{\dimension}}$.
This implies that we never exit the $[U_d]$-ball, and hence, after at most $T_d = \abs{[U_d]} \leq (2 U_d / g)^d$ steps we can conclude whether $\vecitcomp{x}{0}{\dimension}$ becomes periodic at $0$, or some other modulus.
In the latter case we must be in \textsf{Case 3}, and we can conclude that dimension $\dimension{-}1$ diverges.
Hence it is enough to simulate the system inside the $[\target_{\dimension{-}1}]$-ball, which is of single exponential size.

In the first case, we proceed to dimension $\dimension {-}1$, to which the same analysis applies, as now dimension $\dimension$ is at modulus 0 and does not influence the dynamics any more.
Dimension $\dimension{-}1$ may have grown to at most $U_{\dimension{-}1} = \abs{\vecitcomp{x}{0}{\dimension{-}1}} + T_d \cdot U_d$.

To simplify the calculations, we want to assume that $T_{\ck{-}1} > T_{\ck}$ and hence set: $T_{\ck{-}1} = \abs{[U_{\ck{-}1}]}  + T_{\ck} \leq (2 U_{\ck{-}1} / g)^d + T_{\ck}$.
This gives us $\sum_{j = \ck}^d T_{j} \leq \dimension \cdot T_{\ck}$.
We will use $\dimension \cdot T_{k}$ as an overestimate for the number of steps that were taken before reaching dimension $\ck{-}1$.
In order not having to distinguish the different initial values, we overestimate by assuming value $i_s = \sum_{j = 0}^\dimension \abs{\vecitcomp{x}{0}{j}}$ in every dimension.
Overall, this leads us to the following equations for $U_{\ck}$ and $T_{\ck}$:
  \begin{align}
     U_\dimension &= i_{s}, \qquad \quad T_\dimension = [U_{\dimension}] \leq (2U_{\dimension} / g)^d\\
     T_{\ck{-}1} &= [U_{\ck{-}1}] + T_{\ck} \leq \left(\frac{2 U_{\ck{-}1}}{g}\right)^d + T_{\ck}\\
    U_{\ck{-}1} &= i_s + \dimension \cdot T_{\ck} \cdot U_{\ck} \label{step:3} \\
    &\leq i_s + \dimension \cdot \left(\left(\frac{2 U_{\ck}}{g}\right)^\dimension + T_{\ck+1} \right) \cdot U_{\ck} \\
    &\leq \underbrace{i_s \cdot \dimension \cdot \left(2/g\right)^\dimension \cdot 2}_{F}(U_{\ck})^{d+1} \label{step:5}
  \end{align}
  Step~\ref{step:5} uses that $T_{\ck+1} < U_{\ck}$ (see Step~\ref{step:3}) and $d > 0$.
  Given the input, $F$ is fixed and single exponential.
  It follows that:
  \[U_{\dimension-j} \leq \left(F\cdot i_s\right)^{\left(d+1\right)^j}\]
  As $F \cdot i_s$ is single exponential, it follows that $U_0$ is at most double exponential in the input and hence expressible in single exponential space.
\qedhere
\end{case}
\end{proof}
 
\end{document}